\providecommand{\phantomsection}{}
\newcommand{\mylabel}[2]{\raisebox{.7\normalbaselineskip}{\phantomsection}#1%
  \def\@currentlabel{#1}\textlabel{#2}}
\newcommand{\eg}{\emph{e.g.}\xspace}
\newcommand{\ie}{\emph{i.e.}\xspace}
\newcommand{\viz}{\emph{viz.}\xspace}
\newcommand{\etal}{\emph{et al.}\xspace}
\newcommand{\rel}[1]{\ensuremath{\mathrel{#1}}}
\newcommand{\sconf}[1]{\ensuremath{[\,#1\,]}}
\newcommand{\dconf}[1]{\ensuremath{\langle\, #1\,\rangle}}
\newcommand{\conf}[1]{\ensuremath{(#1)}}
\newcommand{\bisim}{\ensuremath{\leftrightarroweq}}
\newcommand{\bbbisim}{\ensuremath{\leftrightarroweq_b}}
\newcommand{\dbisim}{\ensuremath{\leftrightarroweq_d}}
\newcommand{\ebisim}{\ensuremath{\leftrightarroweq_e}}
\newcommand{\gbbbisim}{\ensuremath{\leftrightarroweq_{(x,y)}}}
\newcommand{\gbbbisimargs}[1]{\ensuremath{\leftrightarroweq_{(#1)}}}
\newcommand{\gbbsim}{\ensuremath{\leq_{(x,y)}}}
\newcommand{\gbbsimeq}{\ensuremath{\sim_{(x,y)}}}
\newcommand{\gbbedbisim}{\ensuremath{\leftrightarroweq_{(x,y)}^{ed}}}
\newcommand{\gbbedbisimargs}[1]{\ensuremath{\leftrightarroweq_{(#1)}^{ed}}}
\newcommand{\wbisim}{\ensuremath{\leftrightarroweq_w}}
\newcommand{\bbedbisim}{\ensuremath{\leftrightarroweq_b^{ed}}}
\newcommand{\sbg}{\ensuremath{\equiv_s}}
\newcommand{\bbg}{\ensuremath{\equiv_b}}
\newcommand{\lbbg}{\ensuremath{\equiv_{lb}}}
\newcommand{\gbbg}{\ensuremath{\equiv}}
\newcommand{\gxybbg}{\ensuremath{\equiv_{E(x,y)}}}
\newcommand{\bbedg}{\ensuremath{\equiv_b^{ed}}}
\newcommand{\gdxybbg}{\ensuremath{\equiv_{E(x,y)}^{ed}}}
\newcommand{\chal}{\textsl{Challenge}}
\newcommand{\chaldagger}{\textsl{Challenge}_{\dagger}}
\newcommand{\mov}{\textsl{Match}_{\dagger}}
\newcommand{\pos}{\textsl{Position}}
\newcommand{\rew}{\textsl{Reward}}
\newcommand{\spoiler}{\ensuremath{\textsc{Spoiler}}\xspace}
\newcommand{\duplicator}{\ensuremath{\textsc{Duplicator}}\xspace}
\newcommand{\struct}[1]{\langle #1 \rangle}
\newcommand{\pijl}[1]{\xrightarrow{\smash{#1}}}
\newcommand{\taustar}{\twoheadrightarrow}
\newcommand{\taustarm}{\twoheadleftarrow}
\newcommand{\tauplus}{\twoheadrightarrow^+}
\renewcommand{\sb}{{sb}\xspace}
\newcommand{\lbb}{{lbb}\xspace}
\newcommand{\bb}{{bb}\xspace}
\newcommand{\gb}{{gb}\xspace}
\newcommand{\bbed}{{bbed}\xspace}
\newcommand{\gbed}{{gbed}\xspace}
\newcommand{\drawinitarrow}{\ensuremath{{\bullet}\!\!\rightarrow}}
\keywords{behavioural equivalences, bisimulation games, abstraction, branching bisimulation, divergence}
\theoremstyle{plain} 
\begin{document}

\title[Games for Bisimulations and Abstraction]{Games for Bisimulations and Abstraction}

\author[D.~de~Frutos~Escrig]{David de Frutos Escrig}
\address{Dpto. Sistemas Inform{\'a}ticos y Computaci{\'o}n -
Facultad CC. Matem{\'a}ticas \\
Universidad Complutense de Madrid,
Spain}
\email{defrutos@sip.ucm.es}

\author[J.J.A.~Keiren]{Jeroen J.A. Keiren}
\address{Open University in the Netherlands
    \and
    Radboud University, 
    Nijmegen, The Netherlands}
\email{Jeroen.Keiren@ou.nl}

\author[T.A.C.~Willemse]{Tim A.C. Willemse}
\address{Eindhoven University of Technology, Eindhoven, The Netherlands}
\email{T.A.C.Willemse@TUe.nl}

\begin{abstract}
Weak bisimulations are typically used in process algebras where silent steps are used to abstract from internal behaviours. They facilitate relating implementations to specifications. When an implementation fails to conform to its specification, pinpointing the root cause can be challenging. In this paper we provide a generic characterisation of branching-, delay-, $\eta$- and weak-bisimulation as a game between \spoiler and \duplicator, offering an operational understanding of the relations. We show how such games can be used to assist in diagnosing non-conformance between implementation and specification. Moreover, we show how these games can be extended to distinguish divergences.
\end{abstract}

\maketitle

\section{Introduction}\label{sec:introduction}

Abstraction is a powerful, fundamental concept in process theories. It
facilitates reasoning about the conformance between an implementation
and specification of a (software) system, described by a transition
system. Essentially, it allows
one to ignore (\ie, abstract from) implementation details that
are unimportant from the viewpoint of the specification.

There is a wealth of behavioural equivalences (and preorders), each
treating abstraction in slightly different manners~\cite{vanglabbeek_linear_1993}.
Some prototypical equivalences have been incorporated in
contemporary tool sets that implement verification technology for
(dis)proving the correctness of software systems. These equivalences include
branching bisimilarity~\cite{vanglabbeek_branching_1996} and branching
bisimilarity with explicit divergence~\cite{vanglabbeek_branching_2009},
which are both used in tool sets such as CADP~\cite{garavel_cadp_2013},
$\mu$CRL~\cite{blom_mcrl_2001}, and mCRL2~\cite{cranen_overview_2013}, and
weak bisimilarity~\cite{milner_calculus_1980} which is used in, for example, FDR~\cite{fdr}.

The key idea behind these weak behavioural equivalences is that
they abstract from `internal' events (events that are \emph{invisible}
to the outside observer of a system). At the same time, they preserve
essential parts of the branching structure of the transition systems.
The behavioural equivalences differ in the extent to which the
branching structure is preserved.  Allowing abstraction from invisible
actions can make it difficult to explain why a particular pair of
states is not equivalent, as one must somehow capture the loss of
potential future computations in the presence of internal actions.
While (theoretical) tools such as distinguishing formulae can help
to understand why two states are distinguishable, these may not
always be very accessible and, to date, the idea of integrating such formulae
in tool sets seems not to have caught on, as witnessed by their
absence in the four prominent tool sets for analysing labelled transition
systems, \viz 
CADP~\cite{garavel_cadp_2013}, $\mu$CRL~\cite{blom_mcrl_2001}, mCRL2~\cite{cranen_overview_2013}
and FDR~\cite{fdr}; the only tool supporting distinguishing formulae seems
to be the (education-focussed) tool CAAL~\cite{AndersenAEHLOSW15}, which
uses both games and formulae.
We note that distinguishing formulae and games for behavioural equivalences
are complementary. Formulae offer a high-level explanation for the inequivalence
of two systems in terms of their
`capabilities',
whereas the game characterisation allows
for explaining their inequivalence at the level of the structure of the
transition systems.  

\subsection*{Contributions}
We provide game-based views on four weak
behavioural equivalences.  Such a game-based view provides an alternative point of
view on the traditional coinductive definitions. Moreover, we argue,
using some examples, how such games can be used to give an operational
explanation of the inequivalence of states following the ideas
in~\cite{stevens_practical_1998}, thereby remaining close to the
realm of transition systems.
Our first main contribution can be summarised as follows:
\begin{quote}
\normalfont\emph{Contribution~1.} We show that branching-,
$\eta$-, delay-, and weak-bisimilarity can be characterised
by Ehrenfeucht-Fra\"iss\'e games~\cite{thomas_ehrenfeucht_1993}
in Theorem~\ref{thm:correspondence}.
\end{quote}
We do not obtain just a collection of (more or less) similar
game-based characterisations. Instead, we present a generic,
parameterised characterisation that captures all these (closely related)
semantics in a single framework, resulting in
a single definition for all four games. In addition,
this unified presentation enables a generic proof of the correctness
of our game characterisation.

We furthermore study the notion of divergence for all four weak behavioural
equivalences. A divergence
can be understood as a computation that only involves unobservable behaviour.
From the perspective of an observer, a system that diverges can thus appear to be
`stuck'. For that reason, it may be undesirable to abstract from 
divergences. While coinductive definitions of divergence-aware weak behavioural
equivalences treat divergence---in some sense---as a separate issue, we show that
divergence is obtained through a natural modification of our generic
game-based characterisation of these equivalences.
Our second main contribution is therefore as follows:
\begin{quote}
\emph{Contribution~2}. We generalise 
branching bisimilarity with explicit divergence~\cite{vanglabbeek_branching_2009} 
to branching-, $\eta$-, delay-, and weak-bisimilarity with explicit
divergence and we show that all four notions can be characterised
by Ehrenfeucht-Fra\"iss\'e games in Theorem~\ref{th:bbbedbisim_is_bbedg}.
\end{quote}
This paper is an extended and enhanced version
of~\cite{branching_games_2016}, where we defined 
games for branching bisimilarity, and branching
bisimilarity with explicit divergence and proved that these games
characterise branching bisimilarity and branching bisimilarity with
explicit divergence, respectively.
A major difference between the games of the current paper and the game-based characterisations of~\cite{branching_games_2016} is
that in \emph{ibid.}\ we employed the stuttering 
condition in the definition of branching bisimulation, whereas in this paper we need to
resort to a slightly different mechanism for dealing with the four weak
behavioural equivalences in a unified way. The stuttering condition underlies
branching bisimilarity and increases its local coinductive 
character, thus allowing to arrive at a `simple' game.  In Theorem~\ref{thm:branching_special_case},
we nevertheless show how the branching
bisimulation game of~\cite{branching_games_2016} can be recovered from our
generic game. The proof of this theorem exposes the close connection between
the two games.

\subsection*{Related work}
The idea of using two-player games for bisimulation originates with Stirling, who described \emph{strong bisimulation} games \cite{stirling_bisimulation_1999}. 
Yin \etal recently presented a branching bisimulation game, in the context of normed
process algebra \cite{yin_branching_2014}, but their game uses
moves that consist of sequences of silent steps, rather than single steps.
As argued convincingly by Namjoshi~\cite{namjoshi_simple_1997}, local
reasoning using single steps often leads to simpler arguments. 

The two-player game for divergence-blind stuttering bisimilarity~\cite{denicola_three_1995} (a relation for
Kripke structures that in essence is the same as branching bisimilarity),
provided by Bulychev \etal in~\cite{bulychev_computing_2007} comes closer
to our work on games for branching bisimilarity. However, their game-based definition
is sound only
for transition systems that are essentially free of divergences, so that in 
order to deal with transition systems containing divergences they need an
additional step that precomputes and eliminates
these divergences. Such a preprocessing step is a bit artificial, and makes
it hard to present the user with proper diagnostics, as it fundamentally modifies 
the structure of the transition system. Moreover, infinite state
transition systems cannot be dealt with this way.
As far as we are aware, ours is the first work that tightly
integrates dealing with divergences in a game-based characterisation of a
behavioural equivalence.

In a broader context, the type of games we study in this paper have been
used for characterising different types of relations in settings other
than labelled transition systems.  For instance,
in~\cite{Hutagalung16}, simulation games on B\"uchi automata are
defined with the purpose of approximating the inclusion of trace
closures of languages accepted by finite-state automata.  Similarly,
in~\cite{Etessami:01}, several algorithms based on games characterising
a variety of simulation relations on the state space of a B\"uchi
automaton are studied. In~\cite{Fritz06}, a delayed simulation
game for parity games is defined and investigated, and in~\cite{CKW:17},
the branching bisimulation games we studied in~\cite{branching_games_2016}
are lifted to the setting of parity games.

There is a large number of behavioural equivalences in the context of abstraction, each having its own applications and properties. An extensive overview of these was given by van Glabbeek~\cite{vanglabbeek_linear_1993}. In practice, behavioural equivalences are used that can be computed efficiently, yet that still allow to equate a large number of systems. Weak bisimilarity~\cite{milner_calculus_1980} and branching bisimilarity~\cite{vanglabbeek_branching_1996} are the dominant equivalences in use. The notions of $\eta$-bisimilarity~\cite{baeten_another_1987} and delay bisimilarity~\cite{milner_modal_1981} are between weak- and branching bisimilarity in terms of distinguishing power.

Divergence has been studied as a concept orthogonal to bisimulation.  Different notions of divergence have been defined in the literature, differing in their abilities to, for example, distinguish livelocks (infinite, internal computations) from deadlock. Milner, \eg studied a notion of observation equivalence with divergence \cite{milner_modal_1981}. In the branching bisimulation setting, the notions of branching bisimilarity with explicit divergence \cite{vanglabbeek_branching_1996} and divergence sensitive branching bisimilarity \cite{denicola_three_1995} have been studied. For the first, different, yet equivalent characterisations have been explored by van Glabbeek \etal \cite{vanglabbeek_branching_2009}.\footnote{Note that the literature uses the same names for different notions of sensitivity to divergence, but also uses different names for the same notion of divergence, so when studying these issues one needs to tread carefully.}

\subsection*{Outline of the paper}
Labelled transition systems, and strong- and branching bisimilarity are introduced in Section~\ref{sec:preliminaries}.
Subsequently, we introduce bisimulation games, using Stirling's games as an example, in Section~\ref{sec:games}. In the same
section, we present Bulychev's version of branching bisimulation games, and explain its shortcomings, and we introduce the branching bisimulation games from~\cite{branching_games_2016}. Next, in Section~\ref{sec:generic}, we introduce the coinductive definitions of weak-, delay-, and $\eta$-bisimilarity, and present a generic bisimulation definition that covers all three bisimulations as well as branching bisimulation.
In Section~\ref{sec:generic_games} we introduce a generic bisimulation game that is parameterised such that it covers all four bisimulations. We discuss the concepts of divergence and simulation, and the changes to our generic bisimulation game required to capture these in Section~\ref{sec:extensions}. We illustrate our game using a small example in Section~\ref{sec:smallapplication}. Finally, we conclude in Section~\ref{sec:conclusions}.

\section{Preliminaries}\label{sec:preliminaries}

We are concerned here with relations on labelled transition systems
that include both {\em observable} transitions, and {\em internal} transitions
labelled by the special action $\tau$.
\begin{defi}
A \emph{Labelled Transition System} (LTS) is a structure $L = \struct{S,A,\to}$ where:
\begin{itemize}
\item $S$ is a set of states, 
\item $A$ is a set of actions containing a special action $\tau$,
\item ${\to} \subseteq S \times A \times S$ is the transition relation.
\end{itemize}
\end{defi}
As usual, we write $s \pijl{a} t$ to stand for $(s,a,t) \in\to$. The
reflexive-transitive closure of the $\pijl{\tau}$ relation is denoted by
$\taustar$ and the transitive closure of the $\pijl{\tau}$ relation is
denoted $\tauplus$.  Given a relation $R \subseteq S \times S$ on states, we simply write 
$s\rel{R} t$ to represent $(s,t) \in R$. 
We say that a labelled transition system is \emph{divergent} if it admits an
infinite sequence $s \pijl{\tau} s_1 \pijl{\tau} s_2 \cdots$ from some state
$s \in S$, and we say it is \emph{non-divergent} if it contains no such sequence.

Strong bisimulation, due to Park \cite{park_concurrency_1981}, is arguably the finest meaningful
behavioural equivalence defined for labelled transition systems. 
\begin{defi}[{\cite{park_concurrency_1981}}]
\label{def:strong_bisimulation}
A symmetric relation $R \subseteq S \times S$ is said to be a \emph{strong bisimulation} whenever for all $s \rel{R} t$, if $s \pijl{a} s'$ then there exists a state $t'$ such that $t \pijl{a} t'$ and $s' \rel{R} t'$. We write $s \bisim t$ and say that $s$ and $t$ are strongly bisimilar if and only if there is a strong bisimulation relation $R$ such that $s \rel{R} t$.
\end{defi}

In this paper we are mainly concerned with weaker bisimulations, that allow to abstract from \emph{internal} transitions. The finest such
bisimulation is \emph{branching bisimulation}, that was introduced by van Glabbeek and Weijland in~\cite{vanglabbeek_branching_1996}. 

\begin{defi}[{\cite{vanglabbeek_branching_1996}}]
\label{def:branching_bisimulation}
A symmetric relation $R \subseteq S \times S$ is said
to be a \emph{branching bisimulation} whenever $s \rel{R} t$ and $s \pijl{a} s'$ imply either:
\begin{itemize}
  \item $a = \tau$ and $s' \rel{R} t$, or
  \item there exist states $t', t_1$ such $t \taustar t_1 \pijl{a} t'$, $s\rel{R}t_1$ and $s'\rel{R}t'$;
\end{itemize}
We write $s \bbbisim t$ and say that $s$ and $t$ are branching bisimilar, 
iff there is a branching bisimulation $R$ such that $s \rel{R} t$.
Typically we simply write $\bbbisim$ to denote {\em branching bisimilarity}.
\end{defi}

Basten \cite{basten_branching_1996}, finally showed that branching bisimilarity is indeed an equivalence relation.
\begin{thm}[{\cite[Corollary 13]{basten_branching_1996}}]
\label{th:bbbisim_is_equivalence}
Branching bisimilarity, $\bbbisim$, is an equivalence relation.
\end{thm}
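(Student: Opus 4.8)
The plan is to establish reflexivity, symmetry, and transitivity of $\bbbisim$ separately, viewing $\bbbisim$ as the union of all branching bisimulation relations. \emph{Reflexivity and symmetry} are immediate: the identity relation $\{(s,s) \mid s \in S\}$ is a branching bisimulation, since any step $s \pijl{a} s'$ is matched by the path $s \taustar s \pijl{a} s'$, so $s \bbbisim s$ for every $s$; and since every branching bisimulation is symmetric by definition, the union $\bbbisim$ is symmetric. In fact $\bbbisim$ is itself a branching bisimulation, as the transfer condition of Definition~\ref{def:branching_bisimulation} is preserved under arbitrary unions.

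\emph{Transitivity} is the hard part, and the obstacle is the classical one: the relational composition $R_1 \circ R_2$ of two branching bisimulations need not itself be a branching bisimulation. When $s \rel{R_1} u \rel{R_2} t$ and $s \pijl{\tau} s'$ is matched on the $R_1$-side by keeping $u$ in place with $s' \rel{R_1} u$, the pair $u \rel{R_2} t$ gives no leverage to move $t$, and simply staying at $t$ destroys the invariant linking the intermediate states. The standard remedy, due to van Glabbeek and Weijland~\cite{vanglabbeek_branching_1996} and repaired by Basten~\cite{basten_branching_1996}, is to route through \emph{semi-branching bisimulation}: a symmetric relation $R$ such that $s \rel{R} t$ and $s \pijl{a} s'$ imply either that $a = \tau$ and there is $t'$ with $t \taustar t'$, $s \rel{R} t'$ and $s' \rel{R} t'$, or that there are $t_1, t'$ with $t \taustar t_1 \pijl{a} t'$, $s \rel{R} t_1$ and $s' \rel{R} t'$. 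The argument then rests on three facts: (i) every branching bisimulation is a semi-branching bisimulation, so $\bbbisim$ is contained in semi-branching bisimilarity; (ii) semi-branching bisimilarity is itself a \emph{branching} bisimulation---the \emph{stuttering lemma}---so the two relations coincide; and (iii) the relational composition of two semi-branching bisimulations is again a semi-branching bisimulation, which goes through precisely because the weakened $\tau$-clause allows the intermediate component to perform matching $\tau$-steps. Combining these: given $s \bbbisim u$ and $u \bbbisim t$, pick branching bisimulations $R_1, R_2$ with $s \rel{R_1} u$ and $u \rel{R_2} t$; by (i) they are semi-branching, by (iii) $R_1 \circ R_2$ is a semi-branching bisimulation relating $s$ and $t$, and by (ii) $R_1 \circ R_2 \subseteq \bbbisim$, whence $s \bbbisim t$.

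I expect the main obstacle to be fact (ii), the stuttering lemma: one must show that allowing the matched state to drift along an entire $\tau$-path yields no new identifications, which calls for an induction on the length of that path, using that every intermediate state on it is related to the source state so that a genuine single-step branching match can be recovered. Fact (iii) is the secondary delicate point, since one must splice a $\tau$-path through the intermediate component while preserving all relational invariants; it becomes routine only once the definition has been weakened in exactly the manner above. Together with reflexivity and symmetry, transitivity then yields that $\bbbisim$ is an equivalence relation.
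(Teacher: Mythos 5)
Your proposal is correct and follows exactly the route of the source that the paper cites for this theorem (Basten, ``Branching bisimilarity is an equivalence indeed!'', Corollary 13): the paper offers no proof of its own, and Basten's argument is precisely the detour you describe through semi-branching bisimulation, the coincidence of semi-branching and branching bisimilarity via the stuttering lemma, and closure of semi-branching bisimulations under relational composition. The only cosmetic point is that $R_1 \circ R_2$ need not itself be symmetric, so one formally verifies the transfer condition for $(R_1 \circ R_2) \cup (R_2 \circ R_1)$, which follows by the same argument applied in the other direction.
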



Branching bisimilarity has the {\em stuttering 
property}, see~\cite[Lemma~2.5]{vanglabbeek_branching_1996}. This means that the condition 
$s\rel{R}t''$ imposed 
 on the weak transition $t \taustar t''$ at Definition~\ref{def:branching_bisimulation} can be
strengthened, so that all the intermediate states $t''_i$ along the weak transition $t \taustar t''$,
and not just the final state $t''$, will
have to satisfy the condition $s\rel{R}t''_i$. This fact is quite useful when studying other 
properties of branching bisimilarity, and in particular plays a central role when developing 
the characterisation of branching bisimilarity by means of a branching bisimulation game.

\begin{defi}[{\cite[Lemma~2.5]{vanglabbeek_branching_1996}}, \cite{sangiorgi_introduction_2012,vanglabbeek_branching_2009}]
\label{def:stuttering property}
A relation $R$ has the {\em stuttering property} if, 
whenever $t_0 \pijl{\tau} t_1  \cdots \pijl{\tau} t_k$ with 
 $t_0 \rel{R} t_k$, then $t_i \rel{R} t_j$, for all $0 \leq i,j \leq k$.
\end{defi}


\section{Games for Strong and Branching Bisimulation}\label{sec:games}
In the previous section we have reiterated the coinductive definitions of strong- and branching bisimulation. There are several alternatives for defining behavioural equivalences. Relations can, for example be defined using a fixed point characterisation, or using a two-player game. In this paper we are concerned with such two-player games. We first give some background on the games, and recall strong bisimulation games, before we proceed to the branching bisimulation games that we introduced in \cite{branching_games_2016}.

\subsection{Games}
The games we consider in this paper are, essentially, Ehrenfeucht-Fra{\"\i}ss{\'e} games, whose use in computer science has
been reviewed in \cite{thomas_ehrenfeucht_1993}.
They are instances of
two-player infinite-duration games with $\omega$-regular winning
conditions, played on game arenas that can be represented by graphs. In these games
each vertex is assigned to one of two players, here called \spoiler and \duplicator. The players
move a token over the vertices as follows. The player that `owns' the vertex where the token is pushes
it along an edge to an adjacent vertex, and this continues as long as possible, possibly forever.
The winner of the play is decided from the resulting sequence of vertices visited by the token,
depending on the predetermined winning criterion. We say that a player \emph{can win from a 
given vertex} if she has a strategy  such that any play with the token initially at that 
vertex will be won by her.
The games that we consider here are \emph{memoryless} and \emph{determined}: every
vertex is won by (exactly) one player, and the winning player has 
a \emph{positional} winning strategy, meaning that she can decide her winning 
moves based only on the vertex where the token
currently resides, without inspecting the previous moves of the play. A strategy
of a player
can thus be given by a function that maps each vertex owned by this player to
one of its adjacent vertices. A play
is \emph{consistent} with a strategy of a given player whenever on this play, the
successor of each vertex owned by this player is given by the strategy.
A strategy of a player induces a
\emph{solitaire game} by restricting the edges in the graph that 
emanate from this player's set of vertices to those used by her
strategy.
Note that a play that is consistent both with a strategy of \spoiler and a strategy
of \duplicator is a unique path in the graph.
Winning strategies can be efficiently
computed while solving the game. We refer to~\cite{gradel_automata_2002}
for a more in-depth treatment of the underlying theory.

\subsection{Stirling Games for Strong Bisimulation}
The most well-known, and simplest game-based characterisation of a behavioural
equivalence is Stirling's notion of a \emph{strong bisimulation game} \cite{stirling_bisimulation_1999}.

The game is played on pairs of states $(s,t)$ in a labelled transition system, where \duplicator tries
to prove that states $s$ and $t$ are strongly bisimilar, whereas \spoiler tries to disprove this.
To achieve this, \spoiler plays first, and challenges \duplicator with a move from either $s$ or $t$, and
\duplicator is required to match this move. If play continues indefinitely, this means \duplicator is always
able to match \spoiler's challenges, and the states are equivalent. If not, this means \spoiler found a way
to distinguish the states.
The game is formally defined as follows.

\begin{defi}
\label{def:strong_bisimulation_game}
A \emph{strong bisimulation (sb) game} on an LTS $L$ is played by players \spoiler
and \duplicator on an arena of \spoiler-owned configurations $\sconf{(s,t)}$ and \duplicator-owned
configurations $\dconf{(s,t), c}$, where $(s,t) \in \pos$ and $c \in \chal$, and $\pos = S \times S$ is the set of
\emph{positions} and $\chal = (A \times S)$ the set of \emph{pending challenges}.  
\begin{itemize}
  \item \spoiler moves from a configuration $\sconf{(s,t)}$ by:
  \begin{enumerate}
    \item selecting $s \pijl{a} s'$ and moving to $\dconf{(s, t), (a, s')}$, or
    \item selecting $t \pijl{a} t'$ and moving to $\dconf{(t, s), (a, t')}$.
  \end{enumerate}
  \item \duplicator responds from a configuration $\dconf{(u, v), (a, u')}$ by playing
  $v \pijl{a} v'$ and continuing in configuration $\sconf{(u', v')}$.
\end{itemize}
Finite plays are won by \duplicator if and only if \spoiler gets stuck. All infinite plays are won by \duplicator.
Full plays of the game start in a configuration $\sconf{(s,t)}$; 
we say that \duplicator wins the \sb-game for a position $(s,t)$, if the configuration $\sconf{(s,t)}$
is won by her; in this case, we write $s \sbg t$. Otherwise, we say that \spoiler wins that game.
\end{defi}
We illustrate the above game using a small example.
\begin{exa} Consider the four-state transition system depicted below (left). 
\begin{center}
\begin{minipage}{.40\textwidth}
  \centering
  \begin{tikzpicture}
    [node distance=55pt,inner sep = 1pt,minimum size=10pt]

     \tikzstyle{state}=[circle, draw=none,node distance=35pt]
     \tikzstyle{transition}=[->,>=stealth']

      \node[state] (A) {\tiny A};
      \node[state] [right of=A] (B) {\tiny B};
      \node[state] [below of=A] (C) {\tiny C};
      \node[state] [below of=B] (D) {\tiny D};
      \draw [->] (A) edge[bend left] node[right] {\scriptsize $b$} (C);
      \draw [->] (C) edge[bend left] node[left] {\scriptsize $b$} (A);
      \draw [->] (A) edge[loop left] node[left] {\scriptsize $a$} (A);
      \draw [->] (C) edge[loop left] node[left] {\scriptsize $a$} (C);

      \draw [->] (B) edge node[right] {\scriptsize $b$} (D);
      \draw [->] (B) edge[loop right] node[right] {\scriptsize $a$} (B);
      \draw [->] (D) edge[loop right] node[right] {\scriptsize $a$} (D);
      \draw [->] (B) edge node[above] {\scriptsize $b$} (A);

  \end{tikzpicture}
\end{minipage}
\begin{minipage}{.40\textwidth}
\vspace{5pt}
\centering
  \begin{tikzpicture}
    [node distance=25,inner sep = 1pt,minimum size=5pt,initial text={}]

     \tikzstyle{state}=[rectangle, draw=none]
     \tikzstyle{transition}=[->,>=stealth']

      \node[state,fill=gray!20] (sAB) {\tiny $\sconf{(A,B)}$};
      \node[state,right of=sAB,xshift=40pt] (BAbD) {\tiny $\dconf{(B,A),(b,D)}$};
      \node[state,below of=BAbD,fill=gray!20] (sDC) {\tiny $\sconf{(D,C)}$};
      \node[state,left of=sDC,xshift=-40pt] (CDbA) {\tiny $\dconf{(C,D),(b,A)}$};

      \draw [->] (sAB) edge (BAbD);
      \draw [->] (BAbD) edge (sDC);
      \draw [->] (sDC) edge (CDbA);

  \end{tikzpicture}
\vspace{5pt}
\end{minipage}
\end{center}
Observe that states $A$ and $B$ are not strongly bisimilar. This can
be seen by the (solitaire) game, depicted right, in which \spoiler plays
her winning strategy, first challenging \duplicator to match a $b$-transition,
and, when she does so by moving to $C$, switch positions and challenging her 
to match another $b$-transition (which \duplicator cannot).

States $A$ and $C$, on the other hand, are strongly bisimilar. While
it is easy to check that, indeed, the relation $R =\{(A,C),(C,A)\}$
is a strong bisimulation relation, the fact that both states are
strongly bisimilar also follows from the solitaire game depicted below, in
which \duplicator plays her winning strategy.
\begin{center}
\begin{minipage}{.60\textwidth}
\vspace{5pt}
\centering
  \begin{tikzpicture}
    [node distance=25,inner sep = 1pt,minimum size=5pt,initial text={}]

     \tikzstyle{state}=[rectangle, draw=none]
     \tikzstyle{transition}=[->,>=stealth']

      \node[state,fill=gray!20] (sAC) {\tiny $\sconf{(A,C)}$};
      \node[state,below left of=sAC,xshift=-80pt,yshift=-10pt] (CAbA) {\tiny $\dconf{(C,A),(b,A)}$};
      \node[state,below left of=sAC,xshift=-15pt,yshift=-20pt] (ACaA) {\tiny $\dconf{(A,C),(a,A)}$};
      \node[state,below right of=sAC,xshift=80pt,yshift=-30pt] (ACbC) {\tiny $\dconf{(A,C),(b,C)}$};
      \node[state,below right of=sAC,xshift=15pt,yshift=-20pt] (CAaC) {\tiny $\dconf{(C,A),(a,C)}$};
      \node[state,below left of=CAaC,xshift=-10pt,fill=gray!20,yshift=-20pt] (sCA) {\tiny $\sconf{(C,A)}$};

      \draw [->] (ACaA) edge[bend left] (sAC.south west);
      \draw [->] (sAC) edge[bend left] (ACaA);
      \draw [->] (sAC) edge (ACbC);

      \draw [->] (ACbC) edge[bend left] (sCA);
      \draw [->] (sCA) edge (ACbC);

      \draw [->] (CAaC) edge[bend left] (sCA);
      \draw [->] (CAbA) edge[bend left] (sAC.north west);
      \draw (sAC.west) edge[->] (CAbA);

      \draw [->] (sCA) edge[bend left] (CAaC);
      \draw [->] (sCA) edge (CAbA);

      \draw [->] (sAC) edge (CAaC);
      \draw [->] (sCA) edge (ACaA);

  \end{tikzpicture}
\vspace{5pt}
\end{minipage}
\end{center}
\end{exa}

\subsection{Branching Bisimulation Games for Non-Divergent LTSs}
The first attempt to define two-player games for weak bisimulations was made by Bulychev \etal \cite{bulychev_computing_2007}.
Their work defines a game for \emph{(divergence blind) stuttering equivalence}, which is the counterpart of branching bisimulation
for Kripke structures (which have state labels instead of edge labels). The following definition is a direct translation of 
the one from \cite{bulychev_computing_2007} to labelled transition systems. In fact, this definition weakens the strong bisimulation game
in two ways: first, it allows a $\tau$-transition to be mimicked by \duplicator by not moving, and second, when \spoiler challenges
with an $a$-transition, \duplicator can reply by making a $\tau$ move.
These are the first and third move for \duplicator in the following definition.
This game is limited to labelled transition systems in which there
are no divergences (\ie, no infinite sequences of $\tau$-transitions are possible).

\begin{defi}
\label{def:bulychev_game}
A \emph{limited branching bisimulation (\lbb) game} on an LTS $L$ is played by players \spoiler and \duplicator
on an arena of \spoiler-owned configurations $\sconf{(s,t)}$ and \duplicator-owned
configurations $\dconf{(s,t), c}$, where $(s,t) \in \pos$ and $c \in \chal$, and $\pos$ and $\chal$ are as before.
\begin{itemize}
  \item \spoiler moves from a configuration $\sconf{(s,t)}$ by:
  \begin{enumerate}
    \item selecting $s \pijl{a} s'$ and moving to $\dconf{(s, t), (a, s')}$, or
    \item selecting $t \pijl{a} t'$ and moving to $\dconf{(t, s), (a, t')}$.
  \end{enumerate}
  \item \duplicator responds from a configuration $\dconf{(u, v), (a, u')}$ by:
  \begin{enumerate}
    \item not moving if $a = \tau$ and continuing in configuration $\sconf{(u', v)}$, or
    \item playing $v \pijl{a} v'$ if available, and continuing in configuration $\sconf{(u', v')}$, or
    \item moving $v \pijl{\tau} v'$ if possible, and continuing in configuration $\sconf{(u, v')}$.
  \end{enumerate}
\end{itemize}
Finite plays are won by \duplicator if and only if \spoiler gets stuck. All infinite plays are won by \duplicator.
We say that a configuration is won by a player when she has a strategy that wins all plays starting in it. Full plays
of the game start in a configuration $\sconf{(s,t)}$; we say that \duplicator wins the \lbb-game for a position $(s,t)$
if the configuration $\sconf{(s,t)}$ is won by her; in this case we write $s \lbbg t$. Otherwise we say that \spoiler wins that game.
\end{defi}

By adapting Bulychev \etal's proof of~\cite[Theorem 1]{bulychev_computing_2007} to LTSs and branching bisimulation, we can show that this game-based definition
characterises branching bisimulation, provided the labelled transition system does not have divergences, \ie\ is void of infinite $\tau$-paths. 
\begin{thm}
Let $L = \struct{S,A,\to}$ be a non-divergent LTS. We have for all $s,t \in S$ that $s \bbbisim t$ if and only if $s \lbbg t$.
\end{thm}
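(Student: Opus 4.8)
The plan is to prove the two implications separately, in each case turning a witnessing structure (a branching bisimulation relation, or a winning strategy for \duplicator) into the other kind of witness, and to use non-divergence only where it is genuinely needed, namely to rule out \duplicator surviving by stalling forever with move~(1) or move~(3).

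\medskip

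First I would prove the easy direction: if $s \bbbisim t$, then $s \lbbg t$. Let $R$ be a branching bisimulation with $s \rel{R} t$. I would describe a strategy for \duplicator that maintains the invariant that whenever the token is in an \spoiler-owned configuration $\sconf{(u,v)}$, we have $u \rel{R} v$. Suppose \spoiler moves from $\sconf{(u,v)}$ by selecting $u \pijl{a} u'$, landing in $\dconf{(u,v),(a,u')}$ (the case where \spoiler plays on the right component is symmetric, using symmetry of $R$). By the definition of branching bisimulation applied to $u \rel{R} v$ and $u \pijl{a} u'$, either $a = \tau$ and $u' \rel{R} v$, in which case \duplicator uses move~(1) to reach $\sconf{(u',v)}$ with the invariant preserved; or there are $v_1, v'$ with $v \taustar v_1 \pijl{a} v'$, $u \rel{R} v_1$ and $u' \rel{R} v'$. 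In the latter case \duplicator wants to reach $\sconf{(u',v')}$. She does so by repeatedly using move~(3) to traverse the $\tau$-path $v \taustar v_1$ (each such step keeps the first component $u$ fixed and, by the stuttering property of $R$, Definition~\ref{def:stuttering property}, maintains $u \rel{R}$ of the current state), and then using move~(2) on $v_1 \pijl{a} v'$ to reach $\sconf{(u',v')}$, again with the invariant preserved. Since \duplicator can always respond and the invariant forbids \spoiler from ever reaching a configuration where \duplicator is stuck, \spoiler never wins a finite play, and all infinite plays are won by \duplicator; hence \duplicator wins from $\sconf{(s,t)}$, \ie $s \lbbg t$. Note this direction does not use non-divergence.

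\medskip

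For the converse, suppose $s \lbbg t$, witnessed by a positional winning strategy $\sigma$ for \duplicator. I would define $R = \{(u,v) \mid \sconf{(u,v)} \text{ is won by \duplicator}\}$ and show $R$ is a branching bisimulation; then $s \rel{R} t$ gives $s \bbbisim t$. Symmetry of $R$ follows because a configuration $\sconf{(u,v)}$ and its `mirror' $\sconf{(v,u)}$ are interchangeable for \spoiler, who may play on either side. For the transfer condition, take $u \rel{R} v$ and $u \pijl{a} u'$. \spoiler can move to $\dconf{(u,v),(a,u')}$, which must still be won by \duplicator (else $\sconf{(u,v)}$ would be won by \spoiler). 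Consider the solitaire game induced by $\sigma$ from this configuration. I would trace the play: \duplicator's response is a (possibly empty) sequence of move-(3) steps $v \pijl{\tau} v_1 \pijl{\tau} \cdots \pijl{\tau} v_k$, each leaving the configuration $\sconf{(u,\cdot)}$, followed either by a move-(1) step (only possible if $a=\tau$) reaching $\sconf{(u',v_k)}$, or a move-(2) step on $v_k \pijl{a} v_k'$ reaching $\sconf{(u',v_k')}$. Here is where non-divergence enters: the pure $\tau$-traversal cannot continue forever, because an infinite run of move-(3) steps would require an infinite $\tau$-path $v \pijl{\tau} v_1 \pijl{\tau} \cdots$, contradicting non-divergence; so the traversal terminates, either in a move-(1) or a move-(2) step, or in a configuration where \duplicator is stuck --- but \duplicator is never stuck on a winning configuration with a pending $\tau$-move, since move~(1) is always available when $a = \tau$ and \duplicator plays $\sigma$ which is winning. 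Whichever way it terminates, the resulting configuration $\sconf{(u',v')}$ (with $v' = v_k$ or $v' = v_k'$) is won by \duplicator, so $u' \rel{R} v'$; and every intermediate configuration $\sconf{(u,v_i)}$ on the $\tau$-path is won by \duplicator, so $u \rel{R} v_i$, in particular $u \rel{R} v_k$. In the move-(1) case we get $a = \tau$ and $u' \rel{R} v_k$; since $v \taustar v_k$ and branching bisimulation also permits the second clause, but it is cleanest to note $u' \rel{R} v_k$ and $v \taustar v_k \pijl{\tau}$ is not needed --- actually the first clause of Definition~\ref{def:branching_bisimulation} would want $u' \rel{R} v$, which we may not have, so I would instead observe that $v \taustar v_k \pijl{\tau} \cdots$ is not forced and re-examine: in fact when $k \geq 1$ or \duplicator used move~(1) at $v_k$ we can present the match as $v \taustar v_k$, $u \rel{R} v_k$, and then either $u' \rel R v_k$ (move 1, $a=\tau$, so take the second clause with $t_1 = v_k$ is not legal since we need $v_k \pijl{\tau} v'$); the resolution is that the move-(1) case with $k=0$ gives exactly $u' \rel{R} v$, matching clause one, and with $k \geq 0$ followed by move~(2) gives $v \taustar v_k \pijl{a} v_k'$ with $u \rel{R} v_k$ and $u' \rel{R} v_k'$, matching clause two. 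The move-(1) case with $k \geq 1$ and $a = \tau$ also reduces to clause one since $u' \rel{R} v_k$ and $u \rel{R} v$ and $v \taustar v_k$ combined with the stuttering property of $R$ (which we establish from \duplicator's winning configurations) yields $u' \rel{R} v$.

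\medskip

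The main obstacle I anticipate is precisely this last bookkeeping: matching \duplicator's mixed sequence of `stall', `stay', and `$\tau$-advance' responses to the rigid two-clause shape of Definition~\ref{def:branching_bisimulation}, and in particular establishing that the relation $R$ built from winning configurations itself satisfies the stuttering property so that the move-(1)-then-more case collapses correctly. Both the termination of $\tau$-stalling (where non-divergence is essential --- without it \duplicator could win by diverging while not matching a visible action) and the correct classification of the terminal move belong to this step. The rest --- symmetry, the invariant in the forward direction, and the observation that finite plays are lost by \spoiler exactly when \duplicator is never stuck --- is routine.
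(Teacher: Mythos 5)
The paper itself gives no proof of this theorem; it only points to an adaptation of Bulychev \emph{et al.}'s Theorem~1, so your self-contained two-way translation (branching bisimulation $\Rightarrow$ \duplicator strategy, and winning configurations $\Rightarrow$ branching bisimulation) is exactly the argument that adaptation would produce, and its skeleton is sound: the forward direction indeed needs no non-divergence, and non-divergence enters only to bound the run of move-(3) responses in the backward direction. Two remarks on the forward direction: you should take $R$ to be $\bbbisim$ itself (not an arbitrary branching bisimulation), since an arbitrary branching bisimulation need not satisfy the stuttering property that you invoke to keep $u \rel{R} w$ along the intermediate states $w$ of $v \taustar v_1$; and note that after each move-(3) step \spoiler may switch to a different challenge, so \duplicator never actually ``executes the traversal'' --- but your invariant is stated per \spoiler-configuration, so this does no harm.

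The one step that does not go through as written is your terminal case ``move~(1) at $v_k$ with $k \geq 1$ and $a=\tau$''. You try to conclude $u' \rel{R} v$ (clause one) from $u' \rel{R} v_k$, $u \rel{R} v$ and $v \taustar v_k$ via ``the stuttering property of $R$, which we establish from \duplicator's winning configurations''. That is not free: to even apply the stuttering property to the path $v \pijl{\tau} \cdots \pijl{\tau} v_k$ you need $v \rel{R} v_k$, which for your game-derived $R$ requires symmetry \emph{and} transitivity (\ie composing winning strategies), and then another transitivity step to reach $u' \rel{R} v$ --- none of which you have established, and all of which is unnecessary. The clean resolution is to stop aiming for clause one: since $k \geq 1$ you have $v \taustar v_{k-1} \pijl{\tau} v_k$ with $u \rel{R} v_{k-1}$ (the configuration $\sconf{(u,v_{k-1})}$ is won) and $u' \rel{R} v_k$ (the configuration $\sconf{(u',v_k)}$ reached by move~(1) is won), which is precisely clause two of Definition~\ref{def:branching_bisimulation} with $t_1 = v_{k-1}$ and $t' = v_k$. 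With that substitution the case analysis closes and no stuttering or transitivity property of the game-derived relation is needed anywhere in the backward direction.
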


We conclude this section with an example showing that the definition does not characterise
branching bisimilarity for arbitrary LTSs (\ie those that exhibit divergence).
\begin{exa}
Consider the following LTS, in which the states are labelled with their name.
\begin{center}
  \begin{tikzpicture}
    [node distance=30,inner sep = 1pt,minimum size=10pt,initial text={}]

     \tikzstyle{state}=[circle, draw=none]
     \tikzstyle{transition}=[->,>=stealth']

      \node[state, initial above] (s) {\scriptsize $s$};
      \node[state] [left of=s] (t) {\scriptsize $t$};
      \node[state] [right of=s] (u) {\scriptsize $u$};
      \node[state] [right of=u] (v) {\scriptsize $v$};
      \draw [->] (s) edge node[above] {\scriptsize $a$} (t)
                 (s) edge node[above] {\scriptsize $a$} (u)
                 (t) edge[loop left] node[left] {\scriptsize $\tau$} (t)
                 (u) edge node[above] {\scriptsize $a$} (v);
  \end{tikzpicture}
\end{center}
Observe that there is a divergence because of the $\tau$-loop in $t$. Furthermore, $t$ and $u$ are not branching bisimilar since $u \pijl{a} v$ can
never be mimicked from $t$. However, \duplicator can win the \lbb-game starting in $\sconf{(t,u)}$ according to Definition~\ref{def:bulychev_game} (regardless of \spoiler's moves). When starting in $\sconf{(t,u)}$, \spoiler can choose to play on $u$ or on $t$. We distinguish these two cases:
\begin{itemize}
  \item $\sconf{(t,u)} \to \dconf{(u,t),(a,v)}$, then \duplicator can respond with $\dconf{(u,t),(a,v)} \to \sconf{(u,t)}$ by playing $t \pijl{\tau} t$ according to her third move, or
  \item $\sconf{(t,u)} \to \dconf{(t,u),(\tau,t)}$, then \duplicator can respond with $\dconf{(t,u),(\tau,t)} \to \sconf{(t,u)}$ by not moving according to her first move.
\end{itemize}
In both cases, the game ends up in a configuration for which we have shown \duplicator can respond, and the game will continue indefinitely, hence \duplicator wins, even though $t \not  \bbbisim u$.
\end{exa}

\subsection{Branching Bisimulation Games}

Since all vertices on a strongly connected component reachable through $\tau$-transitions are branching
bisimilar, any finite LTS can be preprocessed and turned into a non-divergent
LTS. Such preprocessing
is often part of a state space minimisation algorithm. However, for use cases such as
debugging a specification or an implementation, it is desirable to avoid any preprocessing, and stay as close
as possible to any user-provided specification. Furthermore, for infinite LTSs, such a preprocessing
step is impossible altogether. In \cite{branching_games_2016} we therefore investigated an
alternative game-based characterisation for branching bisimulation. We introduce this alternative in this
section.

First, let us take a closer look at Definition~\ref{def:bulychev_game}. Observe that, when \spoiler challenges \duplicator
by playing, \eg, $s \pijl{a} s'$, \duplicator can get away with not ever playing an $a$-transition, by playing a $\tau$-transition instead.
Intuitively, this means that \duplicator never truly answers to \spoiler's challenge, since this challenge is forgotten after she plays the
$\tau$-transition in her third move. The key idea in our branching bisimulation game is to preserve an unanswered challenge after \duplicator's move.
To facilitate this, we also include the challenge in \spoiler's configurations. Next, we also introduce a reward for \duplicator that she earns whenever
she actually matches \spoiler's challenge. The need for this reward is illustrated by the following example.

\begin{exa}
\label{ex:need_for_challenges}
Consider the first LTS depicted in Figure~\ref{fig:need_for_challenges}.
Observe that $s_0$ and $t_0$ are branching bisimilar. Suppose
\spoiler tries (in vain) to disprove that $s_0$ and $t_0$ are
branching bisimilar and challenges \duplicator by playing $s_0
\pijl{a} c_1$.  \duplicator may respond with an infinite sequence
of $\tau$-steps, moving between $t_0$ and $t_1$, so long as \spoiler
sticks to her challenge.  In this way she would win the play following the
rules in Definition~\ref{def:bulychev_game}, but such \emph{procrastinating} 
behaviour of \duplicator is not rewarded in our game. Instead, \duplicator has 
to eventually move to $c_1$, matching the challenge, if she wants to win the play.  

\begin{figure}[ht]
\centering
\begin{tikzpicture}
\scriptsize
\node (u_)  {$s_0$};
\node[right of=u_,xshift=20pt] (c1_) {$c_1$};
\node[left of=u_,xshift=-20pt] (c2_) { $c_2$};
\node[below of=c1_] (v_) {$t_0$};
\node[below of=c2_] (w_) {$t_1$};

\path[->]
  (u_) edge node[above] {$a$} (c1_) edge node[above] {$b$} (c2_)
  (v_) edge[bend left] node[above] {$\tau$} (w_) edge node [right] {$a$}  (c1_)
  (w_) edge[bend left] node[below] {$\tau$} (v_) edge node [left] {$b$} (c2_)
;

\node (u) [right of=c1_, xshift=60pt]  {$u$};
\node (v1) [below of=u] {$v_1$};
\node (v0) [left of=v1,xshift=-20pt] {$v_0$};
\node (v2) [right of=v1,xshift=20pt] {$v_2$};

\path[->]
  (v0) edge node[above left] {$a$} (u)
  (v1) edge node[left] {$b$} (u)
  (v2) edge node[above right] {$c$} (u)
  (v0) edge node[above] {$\tau$} (v1)
  (v1) edge node[above] {$\tau$} (v2)
  (v2) edge [bend left] node[below] {$\tau$} (v0);
;

\end{tikzpicture}
\caption{LTSs illustrating some consequences and
subtleties of using challenges.}
\label{fig:need_for_challenges}
\end{figure}
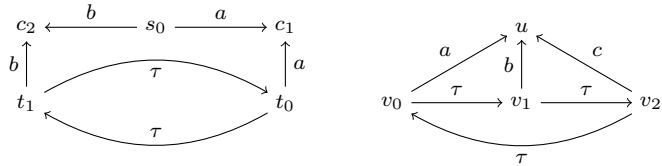

\end{exa}

\begin{defi}
\label{def:branching_bisimulation_game}
A \emph{branching bisimulation (\bb) game} on an LTS  $L$ is played
by players \spoiler and \duplicator on an arena  of \spoiler-owned
configurations $\sconf{(s,t),c,r}$ and \duplicator-owned configurations
$\dconf{(s,t),c,r}$, where $((s,t),c,r) \in \pos \times \chaldagger \times
\rew$, and $\pos$ and $\chal$ are as before, $\chaldagger = \chal \cup \{ \dagger \}$ is the set of challenges where $\dagger$ signifies absence of a challenge, and $\rew = \{ *, \checkmark \}$ is a set of \emph{rewards}.
By convention, we write $\conf{(s,t),c,r}$ if we do not care about the
owner of the configuration. 
\begin{itemize}
\item \spoiler moves from a configuration $\sconf{(s,t),c,r}$ by:
  \begin{enumerate}
    \item selecting $s \pijl{a} s'$ and moving to
$\dconf{(s,t),(a,s'),*}$ if
$c = (a,s')$ or $c = \dagger$, and to $\dconf{(s,t),(a,s'),\checkmark}$,
otherwise; or

    \item picking some $t \pijl{a} t'$ and moving to
$\dconf{(t,s),(a,t'),\checkmark}$.
  \end{enumerate}

\vspace{0.3cm}
\item \duplicator responds from a configuration $\dconf{(u,v),(a,u'),r}$ by:
\begin{enumerate}

 \item \label{bbg_terminate} not moving if $a = \tau$ and continuing in
    configuration $\sconf{(u',v),\dagger,\checkmark}$, or, 

  \item moving $v \pijl{a} v'$ if available
    and continuing in configuration $\sconf{(u',v'), \dagger, \checkmark}$, or

 \item moving $v \pijl{\tau} v'$ if available
    and continuing in configuration $\sconf{(u,v'), (a,u'), *}$.
\end{enumerate}
\end{itemize}
\duplicator wins a finite play starting in a configuration
$\conf{(s,t),c,r}$ if \spoiler gets stuck, and she wins an infinite
play if the play yields infinitely many $\checkmark$ rewards. All
other plays are won by \spoiler. 
We say that a configuration is won by a player when 
she has a strategy that wins all plays starting in it.
{\em Full} plays of the game start in a configuration  $\sconf{(s,t),\dagger,*}$;
we say that 
 \duplicator wins the \bb-game for a position $(s,t)$, if the configuration $\sconf{(s,t),\dagger,*}$
is won by her; in this case, we write $s \bbg t$. Otherwise, we say that \spoiler wins that game.

Note that by definition both players strictly alternate their moves along plays.
\end{defi}
We have not yet explained the rewards that are handed to \duplicator in some of \spoiler's moves.
The need for those are illustrated by the following example.

\begin{exa}
\label{ex:need_for_rewards}
Again consider the first LTS depicted in Figure~\ref{fig:need_for_challenges}.
Suppose \spoiler tries to disprove (again in vain) that $s_0$ and $t_0$ are branching bisimilar, 
and challenges \duplicator by playing $s_0 \pijl{b} c_2$.
The only response for \duplicator is to move $t_0 \pijl{\tau} t_1$, for which she gets a $*$ reward,
and the pending challenge $(b, c_2)$ is kept, generating the new configuration $\sconf{(s_0, t_0), (b, c_2), *}$.
If \spoiler again plays $s_0 \pijl{b} c_2$, \duplicator can win by playing $t_1 \pijl{b} c_2$. If \spoiler
changes, and plays $s_0 \pijl{a} c_1$, then \duplicator can only respond with $t_1 \pijl{\tau} t_0$, getting a $*$
reward and continuing from configuration $\sconf{(s_0, t_0), (a, c_1), *}$. \spoiler can continue switching between
the moves $s_0 \pijl{b} c_2$ and $s_0 \pijl{a} c_1$, and \duplicator will never be able to match the move
directly. However, note that $\checkmark$s are only awarded whenever \spoiler switches away from his current challenge,
and \duplicator still wins this play. If we were not to award $\checkmark$s whenever \spoiler switches away from the current
challenge, \spoiler would win the game starting in $(s_0, t_0)$, whereas $s_0 \bbbisim t_0$.

Now consider the second LTS in Figure~\ref{fig:need_for_challenges}.
In this LTS, $v_0 \bbbisim v_1 \bbbisim v_2$. Now, if the game starts in $\sconf{(v_1, v_0), \dagger, *}$, \spoiler can
play to $\dconf{(v_0, v_1), (a, u), \checkmark}$, to which \duplicator can only respond by playing to $\sconf{(v_0, v_2), (a, u), *}$.
This play can be extended, each time not leaving any choice for \duplicator, as $\sconf{(v_0, v_2), (a, u), *} \to \dconf{(v_2, v_0), (c,u), \checkmark} \to
\sconf{(v_2, v_1), (c,u), *} \to \dconf{(v_1, v_2), (b,u), \checkmark} \to \sconf{(v_1, v_0), (b,u), *} \to \dconf{(v_0, v_1), (a, u), \checkmark} \to \cdots$.
If we do not reward \duplicator when \spoiler plays the second move, this play is winning for \spoiler, whereas $v_1 \bbbisim v_0$.
\end{exa}

In \cite{branching_games_2016} we proved that the \bb-game captures branching bisimilarity. The result also follows
from our results concerning the generic games in Section~\ref{sec:generic_games}.
\begin{thm}\label{th:bbbisim_is_bbg} We have $\bbbisim\, =\, \bbg$.
\end{thm}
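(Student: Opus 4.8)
The plan is to prove the two inclusions separately, exhibiting explicit strategies for \duplicator in one direction and for \spoiler in the other, and in both cases using the stuttering property (Definition~\ref{def:stuttering property}, \cite[Lemma~2.5]{vanglabbeek_branching_1996}) to handle the delicate interaction between \spoiler's challenge and \duplicator's $\tau$-moves. For the inclusion $\bbbisim\, \subseteq\, \bbg$, I would fix a branching bisimulation $R$ with $s \rel{R} t$ and construct a \duplicator strategy whose invariant is that whenever the token reaches a \spoiler-owned configuration $\sconf{(u,v),c,r}$ (with the positions possibly swapped after a \spoiler-move of type~2), we have $u \rel{R} v$, and moreover $c$, if it is a genuine challenge $(a,u')$, is exactly the last move \spoiler played from $u$, so that by the branching bisimulation condition there is a path $v \taustar v_1 \pijl{a} v'$ with $u \rel{R} v_1$ and $u' \rel{R} v'$. \duplicator plays this path one $\tau$-step at a time using move~3 (each such step preserving the invariant, since $u \rel{R} v_1$ and the intermediate states also lie in $R$ by stuttering), and finally discharges the challenge with move~2; in the $a=\tau$ special case she uses move~1. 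The key point for the winning condition on infinite plays is that \duplicator's strategy never procrastinates indefinitely against a fixed challenge: against an unchanging $\sconf{(u,v),(a,u'),r}$ she strictly shortens the remaining $\taustar$-segment to $v_1$, so after finitely many rounds she either plays move~1 or move~2 and collects a $\checkmark$; and \spoiler can only avoid this by switching his challenge, which by the definition of \spoiler's move also yields a $\checkmark$. Either way, infinitely many $\checkmark$'s accrue on any infinite play, so \duplicator wins.

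For the converse inclusion $\bbg\, \subseteq\, \bbbisim$, the clean approach is to show that the relation $R = \{\,(s,t) \mid \text{\duplicator wins the \bb-game from } \sconf{(s,t),\dagger,*}\,\}$, symmetrised, is a branching bisimulation. So suppose $s \rel{R} t$ and $s \pijl{a} s'$. \spoiler plays this move (move~1 from $\sconf{(s,t),\dagger,*}$), reaching $\dconf{(s,t),(a,s'),*}$, from which \duplicator's winning response is one of her three moves. If she uses move~2, $t \pijl{a} t'$ with \duplicator winning from $\sconf{(s',t'),\dagger,\checkmark}$, giving the second clause of Definition~\ref{def:branching_bisimulation} with the trivial path $t \taustar t$; if she uses move~1 (so $a=\tau$), \duplicator wins from $\sconf{(s',t),\dagger,\checkmark}$, giving the first clause. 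The subtle case is move~3, $t \pijl{\tau} t''$ with \duplicator winning from $\sconf{(s,t''),(a,s'),*}$: here I would argue that \spoiler can keep replaying the challenge $(a,s')$ — each such replay forces \duplicator again into move~1, 2 or 3 — and because \duplicator must win, in particular win all infinite plays with infinitely many $\checkmark$'s, she cannot answer with move~3 forever (such a play would carry reward $*$ forever and be lost by \duplicator). Hence along the forced sequence of $\tau$-steps $t \taustar t_1$ that \duplicator is driven through, she must eventually take move~1 or move~2, yielding $t_1 \pijl{a} t'$ (or $a=\tau$ and $t_1$ itself) with \duplicator winning the resulting position, hence $s' \rel{R}$ that state; and \duplicator winning from the intermediate configurations $\sconf{(s,t_i),(a,s'),*}$ gives $s \rel{R} t_i$ for each $i$, in particular $s \rel{R} t_1$, which is exactly the second clause of the branching bisimulation definition.

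The main obstacle is the argument just sketched in the move~3 case: turning ``\duplicator wins from $\sconf{(s,t''),(a,s'),*}$'' into ``there is a \emph{finite} $\tau$-path after which \duplicator discharges the challenge'' requires care, because a priori \duplicator's winning strategy might interleave move~3 with the (rather, after the challenge is fixed) there are no other options — but one must still rule out an infinite play that never leaves move~3. The way I would make this rigorous is to have \spoiler adopt the strategy of never switching his challenge; then every play consistent with this \spoiler-strategy and with \duplicator's winning strategy is a single path in which reward $\checkmark$ appears only at the moment \duplicator plays move~1 or~2. Since \duplicator wins, either the play is finite (\spoiler stuck — impossible here, as \spoiler can always replay $s \pijl{a} s'$) or it has infinitely many $\checkmark$'s, hence \duplicator plays move~1 or~2 at least once; at the \emph{first} such moment we read off $t_1$ and $t'$. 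A secondary, more routine obstacle is bookkeeping the position-swaps caused by \spoiler's move~2 and checking symmetry of $R$, but that is immediate from the symmetric definition of the game and the symmetric definition of $R$. Finally, since Theorem~\ref{th:bbbisim_is_bbg} is also stated to follow from the generic results of Section~\ref{sec:generic_games}, I would remark that the above is the direct, self-contained argument, and that the generic proof instantiates the parameter $(x,y)$ appropriately to recover it.
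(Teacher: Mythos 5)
Your proposal is correct in substance, but it takes a different route from the one the paper uses for this theorem. The paper does not prove Theorem~\ref{th:bbbisim_is_bbg} directly: it obtains it as a corollary of the generic machinery, namely $\bbbisim\,=\,\gbbbisimargs{b,b}$ (Proposition~\ref{prop:cases_generic_bisim}), $\gbbbisimargs{b,b}\,=\,\gbbg_{E(b,b)}=\gbbg_{\emptyset}$ (Theorem~\ref{thm:correspondence}, via Lemmas~\ref{lem:gbbbisim_is_gbbg} and~\ref{lem:game_is_generic_branching_bisim}), and $\gbbg_{\emptyset}\,=\,\bbg$ (Theorem~\ref{thm:branching_special_case}, via the strategy-transfer maps $f$ and $g$). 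You instead give a direct, self-contained argument on the \bb-game itself. Your two halves are in fact the $(b,b)$-instantiations of the paper's generic completeness and soundness lemmas: the invariant you maintain for \duplicator is the specialisation of the \emph{good}-configuration notion, the ``strictly shorten the remaining $\taustar$-segment'' argument is the same termination argument, and the ``\spoiler never switches her challenge'' device in the converse direction is exactly the \spoiler strategy used in the proof of Lemma~\ref{lem:game_is_generic_branching_bisim}. The direct route is shorter and more readable for a reader only interested in branching bisimilarity; the paper's route buys a single proof covering all four weak equivalences at once.

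Two points in your write-up should be made explicit to be fully rigorous. First, to invoke the stuttering property you cannot work with an arbitrary branching bisimulation $R$ containing $(s,t)$: Definition~\ref{def:stuttering property} is a property of a \emph{relation}, and an arbitrary branching bisimulation need not have it. Take $R=\bbbisim$ itself (which does, by \cite[Lemma~2.5]{vanglabbeek_branching_1996} together with transitivity), as the paper does with $\gbbbisim$ in Lemma~\ref{lem:gbbbisim_is_gbbg}. Second, in the converse direction your step ``\duplicator wins $\sconf{(s,t_i),(a,s'),*}$, hence $s \rel{R} t_i$'' is not immediate from your definition of $R$, which quantifies over configurations with a $\dagger$ challenge; you need the analogue of Propositions~\ref{prop:invariant_r} and~\ref{prop:eqconfigurations} for the \bb-game (every configuration reachable from $\sconf{(s,t_i),\dagger,*}$ is reachable from $\sconf{(s,t_i),(a,s'),*}$ up to the reward component, and rewards do not affect the winner because the B\"uchi condition is prefix-independent). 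Both are routine, but neither is free.
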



\section{A Generic Bisimulation for Abstraction}
\label{sec:generic}

Branching bisimilarity induces a weak semantics of systems; it
achieves this by partially abstracting from $\tau$ moves so that
they need not be matched one by one as is the case for strong
bisimulation.  A much more direct way of abstracting from $\tau$
moves is to modify strong bisimulation in such a way that each
concrete transition $\pijl{a}$ is `weakly' matched using a weak
transition $\xRightarrow{\,a~}$, where the weak transition relation
$\Longrightarrow$ is essentially $\taustar \circ \longrightarrow
\circ \taustar$.  The relation obtained this way is called \emph{weak
bisimulation}, which is the basis for weak bisimilarity.

As explained in detail in~\cite{vanglabbeek_branching_1996}, the
definition of branching bisimulation can be obtained from the
definition of weak bisimulation by imposing additional conditions
on the relation. By varying these conditions, two further behavioural
equivalences can be obtained, \viz\ \emph{delay bisimilarity} and
\emph{$\eta$-bisimilarity}.  This is nicely visualised
in~\cite{vanglabbeek_branching_1996} using the diagram of
Figure~\ref{fig:generic_picture}.

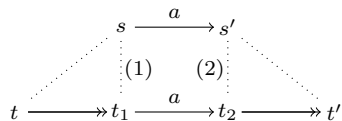
\begin{figure}[hbpt]
\centering
\begin{tikzpicture}[node distance=40pt,inner sep = 1pt, minimum size=10pt,initial text={}]
\tikzstyle{state}=[draw=none]
\scriptsize
\node[state] (t) {$t$};
\node[state,right of=t] (t1) {$t_1$};
\node[state,right of=t1] (t2) {$t_2$};
\node[state,right of=t2] (t') {$t'$};

\node[state,above of=t1,yshift=-8pt] (s)  {$s$};
\node[state,right of=s] (s')  {$s'$};

\path[->]
  (t) edge (t1)
  (t) edge [shorten >=2pt](t1)
  (t1) edge node[above] {$a$} (t2)
  (t2) edge (t')
  (t2) edge [shorten >=2pt](t');

\path[->]
  (s) edge node[above] {$a$} (s');

\path[dotted]
  (s) edge (t)
  (s) edge node[right] {(1)} (t1)
  (s') edge node[left] {(2)} (t2)
  (s') edge (t')
;
\end{tikzpicture}
\caption{Weakly matching a concrete transition $s \pijl{a} s'$ by an abstract transition
from $t \xRightarrow{\,a~} t'$. By requiring (1) and (2) to hold, or not, variations
on weak bisimulations are obtained: $\eta$-bisimulation is obtained by requiring
(1), delay bisimulation is obtained by requiring (2) and branching bisimulation is
obtained by requiring both (1) and (2).
}
\label{fig:generic_picture}
\end{figure}

In this section we give a parametric bisimulation, which we refer
to as \emph{$(x,y)$-generic bisimulation}, that gives a unified
definition for branching bisimulation and the three other abstract bisimulations.
This definition in essence reflects the diagram of
Figure~\ref{fig:generic_picture}: we use parameters $x$ and $y$ to
indicate which extra conditions (if any) are imposed on the definition
of weak bisimulation.  Before we formally state our parametric
bisimulation, we briefly recall the definitions of weak bisimulation,
delay bisimulation and $\eta$-bisimulation and state a few well-known
facts about them.

\begin{defi}[{\cite{milner_calculus_1980}}]
\label{def:weak_bisimulation}
A symmetric relation $R \subseteq S \times S$ is 
a \emph{weak bisimulation} whenever $s \rel{R} t$ and 
$s \pijl{a} s'$ imply either:
\begin{itemize}
\item $a = \tau$ and $s' \rel{R} t$, or

\item there exist states $t', t_1,t_2$ such that $t \taustar t_1
\pijl{a} t_2 \taustar t'$ and $s'\rel{R}t'$.

\end{itemize}
We write $s \wbisim t$ and say that $s$ and $t$ are weak bisimilar
iff there is a weak bisimulation $R$ such that $s \rel{R} t$.
Typically we simply write $\wbisim$ to denote {\em weak bisimilarity}.\footnote{The
definition in \cite{milner_calculus_1980} uses the weak transition $s \stackrel{a}\Rightarrow s'$
to denote $s \taustar s'' \pijl{a} s''' \taustar s'$ in our notation.}

\end{defi}

Note that compared to branching bisimulation, weak bisimulation drops all
conditions imposed on states reached silently before \emph{and}
after the execution of a weakly matching $a$-transition. The difference
between the two relations is nevertheless subtle, as illustrated by the
LTS depicted in Figure~\ref{fig:example_weak_vs_branching}, taken from
Korver~\cite{korver_computing_1992}.
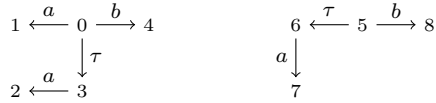
\begin{figure}[hbpt]
  \begin{center}
  \begin{tikzpicture}
    [node distance=25,inner sep = 1pt,minimum size=10pt]

     \tikzstyle{state}=[circle, draw=none]
     \tikzstyle{transition}=[->,>=stealth']

      \node[state] (t0) {\tiny 0};
      \node[state] [left of=t0] (t1) {\tiny 1};
      \node[state] [below of=t1] (t2) {\tiny 2};
      \node[state] [right of=t2] (t3) {\tiny 3};
      \node[state] [right of=t0] (t4) {\tiny 4};
      \draw [->] (t0) edge node[above] {\scriptsize $a$} (t1);
      \draw [->] (t0) edge node[above] {\scriptsize $b$} (t4);
      \draw [->] (t0) edge node[right] {\scriptsize $\tau$} (t3);
      \draw [->] (t3) edge node[above] {\scriptsize $a$} (t2);

      \node[state, right of=t0,xshift=80pt] (s0) {\tiny 5};
      \node[state] [left of=s0] (s1) {\tiny 6};
      \node[state] [below of=s1] (s2) {\tiny 7};
      \node[state] [right of=s0] (s3) {\tiny 8};
      \draw [->] (s0) edge node[above] {\scriptsize $\tau$} (s1);
      \draw [->] (s1) edge node[left] {\scriptsize $a$} (s2);
      \draw [->] (s0) edge node[above] {\scriptsize $b$} (s3);

  \end{tikzpicture}
  \end{center}
\caption{An illustration of the difference between weak bisimulation
and branching bisimulation: states $0$ and $5$ are weak bisimilar
but not branching bisimilar.}
\label{fig:example_weak_vs_branching}
\end{figure}
It is not too hard to check that states $0$ and $5$ are weak
bisimilar (and, in fact, also delay bisimilar).\label{discussion}  
Nonetheless, the argument why these states are not
branching bisimilar is more subtle and is best explained using the
branching bisimulation game of the previous section, as we also
indicated in~\cite{branching_games_2016}.  Indeed, \spoiler wins
the branching bisimulation game by moving $0 \pijl{a} 1$.  \duplicator
can only respond to this challenge by moving $5 \pijl{\tau} 6$.
Now, continuing from $\sconf{(0, 5), (a, 1),*}$ \spoiler plays her
second option and challenges \duplicator to mimic move $0 \pijl{b}
4$, something that  \duplicator cannot match.\label{weak_vs_branching_example}

\begin{defi}[{\cite{milner_modal_1981,vanglabbeek_branching_1996}}]
\label{def:delay_bisimulation}
A symmetric relation $R \subseteq S \times S$ is 
a \emph{delay bisimulation} whenever $s \rel{R} t$ and
$s \pijl{a} s'$ imply either:
\begin{itemize}
\item $a = \tau$  and $s' \rel{R} t$, or

\item there exist states $t', t_1$ such that $t \taustar t_1 \pijl{a}
t'$ and $s'\rel{R}t'$.

\end{itemize}
We write $s \dbisim t$ and say that $s$ and $t$ are delay bisimilar
iff there is a delay bisimulation $R$ such that $s \rel{R} t$.
Typically we simply write $\dbisim$ to denote {\em delay bisimilarity}.

\end{defi}

We remark that the original `delay bisimulation' originates with Milner
in~\cite{milner_modal_1981}, where it is called \emph{observation equivalence}.
It is not presented in a coinductive way but by means of a sequence of
\emph{approximations}. Moreover, the definition from~\cite{milner_modal_1981}
is sensitive to divergences, and coincides with our definition only for
non-divergent systems. The fact that \emph{weak transitions} where the execution of an
observable action can be preceded, but not followed, by a finite
sequence of internal actions, is indeed the seed for the--by now--`classical' 
notion of delay bisimulation that we present
above, and which stems from~\cite{vanglabbeek_branching_1996}.

Observe that delay bisimulation essentially renders the existentially
quantified state $t_2$ in the definition of weak bisimulation
superfluous as it requires it to coincide with $t'$ in that definition.
We note that from this it also immediately follows that every delay
bisimulation is a weak bisimulation, but not \emph{vice versa}.

\begin{defi}[{\cite{baeten_another_1987}}]
\label{def:eta_bisimulation}
A symmetric relation $R \subseteq S \times S$ is an $\eta$-{\em
bisimulation} whenever $s \rel{R} t$ and $s \pijl{a} s'$ imply
either:
\begin{itemize}
\item $a = \tau$ and $s' \rel{R} t$, or

\item there exist states $t', t_1,t_2$ such that $t \taustar t_1
\pijl{a} t_2 \taustar t'$, $s \rel{R} t_1$ and $s'\rel{R}t'$.

\end{itemize}
We write $s \ebisim t$ and say that $s$ and $t$ are $\eta$-bisimilar 
iff there is an  $\eta$-bisimulation $R$ such that $s \rel{R} t$.
Typically we simply write $\ebisim$ to denote $\eta$-{\em bisimilarity}.
\end{defi}

As shown in \cite{vanglabbeek_branching_1996},
each of the relations we define above are equivalence relations.
\begin{thm}[\cite{vanglabbeek_branching_1996}]
\label{th:equivalences_edw} 
The relations $\ebisim,\dbisim$ and $\wbisim$ are equivalence relations.

\end{thm}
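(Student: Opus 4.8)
The plan is to establish reflexivity, symmetry and transitivity of each of $\wbisim$, $\dbisim$ and $\ebisim$, handling the three relations uniformly as far as possible and splitting into cases only where the extra clauses of Definitions~\ref{def:weak_bisimulation}, \ref{def:delay_bisimulation} and \ref{def:eta_bisimulation} force it. For reflexivity I would check that the identity relation $I = \{(s,s)\mid s\in S\}$ is at once a weak, a delay and an $\eta$-bisimulation: it is symmetric, and any transition $s\pijl{a}s'$ is matched by the weak transition $s \taustar s \pijl{a} s' \taustar s'$, which satisfies all the additional requirements --- both the source-state condition used by $\eta$ and the `no $\tau$ after $a$' shape used by delay --- since $s \rel{I} s$ and $s' \rel{I} s'$; hence each relation contains $I$ and is reflexive. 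Symmetry is immediate from the definitions: if $s \rel{R} t$ for a weak (delay, $\eta$) bisimulation $R$, then $R$ is symmetric, so $t \rel{R} s$, whence $t$ is weak (delay, $\eta$) bisimilar to $s$.

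Transitivity is the substantive part, and I would obtain it by showing that the relational composition of two bisimulations of the same kind is again a bisimulation of that kind, after symmetrising. The key auxiliary result is a \emph{weak transfer lemma}: if $R$ is a weak (delay, $\eta$) bisimulation and $s \rel{R} t$, then every $\tau$-path $s \taustar s^{\ast}$ can be answered by some $t \taustar t^{\ast}$ with $s^{\ast} \rel{R} t^{\ast}$, and, more generally, a weakly matching $a$-transition from $s$ is mirrored by one from $t$ of the shape prescribed by the relation, with the $\eta$ source-state condition maintained. The first part follows by induction on the length of the $\tau$-path using only the $\tau$-clause of the bisimulation; the second combines it with the single-step clause. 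Given this, for $s \rel{R_1} t$, $t \rel{R_2} u$ and a move $s \pijl{a} s'$: if $a = \tau$ and the $R_1$-clause yields $s' \rel{R_1} t$ we are done, since then $(s',u)\in R_1\circ R_2$; otherwise I would use the clause for $R_1$ to obtain a weak $a$-transition from $t$ of the required form ending in some $t'$ with $s' \rel{R_1} t'$ (and, for $\eta$, passing through some $t_1$ with $s \rel{R_1} t_1$), and then push the $\taustar$-prefix, the $a$-step, and (for weak and $\eta$) the $\taustar$-suffix through $R_2$ segment by segment via the weak transfer lemma. Concatenating the pieces yields a weak $a$-transition from $u$ of the required shape whose relevant endpoints are $R_2$-related to $t'$ (and to $t_1$), hence $(R_1\circ R_2)$-related to $s'$ (and to $s$).

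It then remains to note that $R := (R_1 \circ R_2) \cup (R_2 \circ R_1)$ is symmetric, since $(R_1 \circ R_2)^{-1} = R_2^{-1} \circ R_1^{-1} = R_2 \circ R_1$ by symmetry of $R_1$ and $R_2$, and that it transfers moves by the argument just given applied to each composite in turn; thus $R$ is a weak (delay, $\eta$) bisimulation, and since $(s,u)\in R_1 \circ R_2 \subseteq R$, transitivity follows. I expect the bookkeeping in this last step to be the only real difficulty, and to be most delicate for $\eta$-bisimilarity, where the state reached just before the $a$-step on the $u$-side must be shown related to $s$; weak and delay bisimilarity are more routine because delay leaves no $\taustar$-suffix and weak imposes no source condition. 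Note that branching bisimilarity is deliberately excluded from this statement: its transitivity is the genuinely hard case, attributed to Basten in Theorem~\ref{th:bbbisim_is_equivalence}, precisely because it hinges on the stuttering property.
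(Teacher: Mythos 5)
The paper does not prove this theorem at all: it is imported verbatim from van Glabbeek and Weijland \cite{vanglabbeek_branching_1996} (just as Theorem~\ref{th:bbbisim_is_equivalence} is imported from Basten), so there is no in-paper proof to compare against. Judged on its own merits, your argument is the standard and correct one: reflexivity via the identity relation, symmetry directly from the built-in symmetry requirement in Definitions~\ref{def:weak_bisimulation}--\ref{def:eta_bisimulation}, and transitivity by showing that the symmetrised composite $(R_1\circ R_2)\cup(R_2\circ R_1)$ is again a bisimulation of the same kind, using the transfer lemma for $\taustar$ to push each segment of the matching weak transition through $R_2$. The one place that genuinely needs the care you flag is when the composite answer to a $\tau$-challenge degenerates into a bare $\tau$-path $u\taustar u'$: one must either fall into the ``$a=\tau$ and $s'\rel{R}u$'' clause (when $u=u'$) or designate one step of the nonempty path as the matching $\tau$-step; for $\eta$-bisimulation the source condition is then saved by choosing the \emph{first} step, since $s\rel{R_1\circ R_2}u$ is already known. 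With that detail spelled out, the proof is complete.
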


From their definitions it immediately follows that the four equivalences $\bbbisim,\ebisim,\dbisim$ and $\wbisim$, are ordered according to the lattice
depicted in Figure~\ref{fig:lattice}, as was also shown in~\cite{sangiorgi_introduction_2012}. The equivalences lower in the lattice are coarser than the equivalences higher in the lattice.
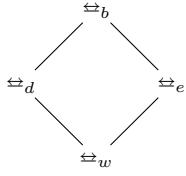
\begin{figure}[hbpt]
\centering
\begin{tikzpicture}[node distance=40pt,inner sep = 1pt, minimum size=10pt,initial text={}]
\tikzstyle{state}=[draw=none]
\scriptsize
\node[state] (bb) {$\bbbisim$};
\node[state,below right of=bb] (eb) {$\ebisim$};
\node[state,below left of=bb] (db) {$\dbisim$};
\node[state,below right of=db] (wb) {$\wbisim$};

\path[draw]
  (bb) edge (eb) edge (db)
  (wb) edge (eb) edge (db)
;
\end{tikzpicture}
\caption{The lattice of abstract behavioural equivalences.  }
\label{fig:lattice}

\end{figure}

\newcommand{\staustar}{\Longrightarrow}
\newcommand{\gentaustar}[1]{\ensuremath{\taustar}_{#1}}
\newcommand{\sgentaustar}[1]{\ensuremath{\staustar}_{#1}}
We next exploit the similarity between the four abstract bisimulations
by explicitly giving a parameterised definition
from which the original definitions can be recovered.  Before we
state this parameterised definition, we introduce some auxiliary
notation to facilitate a concise parametric definition.\medskip

Let $R \subseteq S \times S$ and let $s,s',t \in S$.
We write $s \gentaustar{o,R,t} s'$ iff $s \taustar s'$, while we write
$s \gentaustar{b,R,t} s'$ iff $s \taustar s'$, $t \rel{R} s$ and $t \rel{R} s'$. 
In this way, when $x \in \{o,b\}$, a {\em general weak transition} 
$s \gentaustar{x,R,t} s'$  is just a plain weak transition  $s \taustar s'$, when $x=o$, while, when $x=b$, we additionally impose the `context' conditions $t \rel{R} s$ and $t \rel{R} s'$.
In a similar way, we write $s \sgentaustar{x,R,t} s'$ iff
$s \taustar s'$ and either $x = o$, or $x = b$ and there is a
finite sequence of transitions $s = s_0 \pijl{\tau} s_1 \pijl{\tau} \cdots
\pijl{\tau} s_n = s'$ such that for all $i$, $t \rel{R} s_i$. 
It is clear that $s \sgentaustar{x,R,t} s'$ implies $s \gentaustar{x,R,t} s'$,
because we are not only imposing $t \rel{R} s_0$ and $t \rel{R} s_n$, but
also that $t \rel{R} s_i$ for any intermediate state. Instead, 
the implication from right to left generally does not hold,
because those additional conditions for the intermediate states are not 
(explicitely) imposed by the definition of our general weak transitions.

\begin{defi}
\label{def:generic_bisimulation}
For  $x,y \in \{o,b\}$, 
a symmetric relation $R \subseteq S \times
 S$ is an $(x,y)$-\emph{generic bisimulation}, whenever $s
 \rel{R} t$ and $s \pijl{a} s'$ imply either:
\begin{itemize}
 \item $a = \tau$ and $s' \rel{R} t$, or 
 \item there exist states $t', t_1,t_2$ such that $t
 \gentaustar{x,R,s} t_1 \pijl{a} t_2 \gentaustar{y,R,s'} t'$ and $s' \rel{R} t'$.  
\end{itemize}
 Now, for the corresponding values of $x$ and $y$, we write $s \gbbbisim
 t$ and say that $s$ and $t$ are $(x,y)$-generic bisimilar iff there is an
 $(x,y)$-generic bisimulation $R$ such that $s \!\rel{R} t$.
 Typically, we simply write $\gbbbisim$ to denote  $(x,y)$-\emph{generic
 bisimilarity}.

\end{defi}
As formally claimed by the proposition below, the above parametric
bisimulation definition captures all four abstract bisimulations.

\begin{prop}\label{prop:cases_generic_bisim}
We have the following correspondences for $R \subseteq S \times S$:
\begin{itemize}
  \item $R$ is a weak bisimulation iff it is an $(o,o)$-generic bisimulation;
  \item $R$ is a delay bisimulation iff it is an $(o,b)$-generic bisimulation;
  \item $R$ is an $\eta$ bisimulation iff it is a $(b,o)$-generic bisimulation;
  \item $R$ is a branching bisimulation iff it is a $(b,b)$-generic bisimulation.
\end{itemize}

\end{prop}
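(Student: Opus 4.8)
The plan is to prove the four biconditionals independently, each by a pair of implications obtained by instantiating Definition~\ref{def:generic_bisimulation} at the relevant pair $(x,y)$ and unfolding the general weak transitions $\gentaustar{x,R,s}$ and $\gentaustar{y,R,s'}$. Since the first clause (``$a=\tau$ and $s'\rel{R}t$'') is literally identical in all five definitions, it suffices, assuming throughout that $s \rel{R} t$ and $s \pijl{a} s'$, to compare the second clauses.

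Two of the four cases will require no work at all. Unfolding $\gentaustar{o,R,-}$ as plain $\taustar$ turns the $(o,o)$-second clause into ``there exist $t_1,t_2,t'$ with $t \taustar t_1 \pijl{a} t_2 \taustar t'$ and $s'\rel{R}t'$'', which is verbatim the second clause of Definition~\ref{def:weak_bisimulation}; hence weak bisimulation and $(o,o)$-generic bisimulation coincide. Likewise, unfolding $\gentaustar{b,R,s}$ (which, using the standing hypothesis $s\rel{R}t$, amounts to ``$t\taustar t_1$ and $s\rel{R}t_1$'') together with $\gentaustar{o,R,s'}$ turns the $(b,o)$-second clause into ``there exist $t_1,t_2,t'$ with $t \taustar t_1 \pijl{a} t_2 \taustar t'$, $s\rel{R}t_1$ and $s'\rel{R}t'$'', which is verbatim the second clause of Definition~\ref{def:eta_bisimulation}.

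For the remaining two cases ($(o,b)$ versus delay, $(b,b)$ versus branching) the generic second clause carries one more existentially quantified state than the original one, so each biconditional genuinely splits. In the direction from the original definition to the generic one, given an original witness $t \taustar t_1 \pijl{a} t'$ (with $s'\rel{R}t'$, and also $s\rel{R}t_1$ in the branching case), I would keep $t_1$ unchanged, take the post-$a$ state $t_2$ of the generic clause to be $t'$, take the final state to be $t'$ as well, and use reflexivity of $\taustar$ to realise the trailing step $t' \gentaustar{b,R,s'} t'$; the required side conditions $s'\rel{R}t_2$, $s'\rel{R}t'$ (and, for $(b,b)$, $s\rel{R}t_1$, with $s\rel{R}t$ free from the hypothesis) are exactly those recorded in the original witness. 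In the converse direction, given a generic witness $t \taustar t_1 \pijl{a} t_2 \taustar t'$, I would simply keep the prefix $t \taustar t_1 \pijl{a} t_2$ and use $t_2$ in the role of the final state of the original clause, discarding the trailing $\taustar$-segment; the side condition $s'\rel{R}t_2$ supplied by $y=b$ (and $s\rel{R}t_1$ supplied by $x=b$) is precisely what the delay (respectively branching) clause demands of that final state.

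There is no real obstacle here: the entire proposition is a matter of unfolding notation and bookkeeping. The only point that deserves a moment's care is the mismatch in the number of existential witnesses in the delay and branching cases --- one direction must collapse the trailing $\taustar$ to the reflexive step, the other must truncate at $t_2$ --- together with the observation that the ``context'' conditions imposed by an index equal to $b$ (namely $s\rel{R}t_1$ for $x=b$ and $s'\rel{R}t_2$, $s'\rel{R}t'$ for $y=b$) coincide exactly with the side conditions already present in the $\eta$- and branching-bisimulation clauses, the extra conjunct $s\rel{R}t$ being available for free from the standing assumption $s\rel{R}t$.
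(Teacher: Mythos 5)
Your proposal is correct and follows essentially the same route as the paper's proof: the $(o,o)$ and $(b,o)$ cases are identities after unfolding (with the extra conjunct $s\rel{R}t$ absorbed by the standing hypothesis), and the delay/branching cases are handled by choosing $t_2=t'$ in one direction and truncating the trailing $\taustar$-segment at $t_2$ in the other. Your write-up is merely more explicit about the truncation step, which the paper compresses into the remark that the generic requirements are ``stronger''.
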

\begin{proof}
First observe that the definition of $(o,o)$-generic bisimulation coincides exactly
with the definition of weak bisimulation. Likewise, the definition of
$(b,o)$-generic bisimulation reduces to the definition of
$\eta$-bisimulation. 

As for the remaining two relations, we find that the implication
from right to left follows because the requirements 
for being a $(b,b)$-generic bisimulation (resp.\ an $(o,b)$-generic bisimulation)
are stronger than the requirements
for being a branching bisimulation (resp.\ a delay bisimulation).
For the implication from left to right we note that for any
delay bisimulation (resp.\ a branching bisimulation) $R$, then $R$ is
an $(o,b)$-generic bisimulation (resp.\ a $(b,b)$-generic bisimulation) by
simply choosing $t_2 = t'$.
\end{proof}

It follows from the above theorem that, by equipping the set
$\{o,b\}$ with an ordering $\leq$, where $x \leq x$ for all
$x$ and $o < b$, and lifting this ordering to pairs taken from
$\{o,b\} \times \{o,b\}$ we can recover the lattice of Figure~\ref{fig:lattice}.
We furthermore note that in view of Proposition~\ref{prop:cases_generic_bisim}
and Theorems~\ref{th:bbbisim_is_equivalence} and~\ref{th:equivalences_edw} we
have the following corollary.

\begin{cor}
Let $x,y \in \{o,b\}$. Then $\gbbbisim$ is an equivalence relation.
\end{cor}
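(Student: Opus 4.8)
The plan is to reduce the corollary directly to results already available in the excerpt, via the correspondence established in Proposition~\ref{prop:cases_generic_bisim}. By that proposition, for each of the four pairs $(x,y) \in \{o,b\} \times \{o,b\}$, the relation $\gbbbisim$ coincides with one of the four classical bisimilarities: $(o,o)$-generic bisimilarity is exactly weak bisimilarity $\wbisim$, $(o,b)$-generic bisimilarity is delay bisimilarity $\dbisim$, $(b,o)$-generic bisimilarity is $\eta$-bisimilarity $\ebisim$, and $(b,b)$-generic bisimilarity is branching bisimilarity $\bbbisim$. Since $\{o,b\} \times \{o,b\}$ has exactly these four elements, establishing the equivalence property for each of them establishes it for $\gbbbisim$ for every admissible choice of $x$ and $y$.

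First I would invoke Theorem~\ref{th:bbbisim_is_equivalence} (Basten's result) to conclude that $\bbbisim$, and hence $(b,b)$-generic bisimilarity, is an equivalence relation. Then I would invoke Theorem~\ref{th:equivalences_edw} (due to van Glabbeek), which states precisely that $\ebisim$, $\dbisim$ and $\wbisim$ are equivalence relations, to cover the remaining three cases $(b,o)$, $(o,b)$ and $(o,o)$. Combining these with Proposition~\ref{prop:cases_generic_bisim} gives the claim: for every $x,y \in \{o,b\}$, the relation $\gbbbisim$ equals an equivalence relation and is therefore itself reflexive, symmetric and transitive.

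There is essentially no obstacle here; the corollary is a bookkeeping consequence of the preceding proposition together with the cited equivalence theorems. The only thing worth a moment's care is making sure that Proposition~\ref{prop:cases_generic_bisim} is read at the level of bisimilarities (the largest such relations) and not merely at the level of individual bisimulation relations: since the proposition shows that the classes of $(x,y)$-generic bisimulations and of the corresponding classical bisimulations coincide, their unions—i.e.\ the respective bisimilarity relations—coincide as well, so $\gbbbisim$ is literally one of $\wbisim$, $\dbisim$, $\ebisim$, $\bbbisim$ depending on $(x,y)$. With that observation in place, the proof is a one-line citation: apply Theorems~\ref{th:bbbisim_is_equivalence} and~\ref{th:equivalences_edw} through Proposition~\ref{prop:cases_generic_bisim}.
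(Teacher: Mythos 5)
Your proposal is correct and matches the paper's own argument exactly: the authors likewise derive the corollary from Proposition~\ref{prop:cases_generic_bisim} combined with Theorems~\ref{th:bbbisim_is_equivalence} and~\ref{th:equivalences_edw}. Your remark about reading the proposition at the level of the largest relations (the bisimilarities) rather than individual bisimulation relations is a sensible point of care, and it goes through because the classes of relations coincide, so their unions do too.
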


Additionally, it is not hard to prove that $(x,y)$-generic bisimilarity, like branching bisimilarity, satisfies the stuttering property. 
\begin{lem}
  \label{lem:gen_bisim_stuttering_property}
Let $x,y \in \{o,b\}$. Then $\gbbbisim$ satisfies the stuttering property (Definition~\ref{def:stuttering property}).
\end{lem}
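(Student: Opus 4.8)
I want to show that $(x,y)$-generic bisimilarity $\gbbbisim$ satisfies the stuttering property: whenever $t_0 \pijl{\tau} t_1 \pijl{\tau} \cdots \pijl{\tau} t_k$ and $t_0 \gbbbisim t_k$, then $t_i \gbbbisim t_j$ for all $0 \leq i,j \leq k$. Since $\gbbbisim$ is symmetric and transitive (it is an equivalence by the preceding corollary), it suffices to prove that $t_0 \gbbbisim t_i$ for each $0 \leq i \leq k$; the general claim for arbitrary $i,j$ then follows by symmetry and transitivity. The natural approach is to exhibit a single $(x,y)$-generic bisimulation witnessing all of these at once.

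**The candidate relation.** I would define $R = {}\gbbbisim \cup \{(t_i,t_j), (t_j,t_i) \mid 0 \leq i,j \leq k\}$ — that is, $\gbbbisim$ augmented with the full clique on $\{t_0,\ldots,t_k\}$ (and its symmetric closure, which this already is). Clearly $R$ is symmetric and relates $t_0$ to every $t_i$. It remains to check that $R$ is an $(x,y)$-generic bisimulation. The pairs already in $\gbbbisim$ pose no problem: $\gbbbisim$ is a generic bisimulation and $R \supseteq {}\gbbbisim$, so any matching transition found within $\gbbbisim$ is also a matching transition within $R$ (the context conditions $t \rel{R} \cdot$ are only weakened by enlarging $R$). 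So the real work is the new pairs. Consider $t_i \rel{R} t_j$ (say $i \leq j$, the other case being symmetric) and a transition $t_i \pijl{a} s'$. There are two subcases depending on whether $a = \tau$ leads us along the chain or off it.

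**The core argument.** If $a = \tau$ and $s' = t_{i+1}$ (the move is along the chain and $i+1 \leq k$), then the first clause of Definition~\ref{def:generic_bisimulation} applies directly: $s' = t_{i+1} \rel{R} t_j$ since both are in the clique. Otherwise — whether $a \neq \tau$, or $a = \tau$ but $s'$ is not the next chain element, or $i = k$ — I need to use the hypothesis $t_0 \gbbbisim t_k$, together with the chain, to transfer the move. The key observation is that $t_i \gbbbisim t_0 \gbbbisim t_k \gbbbisim t_j$ (the outer bisimilarities coming from the chain being inside $\gbbbisim$ once we know stuttering holds — but that is circular, so instead I note $t_i$ is connected to $t_j$ only through the augmented relation $R$, not yet through $\gbbbisim$). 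So the honest route is: the transition $t_i \pijl{a} s'$ must be matched from $t_j$ using only what $R$ gives us. Here I would invoke that $R$, restricted appropriately, reduces to $\gbbbisim$-reasoning: specifically, using that $t_0 \gbbbisim t_k$ and that $\gbbbisim$ contains enough $\tau$-closure, one shows $t_i \gbbbisim t_j$ is in fact already forced. A cleaner formulation: first prove the special case $t_0 \gbbbisim t_1$ (i.e. $k=1$ with a single $\tau$-step and $t_0 \gbbbisim t_1$ — wait, here $t_0 \gbbbisim t_k = t_1$ is the hypothesis, so this case is trivial). The real induction is on $k$: assuming the property for chains of length $< k$, from $t_0 \gbbbisim t_k$ and $t_0 \pijl{\tau} t_1$ I want $t_0 \gbbbisim t_1$. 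Apply the generic bisimulation clause to $t_0 \gbbbisim t_k$ and the transition $t_{k-1} \pijl{\tau} t_k$ viewed from the other side: since $\gbbbisim$ is symmetric, $t_k \gbbbisim t_0$, and we can push the whole chain.

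**The expected obstacle.** The delicate point is handling the $\eta$/delay asymmetry encoded by $x,y$: the context conditions $t \gentaustar{x,R,s} t_1$ and $t_2 \gentaustar{y,R,s'} t'$ attach $R$-constraints to the matching weak transition at the source state $s$ and target state $s'$ of the challenging move. When I re-use a matching transition $t \gentaustar{x,\gbbbisim,s} t_1 \pijl{a} t_2 \gentaustar{y,\gbbbisim,s'} t'$ obtained from $\gbbbisim$ and need it to be a matching transition from some $t_j$ with a different source/target pairing, I must re-verify these context conditions with the new $s$ (or $s'$) — which may now be some $t_i$ rather than the original. This is exactly where $t_0 \gbbbisim t_k$ and transitivity must be deployed carefully: I expect to need a small lemma saying that if $u \gbbbisim t_i$ and $t_i \gbbbisim t_j$ (clique) then $u \gbbbisim t_j$, which is just transitivity of $\gbbbisim$, so in the end the augmented relation $R$ collapses back into $\gbbbisim$ plus reflexivity-on-the-clique, and the verification goes through. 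I anticipate that the cleanest writeup avoids the explicit $R$ entirely and argues directly by induction on $k$, reducing to the single-step case $t_0 \pijl{\tau} t_1$ with $t_0 \gbbbisim t_1$ already given, which trivializes — so really the whole content is: \emph{from $t_0 \gbbbisim t_k$ deduce $t_0 \gbbbisim t_i$ for $0 < i < k$ by peeling $\tau$-steps off one end and using symmetry + transitivity of $\gbbbisim$ at each stage.} That transfer step, matching a move from $t_i$ against $t_0 \gbbbisim t_k$, is the one place routine care is required.
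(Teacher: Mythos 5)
There is a genuine gap, in two places. First, your candidate relation is the full clique on $\{t_0,\dots,t_k\}$, and that relation cannot be verified directly as an $(x,y)$-generic bisimulation: for a pair $(t_i,t_j)$ with $0<i<j$, a transition $t_i\pijl{a}u$ with $u$ off the chain must be matched from $t_j$, but the only non-trivial datum available is $t_0\gbbbisim t_k$, which constrains how $t_k$ matches the transitions of $t_0$ (and vice versa) and says nothing about the transitions of an interior state $t_i$; you cannot walk backwards along the chain from $t_j$ to $t_i$, and you cannot reach any state related to $u$ from $t_j$ using only the hypotheses. The paper's proof avoids this case entirely by adding to $\gbbbisim$ only the pairs $(t_0,t_i)$ and $(t_i,t_0)$: for $(t_i,t_0)$ a move $t_i\pijl{a}t_i'$ is matched via $t_0\taustar t_i\pijl{a}t_i'$, and for $(t_0,t_i)$ a move of $t_0$ is matched by prepending $t_i\taustar t_k$ to the weak match supplied by $t_0\gbbbisim t_k$; in both cases the side conditions of the general weak transitions $\gentaustar{x,R,\cdot}$ concern only endpoints and are discharged by the added pairs (plus reflexivity). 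The remaining claims $t_i\gbbbisim t_j$ then follow by transitivity of $\gbbbisim$, exactly as you observe in your opening paragraph---so your reduction is correct, but the relation you chose to verify coinductively is the wrong one.

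Second, the fallback you end with---``peel $\tau$-steps off one end by induction on $k$, using symmetry and transitivity''---does not work. Applying the transfer condition to $t_0\pijl{\tau}t_1$ against $t_0\gbbbisim t_k$ yields a disjunction: either $t_1\gbbbisim t_k$ (in which case transitivity indeed gives $t_0\gbbbisim t_1$), or there is a weak match from $t_k$ ending in some $t'$ with $t_1\gbbbisim t'$, where $t'$ need not be related to $t_0$ or to $t_k$ at all. In that second branch transitivity gives you nothing, and this is precisely why the stuttering property is not a formal consequence of $\gbbbisim$ being an equivalence: it genuinely requires the coinductive construction with an augmented relation. You correctly sensed the circularity in the clique argument, but the escape route you chose reintroduces the same difficulty in a different guise, so neither of your two threads reaches a complete proof.
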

\begin{proof}
The proof is essentially a translation to the general case of that given in \cite[Lemma 4.9.2]{sangiorgi_introduction_2012},  for the particular case of branching
bisimulation. 
Let $t_0 \pijl{\tau} t_1 \cdots \pijl{\tau} t_k$ with $t_0 \gbbbisim t_k$.
We define relation $R$ as follows:
\[
R = \{ (t_0, t_i), (t_i, t_0) \mid 0 \leq i < k \} \cup \gbbbisim
\]
Let us see that $R$ is an $(x,y)$-generic bisimulation relation, by proving the transfer condition for all pairs $t_0 \mathrel{R} t_i$.
\begin{itemize}
\item
When we have 
 $t_i\pijl{a} t_i'$, we immediately obtain
  $t_0 \gentaustar{x,R,t_i} t_i \pijl{a} t_i' \gentaustar{y,R,t_i'} t_i'$,  which is trivially true because $t_i \mathrel{R} t_i$. Moreover, it is equally trivial that $t_i' \rel{R} t_i'$.
\item
When we have 
 $t_0\pijl{a} t_0'$, from $t_0 \gbbbisim t_k$ we infer that
 we  have either:
\begin{itemize}
 \item $a = \tau$ and $t_0' \rel{R} t_k$, or 
 \item there exist states $t_k', t_k'',t_k'''$ such that 
 $t_k \gentaustar{x,R,t_0} t_k' \pijl{a} t_k'' \gentaustar{y,R,t_0'} t_k'''$ and $t_0' \rel{R} t_k'''$.  
\end{itemize}
In both cases we can start the matching computation from $t_i$ by means of the weak transition 
$t_i \taustar t_k$, thus getting either:
\begin{itemize}
 \item $a = \tau$ and we have   
  $t_i \gentaustar{x,R,t_0} t_{k-1} \pijl{a} t_k \gentaustar{y,R,t_0'} t_k$, because 
  $t_0 \rel{R} t_{k-1}$ and $t_0' \rel{R} t_k$, or 
 \item there exist states $t_k', t_k'',t_k'''$ such that 
 $t_i \gentaustar{x,R,t_0} t_k' \pijl{a} t_k'' \gentaustar{y,R,t_0'} t_k'''$ and $t_0' \rel{R} t_k'''$,
\end{itemize}
thus concluding the proof. \qedhere
\end{itemize}
\end{proof}
Using the stuttering property, we find
that we may rephrase $(x,y)$-generic bisimilarity as per the following theorem.
Essentially, this says we may require the intermediate states along the stuttering paths of the transfer condition to be related without changing the definition.
\begin{thm}\label{th:alternative_gen_bisim}
 For every $x,y \in \{o,b\}$ and $\bar{s},\bar{t} \in S$, we have
 $\bar{s} \gbbbisim \bar{t}$ iff $\bar{s} \rel{R} \bar{t}$ 
for some symmetric relation $R \subseteq S \times
 S$ satisfying that whenever $s \rel{R} t$ and $s \pijl{a} s'$ 
 we have:
\begin{itemize}
 \item $a = \tau$ and $s' \rel{R} t$, or 
 \item there exist states $t', t_1,t_2$ such that $t
 \sgentaustar{x,R,s} t_1 \pijl{a} t_2 \sgentaustar{y,R,s'} t'$ and $s' \rel{R} t'$.  
\end{itemize}
\end{thm}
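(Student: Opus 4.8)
The plan is to prove the two implications separately, and the natural witness relation for the non-trivial direction is $\gbbbisim$ itself. The implication from right to left is routine: suppose $\bar{s} \rel{R} \bar{t}$ for a symmetric $R$ satisfying the strengthened transfer condition. It suffices to show that $R$ is an ordinary $(x,y)$-generic bisimulation in the sense of Definition~\ref{def:generic_bisimulation}, for then $\bar{s} \gbbbisim \bar{t}$ by definition. This is immediate, since whenever $s \rel{R} t$ and $s \pijl{a} s'$, the strengthened condition supplies either $a = \tau$ with $s' \rel{R} t$, or states $t', t_1, t_2$ with $t \sgentaustar{x,R,s} t_1 \pijl{a} t_2 \sgentaustar{y,R,s'} t'$ and $s' \rel{R} t'$; and because $s \sgentaustar{x,R,t} s'$ implies $s \gentaustar{x,R,t} s'$ (as observed just before Definition~\ref{def:generic_bisimulation}), the latter is exactly the second clause of that definition.

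For the implication from left to right I would take $R := \gbbbisim$. This relation is symmetric (indeed an equivalence relation), and it is itself an $(x,y)$-generic bisimulation, since the union of a family of $(x,y)$-generic bisimulations is again one: the general weak transitions $\gentaustar{x,\cdot,s}$ are monotone in their relation argument, so the transfer condition passes to the union. It therefore remains to verify the strengthened transfer condition for $R = \gbbbisim$. So let $s \gbbbisim t$ and $s \pijl{a} s'$. The ordinary transfer condition gives either $a = \tau$ with $s' \gbbbisim t$ — in which case we are done, the first clause being unchanged — or states $t', t_1, t_2$ with $t \gentaustar{x,R,s} t_1 \pijl{a} t_2 \gentaustar{y,R,s'} t'$ and $s' \gbbbisim t'$. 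All that is left is to promote $t \gentaustar{x,R,s} t_1$ to $t \sgentaustar{x,R,s} t_1$, and likewise $t_2 \gentaustar{y,R,s'} t'$ to $t_2 \sgentaustar{y,R,s'} t'$.

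This promotion is the crux, and it is precisely where Lemma~\ref{lem:gen_bisim_stuttering_property} enters. When $x = o$ there is nothing to do, since $\gentaustar{o,R,s}$ and $\sgentaustar{o,R,s}$ coincide. When $x = b$, we have $t \taustar t_1$ together with $s \gbbbisim t$ and $s \gbbbisim t_1$; fix any witnessing silent path $t = u_0 \pijl{\tau} u_1 \pijl{\tau} \cdots \pijl{\tau} u_n = t_1$. By transitivity of $\gbbbisim$ we get $t \gbbbisim t_1$, so the stuttering property (Lemma~\ref{lem:gen_bisim_stuttering_property}) applied to this path yields $u_j \gbbbisim t$, and hence $s \gbbbisim u_j$, for every $j$; this same path then witnesses $t \sgentaustar{b,R,s} t_1$. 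The symmetric argument, starting from $s' \gbbbisim t_2$ and $s' \gbbbisim t'$ and applying the stuttering property to a silent path from $t_2$ to $t'$, gives $t_2 \sgentaustar{b,R,s'} t'$. The hard part is really just the bookkeeping: one must notice that the endpoints $t$ and $t_1$ of the silent path are $\gbbbisim$-related not directly but through their common partner $s$ (and similarly $t_2$ and $t'$ through $s'$), after which Lemma~\ref{lem:gen_bisim_stuttering_property} does all the work and the remainder is mechanical.
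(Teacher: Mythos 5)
Your proposal is correct and follows essentially the same route as the paper: the paper's proof is a one-liner observing that the strengthened conditions on the intermediate states follow from the ordinary ones via the stuttering property (Lemma~\ref{lem:gen_bisim_stuttering_property}), which is exactly the promotion step you carry out in detail with $R = \gbbbisim$. Your write-up merely makes explicit the bookkeeping (symmetry and transitivity of $\gbbbisim$ to relate the endpoints of the silent paths before invoking the lemma) that the paper leaves implicit.
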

\begin{proof}
Immediate. Even if the original definition of the parameterised bisimulations used 
$\gentaustar{x,R,s}$ and $\gentaustar{y,R,s'}$ 
instead of 
$\sgentaustar{x,R,s}$ and $\sgentaustar{y,R,s'}$, 
the additional conditions added by the latter relations follow from
those imposed by the former ones per
Lemma~\ref{lem:gen_bisim_stuttering_property}.
\end{proof}

As a consequence of the above theorem, we may henceforth and without
loss of generality, use a stronger definition for $(x,y)$-generic bisimilarity
than the one stated in Definition~\ref{def:generic_bisimulation},
\viz\ the one implied by Theorem~\ref{th:alternative_gen_bisim}.


\section{Generic Bisimulation Games}
\label{sec:generic_games}

As we illustrated in the previous section, branching bisimulation
is one of four bisimulations for providing a weak semantics to
systems. Rather than defining a dedicated game for each of these
bisimulations, we show that these four relations are captured by
instances of a generic bisimulation game, which we introduce in
Section~\ref{sec:generic_game_def}. In particular, in
Section~\ref{sec:generic_game_sound_complete} we show that this
game is sound and complete for $(x,y)$-generic bisimilarity.  Moreover, in
Section~\ref{sec:generic_specific} we shall see that the 
branching bisimulation game that we defined
in Section~\ref{sec:games} specialises our generic
bisimulation game. Finally, in Section~\ref{sec:variations}
we discuss two variations on the game for characterising
$(x,y)$-generic bisimilarity.

\subsection{A Generic Bisimulation Game}
\label{sec:generic_game_def}

As we explained in Section~\ref{sec:games}, \spoiler's and \duplicator's
role in the games we consider is to \emph{refute} and \emph{prove},
respectively, that states in an LTS can be related.  In general,
given some game characterising a relation on states, we can
obtain a weaker relation by either restricting \spoiler's capabilities,
or by offering more liberal options to \duplicator.  The game we
present in this section essentially does the latter: it will---depending
on which relation we wish to capture---give \duplicator fewer or
more options to respond to \spoiler's challenges.
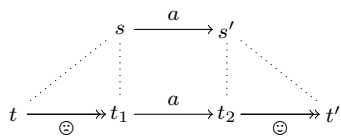
\begin{figure}[h]
\centering
\begin{tikzpicture}[node distance=40pt,inner sep = 1pt, minimum size=10pt,initial text={}]
\tikzstyle{state}=[draw=none]
\scriptsize
\node[state] (t) {$t$};
\node[state,right of=t] (t1) {$t_1$};
\node[state,right of=t1] (t2) {$t_2$};
\node[state,right of=t2] (t') {$t'$};

\node[state,above of=t1,yshift=-8pt] (s)  {$s$};
\node[state,right of=s] (s')  {$s'$};

\path[->]
  (t) edge node [below] {$\frownie$} (t1)
  (t) edge [shorten >=2pt](t1)
  (t1) edge node[above] {$a$} (t2)
  (t2) edge node [below] {$\smiley$} (t')
  (t2) edge [shorten >=2pt](t');

\path[->]
  (s) edge node[above] {$a$} (s');

\path[dotted]
  (s) edge (t)
  (s) edge (t1)
  (s') edge (t2)
  (s') edge (t')
;
\end{tikzpicture}
\caption{Diagram that illustrates the different scenarios that need to
be considered when `weakly' matching a challenge by \spoiler}
\label{fig:diagram}
\end{figure}
Before we move to defining our generic bisimulation game, consider
the diagram above (see also Figure~\ref{fig:generic_picture}). It
illustrates that a `challenge' $s \pijl{a} s'$, posed by \spoiler, is
to be matched `weakly' through $t \taustar t_1 \pijl{a} t_2 \taustar
t'$. Intuitively, we can capture this weak match by recording the progress
made by \duplicator using a `pebble' which \duplicator can push along
the $\tau$ or $a$-edges in the transition system.
Moreover, we need to record whether \duplicator is pushing the
pebble along the `$\frownie$-decorated' $\tau$-path or the
`$\smiley$-decorated' $\tau$-path if we wish to prevent her from
meeting \spoiler's challenge by taking multiple (or no) $a$-transitions.
The relation one intuitively obtains in this way characterises
weak bisimilarity. 

On the other hand, we must disallow \duplicator to merely push the
pebble when we wish to characterise branching bisimilarity. Instead,
she must move to a configuration with an updated position so that
\spoiler can renew her challenge for this new position. Indeed, since
branching bisimilarity has the stuttering property, \duplicator
should be able to move along $\tau$-paths in which every state is
related to the state from which \spoiler posed her challenge. A
similar observation can be made for $\eta$-bisimilarity: in this
case, \duplicator must update her position while traversing the
$\tau$-path to the state from which a matching $a$-action can be
performed, while, after the match, she may move `freely' by pushing
the pebble to a state that matches \spoiler's target state.  In
delay bisimilarity, \duplicator can first push the pebble along the
$\frownie$-decorated $\tau$-path to a state from which a matching
action can be performed, but she must update the position
immediately upon choosing this matching transition.

This means that we can tweak \duplicator's powers by enabling or
disabling those rules that would allow her to merely push pebbles.
That is, in case of branching bisimilarity, she is never allowed
to use such rules, whereas for weak bisimilarity, she can always
employ such rules. For the other two relations, we can disable
the rules associated with pushing the pebble essentially by
relying on whether \duplicator is on a $\frownie$ or $\smiley$
$\tau$-path. Below, we make these ideas more precise.\medskip


\begin{defi}
\label{def:generalised game} For each $E \subseteq \{ \smiley, \frownie \} $,
a \emph{generic bisimulation (gb) game} (also noted as $E$-gb
game when we want to stress the corresponding set $E$) 
on an LTS $L$ is played by players \spoiler and \duplicator
on an arena  of \spoiler-owned configurations $\sconf{(s,t),c,m,r}$
and \duplicator-owned configurations $\dconf{(s,t),c,m,r}$, where
$(s,t) \in \pos$, $c \in \chaldagger$, $m \in \mov$ and $r \in \rew$, and $\pos$, $\chaldagger$ and $\rew$ are as before, and $\mov = (S \times \{\frownie,\smiley\}) \cup \{\dagger\}$ is the set of \emph{partial matches}.
We write $\conf{(s,t),c,m,r}$ if we do not care about
the owner of the configuration. \spoiler's and \duplicator's possible moves
are given by the following rules: 
\begin{itemize}
\item From a configuration $\sconf{(s,t),c,m,r}$, \spoiler can:

  \begin{enumerate}
    \item \label{gbbg_move_unchanged} move to $\dconf{(s,t),c,m,*}$ if $c \neq \dagger$, or

    \item for some $s \pijl{a} s'$, move to either:
       \begin{enumerate}
   \item\label{gbbg_spoiler_fresh_challenge} $\dconf{(s,t),(a,s'),(t,\frownie),*}$, if $c = \dagger$, or
\item\label{gbbg_spoiler_renew_challenge}
       $\dconf{(s,t),(a,s'),(t,\frownie),\checkmark}$, if $c \neq (a,s')$
      \end{enumerate}
    \item\label{gbbg_spoiler_switches} for some $t \pijl{a} t'$, move to:
$\dconf{(t,s),(a,t'),(s,\frownie),\checkmark}$.

  \end{enumerate} 
\item[]

\item From a configuration $\dconf{(u,v),(a,u'),(\bar{v},f),r}$, \duplicator can:
\begin{enumerate}

 \item \label{gbbg_terminate} 
move to $\sconf{(u',\bar{v}),\dagger,\dagger,\checkmark}$ when $a = \tau$, or

 \item if $f = \frownie$ and $\bar{v} \pijl{a} v'$, move to one of the following:
\begin{enumerate}
   \item \label{gbbg_match_and_move}  $\sconf{(u',v'),(a,u'),(v',\smiley),*}$, in any case, or
   \item \label{gbbg_match_move_and_terminate} $\sconf{(u',v'),\dagger,\dagger,\checkmark}$, in any case, or
   \item \label{gbbg_match_and_postpone} $\sconf{(u,v),(a,u'),(v',\smiley),*}$, only if  $\smiley \in E$
 \end{enumerate} 
     
 \item for some $\bar{v} \pijl{\tau} v'$, move to one of the following:
 \begin{enumerate}
   \item \label{gbbg_push_pebble_and_move} $\sconf{(u,v'),(a,u'),(v',f),*}$,  in any case, or
   \item \label{gbbg_push_pebble_and_terminate} $\sconf{(u',v'),\dagger,\dagger,\checkmark}$, only if $f = \smiley$
   \item \label{gbbg_push_pebble_and_wait} $\sconf{(u,v),(a,u'),(v',f),*}$, only if  $f \in E$
 \end{enumerate} 

\end{enumerate}

\end{itemize}
\duplicator wins a finite play starting in a configuration
$\conf{(s,t),c,m,r}$ if \spoiler gets stuck, and she wins an infinite
play if the play yields infinitely many $\checkmark$ rewards.  All
other plays are won by \spoiler.  A player wins a configuration
when she has a strategy that wins all plays starting in it.
\emph{Full} plays of the game start in a configuration
$\sconf{(s,t),\dagger,\dagger,*}$; we say that \duplicator wins the
game for a position $(s,t)$, if the configuration
$\sconf{(s,t),\dagger,\dagger,*}$ is won by her.  In this case, we
write $s \gbbg_E t$. Otherwise, we say that \spoiler wins that game.
\end{defi}
Note that the moves in the game, like in the branching bisimulation
game of Definition~\ref{def:branching_bisimulation_game}, alternate
between \spoiler and \duplicator. \spoiler typically issues a challenge,
which, in case of rule~\eqref{gbbg_move_unchanged} consists of a previously
coined challenge, in case of 
rule~\eqref{gbbg_spoiler_fresh_challenge} is a fresh challenge, in
case of rule~\eqref{gbbg_spoiler_renew_challenge} overrides a pre-existing
challenge,
and in case of rule~\eqref{gbbg_spoiler_switches}, is a fresh challenge but
issued from the second state in the position of the configuration. Intuitively,
when \duplicator is to respond from a configuration, she will try to
`weakly' match the challenge using any of her rules.  
Using rule~\eqref{gbbg_terminate}, she can decide
to instantly meet a $\tau$-challenge and leave the task of disproving that the states
in the resulting position are not related, to \spoiler. Alternatively, 
\duplicator can also face a challenge by making (at least) one (explicit) move.
The actual matching
(or cancelling) of \spoiler's challenge is captured in rules~\eqref{gbbg_match_and_move}-\eqref{gbbg_match_and_postpone}.
Additionally, \duplicator can 
use rules~\eqref{gbbg_push_pebble_and_move}-\eqref{gbbg_push_pebble_and_wait} to 
`push' the pebble from the position on the partial match closer to a state
in which she can finally match the challenge as above (when the pebble is still on a
$\frownie$-decorated path in terms of Figure~\ref{fig:diagram}), or she
can continue to `push' the pebble closer to the final position
where she can finally cancel the challenge. Note that in rules~\eqref{gbbg_match_and_move}, \eqref{gbbg_match_and_postpone}, \eqref{gbbg_push_pebble_and_move}, and \eqref{gbbg_push_pebble_and_wait} the state $v'$ at the match reflects the progress of \spoiler until the
completion of the matching that cancels it. The matching is completed by
updating the position in Duplicator's rules~\eqref{gbbg_terminate}, \eqref{gbbg_match_move_and_terminate},
and \eqref{gbbg_push_pebble_and_terminate}.
\medskip

Let the set $E(x,y)$ be the smallest set such that $\frownie \in
E(o,y)$ and $\smiley \in E(x,o)$ for all $x,y \in \{o,b\}$.  Using
this notation, we can instantiate our \gb-games
to games for the four instances of $(x,y)$-generic bisimulation.
That is, we have:
\begin{itemize}
  \item the branching bisimulation game, when $E(x,y) = \emptyset$
  \item the delay bisimulation game, when $E(x,y) = \{ \frownie \}$
  \item the $\eta$-bisimulation game, when $E(x,y) = \{ \smiley \}$
  \item the weak bisimulation game, when $E(x,y) = \{ \smiley, \frownie \}$.
\end{itemize}

Therefore, the \emph{faces} $\frownie$ and $\smiley$, included in
each of the sets $E(x,y)$, state exactly the degree of freedom
of the corresponding bisimulation relations. They govern \emph{when}
(before and/or after the matching $a$ actions) we do not have new bisimulation
obligations that would generate new \emph{positions} to check.
That means that the \emph{fewer} faces we have, the \emph{finer} the bisimulation relation
we get.
\medskip

To be exact, our generic game could be presented as a collection of 
(four) games in a single family. The `unity' of this family is supported by the single
(parameterised) definition of its elements. This allows us to claim the following 
correspondence between our
\gb-game and $(x,y)$-generic bisimilarity.
\begin{restatable}{thm}{correspondencetheorem}
\label{thm:correspondence}
For all $x,y \in \{o,b\}$, we have $s \gbbbisim t$
iff $s \gbbg_{E(x,y)} t$.
\end{restatable}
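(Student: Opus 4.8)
The plan is to prove the two directions separately, exploiting the reformulation of $(x,y)$-generic bisimilarity provided by Theorem~\ref{th:alternative_gen_bisim}, which lets us assume the witnessing relation $R$ has the strengthened transfer condition using $\sgentaustar{x,R,s}$ and $\sgentaustar{y,R,s'}$; this is precisely what makes the relation between $R$ and \duplicator's strategy tight. For the implication $s \gbbbisim t \Rightarrow s \gbbg_{E(x,y)} t$, I would fix such an $R$ with $s \rel R t$ and construct a winning strategy for \duplicator from the configuration $\sconf{(s,t),\dagger,\dagger,*}$. The key invariant to maintain along any play consistent with this strategy is that in every configuration $\conf{(u,v),c,m,r}$ reached, one has $u \rel R v$ whenever $c = \dagger$, and, when $c = (a,u')$ and $m = (\bar v, f)$, that $\bar v$ lies on a genuine matching path for the challenge from $u$ witnessed by $R$ — i.e. if $f = \frownie$ then $v \sgentaustar{x,R,u} \bar v$ and there is a continuation $\bar v \pijl{a} t_2 \sgentaustar{y,R,u'} t'$ with $u' \rel R t'$, and if $f = \smiley$ then the matching transition has already been taken and $\bar v$ lies on the $\sgentaustar{y,R,u'}$-tail. \duplicator's moves are then dictated by walking along the witnessing path one $\tau$-step (rules~\eqref{gbbg_push_pebble_and_move}/\eqref{gbbg_push_pebble_and_wait}, the latter only licensed when $f \in E(x,y)$, which is exactly when $R$'s definition does \emph{not} force the intermediate state to be related — so the invariant survives) or performing the matching $a$-step (rule~\eqref{gbbg_match_and_move} etc.), and finally cancelling via rule~\eqref{gbbg_terminate}/\eqref{gbbg_match_move_and_terminate}/\eqref{gbbg_push_pebble_and_terminate} once the position is updated to an $R$-related pair. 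The point that every infinite consistent play yields infinitely many $\checkmark$'s follows because each challenge is matched in finitely many steps (the $\tau$-walks along the witnessing path are finite) unless \spoiler abandons it, and either event produces a $\checkmark$; when $c = \dagger$ forever after some point \spoiler is forced to issue fresh challenges (rule~\eqref{gbbg_move_unchanged} is unavailable), each generating progress.

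For the converse, $s \gbbg_{E(x,y)} t \Rightarrow s \gbbbisim t$, I would take a positional winning strategy $\sigma$ for \duplicator from $\sconf{(s,t),\dagger,\dagger,*}$ — such a strategy exists by determinacy and positional determinacy of these $\omega$-regular games, as recalled in Section~\ref{sec:games} — and define
\[
R = \{ (u,v) \mid \dconf{(u,v),\dagger,\dagger,\checkmark}\text{ or }\sconf{(u,v),\dagger,\dagger,*}\text{ is won by \duplicator via }\sigma \text{ from the initial configuration's reachable part} \},
\]
more precisely the set of positions $(u,v)$ such that \duplicator wins $\sconf{(u,v),\dagger,\dagger,*}$, symmetrised. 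I would then verify that $R$ (or its symmetric closure) is an $(x,y)$-generic bisimulation in the sense of Theorem~\ref{th:alternative_gen_bisim}: given $u \rel R v$ and $u \pijl a u'$, let \spoiler play rule~\eqref{gbbg_spoiler_fresh_challenge} with this transition; \duplicator's winning responses, tracked until the challenge is cancelled, trace out exactly a path $v \sgentaustar{x,R,u} t_1 \pijl a t_2 \sgentaustar{y,R,u'} t'$ with $u' \rel R t'$, where the "$\frownie$-segment" uses rules~\eqref{gbbg_push_pebble_and_move}/\eqref{gbbg_push_pebble_and_wait} and the "$\smiley$-segment" the corresponding ones after the match. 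Here one must check that the intermediate states on the $\frownie$-segment are $R$-related to $u$ precisely when $\frownie \notin E(x,y)$ (i.e. $x = b$): in that regime rule~\eqref{gbbg_push_pebble_and_wait} is forbidden, so \duplicator must use rule~\eqref{gbbg_push_pebble_and_move}, after which \spoiler can replay rule~\eqref{gbbg_move_unchanged} then switch to a fresh challenge from the new position — winning for \duplicator forces that new position into $R$. (When $a = \tau$ one also has the option of \duplicator's rule~\eqref{gbbg_terminate} immediately, giving the first clause $u' \rel R v$.) The reward bookkeeping guarantees \spoiler cannot force \duplicator to procrastinate forever: since a winning \duplicator must produce infinitely many $\checkmark$'s, every challenge issued is eventually cancelled or abandoned, so the extracted path is finite.

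The main obstacle I expect is the careful alignment, on the $\frownie$- and $\smiley$-segments separately, between (a) which of \duplicator's pebble-pushing rules are enabled (governed by membership of $\frownie$ or $\smiley$ in $E(x,y)$) and (b) which intermediate states the definition of $\sgentaustar{x,R,\cdot}$ requires to be related — the whole design is arranged so that these two coincide, but making the bookkeeping airtight, especially handling \spoiler's rule~\eqref{gbbg_spoiler_switches} (switching sides) and rule~\eqref{gbbg_spoiler_renew_challenge} (overriding a challenge, which hands \duplicator a $\checkmark$ and effectively resets the match with a possibly new position not yet known to be in $R$), requires threading a somewhat elaborate invariant through all of \spoiler's and \duplicator's move cases. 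A secondary subtlety is that in the forward direction one needs the \emph{strengthened} transfer condition (Theorem~\ref{th:alternative_gen_bisim}) so that the $\frownie$-walk \duplicator takes has all its intermediate states related to $u$ when $x=b$ — without that, rule~\eqref{gbbg_push_pebble_and_move}'s update of the position to $(u,v')$ would not be justified. I would present the two directions as two lemmas and obtain Theorem~\ref{thm:correspondence} as their immediate combination.
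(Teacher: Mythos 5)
Your proposal follows essentially the same route as the paper: the forward direction is the paper's completeness lemma (Lemma~\ref{lem:gbbbisim_is_gbbg}), which maintains precisely the invariant you describe (there called \emph{good} configurations, built on the strengthened transfer condition of Theorem~\ref{th:alternative_gen_bisim}), and the converse is the soundness lemma (Lemma~\ref{lem:game_is_generic_branching_bisim}), which defines $R$ as the set of positions $(u,v)$ for which \duplicator wins $\sconf{(u,v),\dagger,\dagger,*}$ and extracts the witnessing weak transition from the play in which \spoiler poses the challenge once and thereafter passes the configuration unchanged. The bookkeeping points you flag are handled exactly as you anticipate, with the fact that winning intermediate configurations force their positions into $R$ supplied by Propositions~\ref{prop:invariant_r} and~\ref{prop:eqconfigurations}.
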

The proof of this claim is discussed in detail in the next section. We
first illustrate the definition using a small example.
\begin{exa}\label{ex:game_illustrated}
Consider the transition system depicted to the left below. Note that
the transition system is essentially
the result of merging all deadlocking and `determined' states  of the two transition
systems depicted in Figure~\ref{fig:example_weak_vs_branching}. That is, 
state A represents state 0 of Figure~\ref{fig:example_weak_vs_branching},
state B represents
the deadlock states 1, 2, 4, 7 and 8 of Figure~\ref{fig:example_weak_vs_branching},
state C represents state 5 of Figure~\ref{fig:example_weak_vs_branching} and 
state D represents states 3 and 6 of Figure~\ref{fig:example_weak_vs_branching}.

\begin{center}
\begin{minipage}{.35\textwidth}
  \begin{tikzpicture}
  \centering
    [node distance=55pt,inner sep = 1pt,minimum size=10pt]

     \tikzstyle{state}=[circle, draw=none,node distance=35pt]
     \tikzstyle{transition}=[->,>=stealth']

      \node[state] (A) {\tiny A};
      \node[state] [right of=A] (B) {\tiny B};
      \node[state] [right of=B] (C) {\tiny C};
      \node[state] [below of=B] (D) {\tiny D};
      \draw [->] (A) edge[bend left] node[above] {\scriptsize $a$} (B);
      \draw [->] (A) edge[bend right] node[below] {\scriptsize $b$} (B);
      \draw [->] (A) edge[bend right] node[left] {\scriptsize $\tau$} (D);
      \draw [->] (C) edge node[above] {\scriptsize $b$} (B);
      \draw [->] (C) edge[bend left] node[right] {\scriptsize $\tau$} (D);
      \draw [->] (D) edge node[right] {\scriptsize $a$} (B);
  \end{tikzpicture}
\end{minipage}
\begin{minipage}{.35\textwidth}
\centering
  \begin{tikzpicture}
    [node distance=25,inner sep = 1pt,minimum size=5pt,initial text={}]

     \tikzstyle{state}=[rectangle, draw=none]
     \tikzstyle{transition}=[->,>=stealth']

      \node[initial, state, fill=gray!20] (sACdds) {\tiny $\sconf{(A,C),\dagger,\dagger,*}$};
      \node[right of=sACdds, state,xshift=75pt] (ACaBCfs) {\tiny $\dconf{(A,C),(a,B),(C,\frownie),*}$};
      \node[below of=ACaBCfs, state,yshift=-10pt,fill=gray!20] (sADaBDfs) {\tiny $\sconf{(A,D),(a,B),(D,\frownie),*}$};
      \node[left of=sADaBDfs, state,xshift=-95pt] (ADbBDfc) {\tiny $\dconf{(A,D),(b,B),(D,\frownie),\checkmark}$};

      \draw [->] (sACdds) edge node[shape=circle,above,inner sep=2pt] {\tiny\ref{gbbg_spoiler_fresh_challenge}} (ACaBCfs);
      \draw [->] (ACaBCfs) edge node[shape=circle,right,inner sep=2pt] {\tiny\ref{gbbg_push_pebble_and_move}} (sADaBDfs);
      \draw [->] (sADaBDfs) edge node[shape=circle,above,inner sep=2pt] {\tiny\ref{gbbg_spoiler_renew_challenge}} (ADbBDfc);

  \end{tikzpicture}
\vspace{5pt}
\end{minipage}
\end{center}

The states $A$ and $C$ are delay bisimilar and weak bisimilar, but
they are not branching bisimilar or $\eta$-bisimilar as we explained
earlier.  \spoiler's strategy to win the branching bisimulation
game, which we explained on page~\pageref{weak_vs_branching_example},
can be replayed in the \gb-game, as illustrated
by the (solitaire) game graph depicted next to the transition system:
for \spoiler-owned configurations (marked grey), \spoiler uses her
rule~\eqref{gbbg_spoiler_fresh_challenge}, followed by
rule~\eqref{gbbg_spoiler_renew_challenge}, and \duplicator can only
use rule~\eqref{gbbg_push_pebble_and_move}. Notice that \spoiler's strategy
is winning in both the $\gbbg_{E(b,b)}$~game
and the $\gbbg_{E(b,o)}$~game; in both cases, the same (solitaire)
game graph is obtained.

\end{exa}

We elaborate on \duplicator's winning strategy in the $\gbbg_{E(o,o)}$ game
in Example~\ref{ex:game_weak_bisimilar} on page~\pageref{ex:game_weak_bisimilar}.

\subsection{Soundness and Completeness}
\label{sec:generic_game_sound_complete}

\newcommand{\good}{\emph{good}\xspace}
For our completeness result, we essentially need to show that any
pair of states $s,t$ related through a generic bisimulation, yields
a configuration $\sconf{(s,t),\dagger,\dagger,*}$ in our generic
bisimulation game that is won by \duplicator.  A play generally
passes through configurations that have a challenge and partial
match different from $\dagger$, so in our proof of completeness,
we must deal with such configurations, too.  More specifically, our
completeness proof deals with configurations of a particular shape;
we call such configurations \good.

\begin{defi}
Let $\conf{(s,t), c, m, r}$ be an arbitrary configuration in a
\gb-game. We say this configuration is $(x,y)$-\good
with respect to a given relation $R \subseteq S \times S$ iff $s
\rel{R} t$, $m = \dagger$ implies $c = \dagger$, and for $c = (a,s')$
and $m = (t_1,f)$, either $a = \tau$ and $s' \rel{R} t_1$, or there
is some $t'$ such that $s' \rel{R} t'$ and:

\begin{itemize}

\item if $f = \smiley$ then $t_1 \sgentaustar{y,R,s'} t'$;

\item if $f = \frownie$ then
there are $t_2,t_3$ so that $t_1 \sgentaustar{x,R,s} t_2 \pijl{a} t_3 \sgentaustar{y,R,s'} t'$.

\end{itemize}
\end{defi}
When the relation $R$ and both $x$ and $y$ are clear from the context,
we simply write that a configuration is \good rather than
$(x,y)$-\good for $R$.

\newcommand{\strategy}{\varrho\xspace}  
\newcommand{\nxt}[1]{\ensuremath{\textsf{next}(#1)}}

\begin{lem}[Completeness]
\label{lem:gbbbisim_is_gbbg}
For all $x,y \in \{o,b\}$, whenever we have $s \gbbbisim t$, we also have $s \gbbg_{E(x,y)} t$.

\end{lem}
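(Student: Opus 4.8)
The plan is to prove completeness by exhibiting, for each pair $s \gbbbisim t$, a winning strategy for \duplicator from the full configuration $\sconf{(s,t),\dagger,\dagger,*}$, and more generally from every \good configuration for the relation $R$ witnessing $s \gbbbisim t$. By Theorem~\ref{th:alternative_gen_bisim}, I may assume $R$ is a symmetric relation satisfying the \emph{strengthened} transfer condition using $\sgentaustar{x,R,s}$ and $\sgentaustar{y,R,s'}$ (this is exactly why the notion of \good is phrased with $\sgentaustar{}{}$ rather than $\gentaustar{}{}$, and it is what makes the intermediate-state bookkeeping work). The key invariant I would maintain is: \emph{whenever the token is in a \good configuration owned by \duplicator, she can move to a \good configuration owned by \spoiler; and from any \good \spoiler-owned configuration, every \spoiler move lands in a \good \duplicator-owned configuration.} Combined with a progress argument on the $\checkmark$-rewards, this yields the result.

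\textbf{\duplicator's strategy from a \good \duplicator-owned configuration.} Consider $\dconf{(u,v),(a,u'),(\bar v,f),r}$ that is $(x,y)$-\good for $R$. If $a = \tau$ and $u' \rel{R} \bar v$, \duplicator plays rule~\eqref{gbbg_terminate} to $\sconf{(u',\bar v),\dagger,\dagger,\checkmark}$, which is trivially \good (no challenge) and awards a $\checkmark$. Otherwise goodness gives a $t'$ with $u' \rel{R} t'$ and a witnessing weak-transition decomposition through $\bar v$. If $f = \smiley$, then $\bar v \sgentaustar{y,R,u'} t'$; if this is a trivial path ($\bar v = t'$) \duplicator terminates via rule~\eqref{gbbg_push_pebble_and_terminate} (allowed since $f = \smiley$) to $\sconf{(u',t'),\dagger,\dagger,\checkmark}$; otherwise she takes the first $\tau$-step $\bar v \pijl{\tau} v'$ of the witness and plays rule~\eqref{gbbg_push_pebble_and_move} to $\sconf{(u,v'),(a,u'),(v',\smiley),*}$, which is again \good because the residual path $v' \sgentaustar{y,R,u'} t'$ still witnesses it and, crucially, $u \rel{R} v'$ holds by the intermediate-state property of $\sgentaustar{}{}$ (so the position component is still related). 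If $f = \frownie$, goodness gives $\bar v \sgentaustar{x,R,u} t_2 \pijl{a} t_3 \sgentaustar{y,R,u'} t'$; if $\bar v = t_2$, \duplicator uses one of rules~\eqref{gbbg_match_and_move}/\eqref{gbbg_match_move_and_terminate} to fire the $a$-transition (terminating if $t_3 = t'$, otherwise moving to $\sconf{(u',t_3),(a,u'),(t_3,\smiley),*}$, which is \good via the residual $\smiley$-path); if $\bar v \neq t_2$, she pushes along the first $\tau$-step using rule~\eqref{gbbg_push_pebble_and_move}, preserving goodness as before. The rules~\eqref{gbbg_match_and_postpone}, \eqref{gbbg_push_pebble_and_wait} requiring membership in $E$ are not needed by this strategy — \duplicator always makes genuine progress — which is fine for completeness.

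\textbf{\spoiler's moves from a \good \spoiler-owned configuration preserve goodness.} From $\sconf{(s,t),c,m,r}$ \good for $R$: rule~\eqref{gbbg_move_unchanged} leaves $c,m$ untouched, so goodness is immediate. For rule~\eqref{gbbg_spoiler_fresh_challenge}/\eqref{gbbg_spoiler_renew_challenge}, \spoiler picks $s \pijl{a} s'$ and moves to $\dconf{(s,t),(a,s'),(t,\frownie),\cdot}$; since $s \rel{R} t$, the transfer condition for $R$ applied to $s \pijl{a} s'$ supplies precisely the $t_1,t_2,t',\dots$ needed to certify that this new configuration is $(x,y)$-\good with face $\frownie$ starting from $t$. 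For rule~\eqref{gbbg_spoiler_switches}, \spoiler picks $t \pijl{a} t'$ and moves to $\dconf{(t,s),(a,t'),(s,\frownie),\checkmark}$; here I use symmetry of $R$ (so $t \rel{R} s$) and the transfer condition for the transition $t \pijl{a} t'$ to get goodness of the swapped configuration.

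\textbf{The progress/winning argument, and the main obstacle.} Finite plays are won by \duplicator because whenever it is \duplicator's turn in a \good configuration she has a legal move (shown above), so she never gets stuck; if \spoiler gets stuck, \duplicator wins by definition. For infinite plays I must show infinitely many $\checkmark$'s are collected. The \textbf{main obstacle} is ruling out that \duplicator is forced to push the pebble forever along a $\tau$-path without ever reaching $t_2$ or $t'$ — but this cannot happen: the witnessing $\sgentaustar{}{}$-paths are \emph{finite} sequences, so after finitely many of \duplicator's push-moves she reaches a configuration where she either fires the $a$-transition or terminates, and each such completing move (rules~\eqref{gbbg_terminate}, \eqref{gbbg_match_and_move}–\eqref{gbbg_match_move_and_terminate}, \eqref{gbbg_push_pebble_and_terminate}), as well as \spoiler's switching/renewing rules, awards a $\checkmark$. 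The subtle point is that \spoiler could keep playing rule~\eqref{gbbg_move_unchanged} (no $\checkmark$, challenge unchanged) to stall — but \duplicator then responds with a genuine pebble-push or completing move, so $\checkmark$'s still accrue; and \spoiler cannot play rule~\eqref{gbbg_move_unchanged} from a $\dagger$-challenge, so after \duplicator terminates to a $\dagger$-configuration \spoiler must issue a fresh challenge, resetting the cycle. Formalising "\duplicator's strategy guarantees a $\checkmark$ within a bounded number of rounds after each fresh challenge, hence infinitely many $\checkmark$'s overall" is the one place requiring care; I would phrase it as: along any play consistent with \duplicator's strategy, between any two consecutive $\dagger$-challenge configurations only finitely many rounds elapse and at least one $\checkmark$ is awarded, and infinitely often the play either revisits a $\dagger$-challenge or \spoiler renews/switches (each awarding $\checkmark$), so the $\checkmark$-count is infinite. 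This establishes $\sconf{(s,t),\dagger,\dagger,*}$ is won by \duplicator, i.e.\ $s \gbbg_{E(x,y)} t$.
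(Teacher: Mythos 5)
Your overall architecture---maintain \good{}ness of configurations as an invariant along a \duplicator strategy, show \spoiler cannot escape \good configurations, and then argue that $\checkmark$s accrue infinitely often---is exactly the paper's, and your treatment of \spoiler's moves and of the progress argument is essentially right. The genuine gap is your claim that \duplicator never needs the $E$-gated rules~\eqref{gbbg_match_and_postpone} and~\eqref{gbbg_push_pebble_and_wait}. When the relevant parameter is $o$, the witnessing path $\bar v \sgentaustar{o,R,u} t_2$ (resp.\ $t_3 \sgentaustar{o,R,u'} t'$) imposes \emph{no} condition on its intermediate states, so after a push with rule~\eqref{gbbg_push_pebble_and_move} the new position $(u,v')$ need not satisfy $u \rel{R} v'$, the configuration is not \good, and \spoiler can renew or switch the challenge from this unrelated position and win. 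Concretely, take $E(o,o)$ and the states $A,C$ of Example~\ref{ex:game_illustrated} (where $A \pijl{a} B$, $A \pijl{b} B$, $A \pijl{\tau} D$, $C \pijl{b} B$, $C \pijl{\tau} D$, $D \pijl{a} B$): we have $A \gbbbisimargs{o,o} C$, but if \spoiler challenges with $A \pijl{a} B$, your \duplicator pushes via rule~\eqref{gbbg_push_pebble_and_move} to $\sconf{(A,D),(a,B),(D,\frownie),*}$; \spoiler then renews with $A \pijl{b} B$ and \duplicator is stuck, since $D$ has no $b$- and no $\tau$-transition, so she loses the finite play. The paper's strategy avoids this precisely by playing rule~\eqref{gbbg_push_pebble_and_wait} (resp.~\eqref{gbbg_match_and_postpone}) whenever $\frownie \in E(x,y)$ (resp.\ $\smiley \in E(x,y)$), freezing the position at the last pair known to be in $R$ (compare Figure~\ref{fig:winning_duplicator}); your justification that ``$u \rel{R} v'$ holds by the intermediate-state property'' is valid only when the corresponding parameter is $b$---which is exactly the case in which the $E$-gated rule is unavailable and rule~\eqref{gbbg_push_pebble_and_move} is the correct choice. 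So the strategy you describe is a winning strategy only for the $(b,b)$ instance.

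A secondary slip: rule~\eqref{gbbg_push_pebble_and_terminate} requires \duplicator to actually fire a transition $\bar v \pijl{\tau} v'$, so she cannot use it to ``terminate in place'' when $\bar v = t'$ already. She must instead terminate while taking the \emph{last} $\tau$-step of the residual $\smiley$-path (as the paper does), taking care never to enter a configuration with a trivial residual $\smiley$-path and a non-$\tau$ challenge, since from such a configuration no terminating rule is applicable.
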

\begin{proof}
We design a partial strategy for \duplicator and show that this
strategy is winning for configurations that are \good with
respect to $\gbbbisim$. Since $\sconf{(s,t),\dagger,\dagger,*}$ is
\good follows from $s \gbbbisim t$, it follows that \duplicator wins 
the $\gbbg_{E(x,y)}$ game and we thus have $s \gbbg_{E(x,y)} t$.

Fix $x,y \in \{o,b\}$. For convenience, we write $R$ for the relation
$\gbbbisim$. We first show that from every configuration
that is \good with respect to $R$,
\spoiler can only move to a configuration that is again \good.
Suppose $\sconf{(s,t),c,m,r}$ is such a configuration.
Then \spoiler may play such that the play will continue in either:
\begin{itemize}
\item $\dconf{(s,t),c,m,*}$ if $c \neq \dagger$ and \spoiler passes
on the configuration unchanged using rule~\eqref{gbbg_move_unchanged};

\item $\dconf{(s,t),(a,s'),(t,\frownie),*}$ for some $s \pijl{a}
s'$ if $c = \dagger$ when \spoiler uses rule~\eqref{gbbg_spoiler_fresh_challenge};

\item $\dconf{(s,t),(a,s'),(t,\frownie),\checkmark}$ for some $s
\pijl{a} s'$ for which $c \neq (a,s')$ when \spoiler uses rule~\eqref{gbbg_spoiler_renew_challenge};

\item $\dconf{(t,s),(a,t'),(s,\frownie),\checkmark}$ for some $t
\pijl{a} t'$ if \spoiler uses rule~\eqref{gbbg_spoiler_switches}.

\end{itemize}
In the first case, the fact that $\dconf{(s,t),c,m,r}$ is \good
follows from the fact that $\sconf{(s,t),c,m,r}$ is
\good.  The configurations of the remaining three
cases can be seen to be \good by an
immediate application of the conditions stating when two states are related by
an $(x,y)$-generic bisimulation, as a consequence of the fact that $s \rel{R}
t$ is imposed by definition of \good configurations. Recall that $R$ here 
is (the largest) $(x,y)$-generic bisimulation.
Moreover, note that every \duplicator-owned configuration reached
by \spoiler---when starting from a configuration that is 
\good---has a non-$\dagger$ challenge; consequently, such configurations---being 
\good---also carry a non-$\dagger$ partial match. 
\medskip

We next focus on configurations of the form
$\dconf{(s,t),(a,s'),(\bar{t},f),r}$ that are \good and we
argue that \duplicator can always move to a \spoiler-owned configuration
that is again \good. Let $\dconf{(s,t),(a,s'),(\bar{t},f),r}$
be such a configuration.  We distinguish three (mutually exclusive)
cases.

\begin{enumerate}
\item Suppose $a = \tau$ and $s' \rel{R} \bar{t}$.
Then, using rule~\eqref{gbbg_terminate},
\duplicator plays to configuration
$\sconf{(s',\bar{t}),\dagger,\dagger,\checkmark}$. Clearly this configuration is \good.

\item Case $f = \frownie$, and $a \neq \tau$ or not $s' \rel{R}
\bar{t}$.  Since $\dconf{(s,t),(a,s'),(\bar{t},f),r}$ is \good and
$f = \frownie$, there are $t',t_2,t_3$ so that $\bar{t} \sgentaustar{x,R,s}
t_2 \pijl{a} t_3 \sgentaustar{y,R,s'} t'$ and $s' \rel{R} t'$. Fix
states $t', t_2$ and $t_3$ on a shortest path witnessing these
properties.  We distinguish three further cases:

\begin{itemize}
\item Case $t_2 = \bar{t}$ but $t_3 \neq t'$.  If $\smiley \notin E(x,y)$ then
\duplicator plays to configuration
$\sconf{(s',t_3),(a,s'),(t_3,\smiley),*}$ using
rule~\eqref{gbbg_match_and_move}; otherwise she moves to configuration
$\sconf{(s,t),(a,s'),(t_3,\smiley),*}$ using
rule~\eqref{gbbg_match_and_postpone}. Both configurations are \good.

\item Case $t_2 = \bar{t}$ and $t_3 = t'$. Then \duplicator plays
to configuration $\sconf{(s',t_3),\dagger,\dagger,\checkmark}$ using
rule~\eqref{gbbg_match_move_and_terminate}, which is again \good.

\item Case $t_2 \neq \bar{t}$. 
We consider the set of paths underlying $\bar{t} \sgentaustar{x,R,s} t_2$,
and in particular the \emph{shortest} paths in this set.
Now we take as  $t^*$ the $\tau$-successor of  $\bar{t}$
in any of these shortest paths. \duplicator
then can play to configuration $\sconf{(s,t^*),(a,s'),(t^*,\frownie),*}$
if $\frownie \notin E(x,y)$, using rule~\eqref{gbbg_push_pebble_and_move}, and
to configuration $\sconf{(s,t),(a,s'),(t^*,\frownie),*}$, otherwise
using rule~\eqref{gbbg_push_pebble_and_wait}. Again, both configurations
are \good.

\end{itemize}

\item Case $f = \smiley$ and not $s' \rel{R} \bar{t}$. Since
$\dconf{(s,t),(a,s'),(\bar{t},f),r}$ is \good and $f = \smiley$,
there is some $t'$ such that $\bar{t} \sgentaustar{y,R,s'} t'$ and
$s' \rel{R} t'$. Consider a $t'$ closest to $\bar{t}$ (with respect to
the lengths of the paths underlying $\sgentaustar{y,R,s'}$)
with the property $s' \rel{R} t'$.  At this point, we can draw two conclusions:
\begin{itemize}
\item $y = o$ since $y = b$ and $\bar{t} \sgentaustar{y,R,s'} t'$ 
contradicts not $s' \rel{R} \bar{t}$;
\item $\bar{t} \neq t'$ since $s' \rel{R} t'$ but not $s' \rel{R} \bar{t}$.

\end{itemize}
From the first, it follows that $\smiley \in E(x,y)$ so \duplicator
may play either rule~\eqref{gbbg_push_pebble_and_move} or rule~\eqref{gbbg_push_pebble_and_terminate}. 
From the second, it follows
that $\bar{t} \sgentaustar{y,R,s'} t'$ reduces to $\bar{t} \taustar
t'$ and since $\bar{t} \neq t'$ there must be some immediate
$\tau$-successor $t^*$ of $\bar{t}$ on the shortest $\tau$-path from $\bar{t}$
to $t'$. Then \duplicator plays to configuration
$\sconf{(s,t),(a,s'),(t^*,\smiley),*}$ if this $t^* \neq t'$, and
configuration $\sconf{(s',t'),\dagger,\dagger,\checkmark}$ otherwise. We observe that both
configurations are again \good.

\end{enumerate}
Since in all cases, \duplicator can move to a configuration that is
again \good, it follows that if \spoiler starts in a configuration
that is \good, no play can pass along configurations that are
not \good. Moreover, \duplicator never gets stuck playing her
strategy.

Finally, we argue that \duplicator wins all plays starting in
configurations that are \good. Let $\Pi$ be the set of all plays
consistent with \duplicator's strategy
that start in some \spoiler-owned configuration that is \good. Since
on plays in $\Pi$, \duplicator never gets stuck, all finite plays
in $\Pi$ are won by \duplicator. It thus suffices to prove that
\duplicator wins all infinite plays in $\Pi$.

Towards a contradiction, assume that $\pi \in \Pi$ is a play won
by \spoiler. That means that there are only finitely many configurations
on $\pi$ in which \duplicator earns a $\checkmark$. Consequently,
\duplicator plays rules~\eqref{gbbg_terminate},~\eqref{gbbg_match_move_and_terminate}
or~\eqref{gbbg_push_pebble_and_terminate} only finitely often.
Furthermore, there are only finitely many configurations on $\pi$
with a non-$\dagger$ challenge. This can be seen as follows: only
\spoiler-owned configurations can have $\dagger$ challenges and
\duplicator only produces a $\dagger$ challenge by applying
one of the rules~\eqref{gbbg_terminate},~\eqref{gbbg_match_move_and_terminate}
or~\eqref{gbbg_push_pebble_and_terminate}, which she does only finitely often.  Thus,
$\pi$ must have an infinite suffix without $\checkmark$ rewards and
only non-$\dagger$ challenges. On this suffix, \spoiler's only moves
consist of passing on the configuration unchanged to \duplicator.

Consider the $i$-th \duplicator-owned configuration
$\dconf{(s_i,t_i),(a,s'),(\bar{t}_i,f_i),*}$ on this suffix.  Since
this configuration is \good, there must be some $t'$ such
that $s' \rel{R} t'$ and there must be some finite path from
$\bar{t}_i$ to $t'$. Now \duplicator
can simply follow this path, as it was done in cases~2
and~3 in the analysis we conducted to prove that goodness
can be preserved by \duplicator. In particular,
\duplicator will apply either
rule~\eqref{gbbg_match_move_and_terminate} or~\eqref{gbbg_push_pebble_and_terminate}
when $t'$ is finally reached, but in such a case she earns a
$\checkmark$, which means a contradiction.  So $\pi$ is won by \duplicator.
\end{proof}
We illustrate \duplicator's winning strategy as constructed in the proof above
using a small example.
\begin{exa}
\label{ex:game_weak_bisimilar}
Reconsider the states $A$ and $C$
of Example~\ref{ex:game_illustrated}, depicted again in Figure~\ref{fig:winning_duplicator}
for convenience. 
As we noted before,
these states are not branching bisimilar, but they are weak bisimilar, 
and, hence, they are $(o,o)$-bisimilar.
Following the strategy explained in the proof of
Lemma~\ref{lem:gbbbisim_is_gbbg}, this is illustrated by the 
$E(o,o)$-bisimulation subgame starting in $\sconf{(A,C),\dagger,\dagger,*}$, 
depicted in Figure~\ref{fig:winning_duplicator}
when \duplicator plays according to the aforementioned strategy.
Note also that \spoiler can only employ
move~\ref{gbbg_spoiler_fresh_challenge} in those configurations in
which there is no pending challenge. She can, however, at all times
change her challenge by playing either
move~\ref{gbbg_spoiler_renew_challenge} or move~\ref{gbbg_spoiler_switches}.

\end{exa}

\begin{figure}[hbpt]
\begin{center}
\begin{minipage}[b]{.10\textwidth}
  \begin{tikzpicture}
  \centering
    [node distance=55pt,inner sep = 1pt,minimum size=10pt]

     \tikzstyle{state}=[circle, draw=none,node distance=35pt]
     \tikzstyle{transition}=[->,>=stealth']

      \node[state] (A) {\tiny A};
      \node[state] [right of=A] (B) {\tiny B};
      \node[state] [right of=B] (C) {\tiny C};
      \node[state] [below of=B] (D) {\tiny D};
      \draw [->] (A) edge[bend left] node[above] {\scriptsize $a$} (B);
      \draw [->] (A) edge[bend right] node[below] {\scriptsize $b$} (B);
      \draw [->] (A) edge[bend right] node[left] {\scriptsize $\tau$} (D);
      \draw [->] (C) edge node[above] {\scriptsize $b$} (B);
      \draw [->] (C) edge[bend left] node[right] {\scriptsize $\tau$} (D);
      \draw [->] (D) edge node[right] {\scriptsize $a$} (B);
      \draw ($ (A.north west) + (0pt,15pt) $) rectangle ($ (C.north east) + (0pt,-57pt) $);
  \end{tikzpicture}
\end{minipage}
\begin{minipage}[b]{.85\textwidth}
  \begin{tikzpicture}
    [node distance=25,inner sep = 1pt,minimum size=5pt,initial text={},every initial by arrow/.style={*->}]

     \tikzstyle{state}=[rectangle, draw=none]
     \tikzstyle{transition}=[->,>=stealth']

      \node[initial, state,fill=gray!20] (ACdds) {\tiny $\sconf{(A,C),\dagger,\dagger,*}$};
      \node[state, below of=ACdds,yshift=-20pt] (ACbBCfs) {\tiny $\dconf{(A,C),(b,B),(C,\frownie),*}$};
      \node[state, right of=ACdds,xshift=75pt] (ACaBCfs) {\tiny $\dconf{(A,C),(a,B),(C,\frownie),*}$};
      \node[state, below of=ACaBCfs,yshift=-20pt] (CAbBAfc) {\tiny $\dconf{(C,A),(b,B),(A,\frownie),\checkmark}$};
      \node[state, right of=ACaBCfs,xshift=95pt,fill=gray!20] (sACaBDfs) {\tiny $\sconf{(A,C),(a,B),(D,\frownie),*}$};
      \node[state, below of=sACaBDfs,yshift=-20pt] (ACaBDfs) {\tiny $\dconf{(A,C),(a,B),(D,\frownie),*}$};
      \node[state, below of=CAbBAfc,yshift=-20pt,fill=gray!20] (BBddc) {\tiny $\sconf{(B,B),\dagger,\dagger,\checkmark}$};
      \node[state, above of=ACdds,yshift=20pt,xshift=-50pt] (ACtDCfs) {\tiny $\dconf{(A,C),(\tau,D),(C,\frownie),*}$};
      \node[state, above of=ACdds,yshift=20pt,xshift=50pt] (CAtDAfc) {\tiny $\dconf{(C,A),(\tau,D),(A,\frownie),\checkmark}$};
      \node[state, right of=CAtDAfc,xshift=95pt] (ACtDCfc) {\tiny $\dconf{(A,C),(\tau,D),(C,\frownie),\checkmark}$};
      \node[state, above of=sACaBDfs,yshift=95pt] (ACbBCfc) {\tiny $\dconf{(A,C),(b,B),(C,\frownie),\checkmark}$};
      \node[state, above of=CAtDAfc,yshift=20pt,xshift=-50pt,fill=gray!20] (sDDddc) {\tiny $\sconf{(D,D),\dagger,\dagger,\checkmark}$};
      \node[state, above of=sDDddc,yshift=20pt,xshift=-50pt] (DDaBDfs) {\tiny $\dconf{(D,D),(a,B),(D,\frownie),*}$};
      \node[state, above of=sDDddc,yshift=20pt,xshift=50pt] (DDaBDfc) {\tiny $\dconf{(D,D),(a,B),(D,\frownie),\checkmark}$};
      \node[state, above of=DDaBDfc,yshift=20pt,xshift=-50pt,fill=gray!20] (sBBddc2) {\tiny $\sconf{(B,B),\dagger,\dagger,\checkmark}$};

      \draw [->] (ACdds) edge node[shape=circle,above,inner sep=2pt] {\tiny\ref{gbbg_spoiler_switches}} (CAbBAfc);
      \draw [->] (ACdds) edge node[shape=circle,right] {\tiny\ref{gbbg_spoiler_fresh_challenge}} (ACbBCfs);
      \draw [->] (ACdds) edge node[shape=circle,above] {\tiny\ref{gbbg_spoiler_fresh_challenge}} (ACaBCfs);
      \draw [->] (ACaBCfs) edge node[shape=circle,above] {\tiny\ref{gbbg_push_pebble_and_wait}} (sACaBDfs);
      \draw [->] (sACaBDfs) edge node[shape=circle,right,inner sep=1pt] {\tiny\ref{gbbg_move_unchanged}} (ACaBDfs);
      \draw [->] (sACaBDfs) edge node[shape=circle,above,inner sep=2pt] {\tiny\ref{gbbg_spoiler_switches}} (CAbBAfc);
      \draw [->] (sACaBDfs) edge node[shape=circle,above,inner sep=2pt] {\tiny\ref{gbbg_spoiler_switches}} (CAtDAfc);
      \draw [->] (sACaBDfs) edge node[shape=circle,right,inner sep=3pt] {\tiny\ref{gbbg_spoiler_renew_challenge}} (ACtDCfc);
      \draw [->] (sACaBDfs) edge node[shape=circle,right,inner sep=2pt] {\tiny\ref{gbbg_spoiler_renew_challenge}} (ACbBCfc);
      \draw [->] (ACaBDfs) edge node[shape=circle,below] {\tiny\ref{gbbg_match_move_and_terminate}} (BBddc);
      \draw [->] (CAbBAfc) edge node[shape=circle,right] {\tiny\ref{gbbg_match_move_and_terminate}} (BBddc);
      \draw [->] (ACbBCfs) edge node[shape=circle,below] {\tiny\ref{gbbg_match_move_and_terminate}} (BBddc);
      \draw [->] (ACdds) edge node[shape=circle,right,inner sep=2pt] {\tiny\ref{gbbg_spoiler_renew_challenge}} (ACtDCfs);
      \draw [->] (ACdds) edge node[shape=circle,left,inner sep=3pt] {\tiny\ref{gbbg_spoiler_switches}} (CAtDAfc);
      \draw [->] (CAtDAfc) edge node[shape=circle,left,inner sep=2pt] {\tiny\ref{gbbg_match_move_and_terminate}} (sDDddc);
      \draw [->] (ACtDCfs) edge node[shape=circle,right,inner sep=2pt] {\tiny\ref{gbbg_match_move_and_terminate}} (sDDddc);
      \draw [->] (ACtDCfc) edge node[shape=circle,above] {\tiny\ref{gbbg_match_move_and_terminate}} (sDDddc);
      \draw [->] (sDDddc) edge node[shape=circle,left,inner sep=3pt] {\tiny\ref{gbbg_spoiler_switches}} (DDaBDfc);
      \draw [->] (sDDddc) edge node[shape=circle,right,inner sep=2pt] {\tiny\ref{gbbg_spoiler_fresh_challenge}} (DDaBDfs);
      \draw [->] (DDaBDfs) edge node[shape=circle,right,inner sep=2pt] {\tiny\ref{gbbg_match_move_and_terminate}} (sBBddc2);
      \draw [->] (DDaBDfc) edge node[shape=circle,left,inner sep=2pt] {\tiny\ref{gbbg_match_move_and_terminate}} (sBBddc2);
      \draw [->] (ACbBCfc) edge node[shape=circle,above] {\tiny\ref{gbbg_match_move_and_terminate}} (sBBddc2);

  \end{tikzpicture}
\end{minipage}
\end{center}
\caption{The (solitaire) game underlying $A \gbbg_{E(o,o)} C$ in
which \duplicator plays according to the (memoryless) winning
strategy of Lemma~\ref{lem:gbbbisim_is_gbbg}.  \spoiler-owned
vertices are marked grey, \duplicator-owned vertices are not marked,
and edges are annotated with the rule \spoiler or \duplicator used
to move to the next configuration. The `starting configuration' of the
game is indicated by the $\drawinitarrow$ symbol. For quick reference, the 
Labelled Transition System of Figure~\ref{fig:example_weak_vs_branching}
serving as the basis for this game is depicted in the below left
corner.}

\label{fig:winning_duplicator}
\end{figure}

We next focus on the soundness of our game; that is, we will show
that \duplicator cannot win configurations
$\sconf{(s,t),\dagger,\dagger,*}$ for states $s$ and $t$ that are
not $(x,y)$-generic bisimilar. Before we give a formal proof of soundness,
we first state two useful observations concerning the
\gb-games.

\begin{prop}\label{prop:invariant_r}
Configurations $\sconf{(s,t),c,m,*}$ and $\sconf{(s,t),c,m,\checkmark}$
are both won by the same player. So are  $\dconf{(s,t),c,m,*}$ and
$\dconf{(s,t),c,m,\checkmark}$.

\end{prop}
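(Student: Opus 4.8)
The plan is to exploit that the reward component $r$ of a configuration is inert: no rule of Definition~\ref{def:generalised game} ever inspects it, and it influences the winner of a play only through a condition on the tail of the play. Concretely, I would go through \spoiler's rules~\eqref{gbbg_move_unchanged}--\eqref{gbbg_spoiler_switches} and \duplicator's rules~\eqref{gbbg_terminate}--\eqref{gbbg_push_pebble_and_wait} and check two things. First, none of the side conditions guarding a rule --- the conditions on $c$ (such as $c = \dagger$ or $c \neq (a,s')$), on $m = (\bar v,f)$ (such as $f = \frownie$), on membership in $E$, and on the transition relation --- mentions $r$. Second, in every rule the reward component of the configuration moved \emph{to} is a fixed symbol ($*$ or $\checkmark$) prescribed by the rule, independent of the reward of the configuration moved \emph{from}. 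Hence two configurations that agree in their position, challenge and partial-match components and in ownership, but differ only in their reward component, have exactly the same set of successor configurations.

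Fix such a pair, say $\gamma = \conf{(s,t),c,m,*}$ and $\gamma' = \conf{(s,t),c,m,\checkmark}$; by the above they are owned by the same player and have the same outgoing edges. Every play $\gamma,\gamma_1,\gamma_2,\dots$ then has a companion play $\gamma',\gamma_1,\gamma_2,\dots$ with an identical continuation, and conversely. I would then argue that companion plays have the same winner: if the play is finite, \spoiler gets stuck in the same configuration in both cases, and the winner of a finite play depends only on whether its final configuration is a dead end for \spoiler, which is a condition not involving $r$; if the play is infinite, the set of positions carrying a $\checkmark$ reward differs by at most the single position $0$, so one play yields infinitely many $\checkmark$-rewards precisely when the other does, and by Definition~\ref{def:generalised game} the winner is again the same.

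Finally I would transfer winning strategies. Given a strategy for a player $P$ that is winning from $\gamma$, the same function --- extended to $\gamma'$ by copying its value on $\gamma$ in the case that $P$ owns these configurations, which is a legal move since $\gamma$ and $\gamma'$ share their successors --- is winning from $\gamma'$: every play from $\gamma'$ consistent with it has, after its first move, a companion play from $\gamma$ that is consistent with it, hence won by $P$, so by the previous paragraph the original play is won by $P$ as well. The symmetric argument (exchanging the roles of $*$ and $\checkmark$) shows $P$ also wins from $\gamma$ whenever $P$ wins from $\gamma'$, and applying this both to \spoiler-owned configurations $\sconf{\cdot}$ and to \duplicator-owned configurations $\dconf{\cdot}$ yields the proposition. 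I do not expect a genuine obstacle; the only point deserving explicit care is the observation used in the second paragraph, namely that \duplicator's infinite-play winning condition is a tail property and is therefore insensitive to the reward attached to the initial configuration.
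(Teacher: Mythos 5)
Your proposal is correct and follows essentially the same route as the paper's proof, which argues tersely that the B\"uchi winning condition is a tail property insensitive to the reward of the initial configuration and that neither player can get stuck by flipping $*$ to $\checkmark$ or \emph{vice versa}; your version merely spells out in detail the observations (rules never inspect $r$, successors coincide, strategies transfer) that the paper leaves implicit.
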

\begin{proof}
This follows immediately from the B\"uchi winning condition: any
player that wins some suffix of an infinite play also wins the
infinite play itself. Furthermore, note that neither \spoiler nor
\duplicator can get stuck playing a game by changing a reward from
$*$ to $\checkmark$ or \emph{vice versa}.
\end{proof}

\begin{prop}\label{prop:eqconfigurations}
If \duplicator wins a configuration $\sconf{(s,t),c,m,r}$, then she
also wins the configuration $\sconf{(s,t),\dagger,\dagger,\checkmark}$.

\end{prop}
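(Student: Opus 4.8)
The plan is to reason directly about \spoiler's options from the ``clean'' configuration $\sconf{(s,t),\dagger,\dagger,\checkmark}$ and to show that every configuration \spoiler can reach from it is won by \duplicator; since a \spoiler-owned configuration lies in \duplicator's winning region exactly when all of its successors do, this gives the claim. From $\sconf{(s,t),\dagger,\dagger,\checkmark}$ rule~\eqref{gbbg_move_unchanged} is unavailable (it needs $c\neq\dagger$), so \spoiler either plays move~2 — choosing some $s\pijl{a}s'$ and, via sub-option \eqref{gbbg_spoiler_fresh_challenge} or \eqref{gbbg_spoiler_renew_challenge}, moving to $\dconf{(s,t),(a,s'),(t,\frownie),*}$ or to $\dconf{(s,t),(a,s'),(t,\frownie),\checkmark}$, which by Proposition~\ref{prop:invariant_r} are won by the same player — or she plays rule~\eqref{gbbg_spoiler_switches}, choosing some $t\pijl{a}t'$ and moving to $\dconf{(t,s),(a,t'),(s,\frownie),\checkmark}$. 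Write $m=(\bar{t},f)$ for the partial match of the hypothesised \duplicator-won configuration $\sconf{(s,t),c,m,r}$.

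The move by rule~\eqref{gbbg_spoiler_switches} is immediate: this rule carries no precondition on $c$, so from $\sconf{(s,t),c,m,r}$ \spoiler may make exactly the same move, to $\dconf{(t,s),(a,t'),(s,\frownie),\checkmark}$; since \duplicator wins $\sconf{(s,t),c,m,r}$ she wins every successor of it, in particular this one. It therefore remains to show that \duplicator wins $\dconf{(s,t),(a,s'),(t,\frownie),*}$ for every $s\pijl{a}s'$. I would split this into three sub-cases on $c$. If $c=\dagger$, then from $\sconf{(s,t),\dagger,m,r}$ \spoiler can reach precisely $\dconf{(s,t),(a,s'),(t,\frownie),*}$ by sub-option~\eqref{gbbg_spoiler_fresh_challenge}, so \duplicator wins it. If $c\neq\dagger$ and $c\neq(a,s')$, then from $\sconf{(s,t),c,m,r}$ \spoiler can reach $\dconf{(s,t),(a,s'),(t,\frownie),\checkmark}$ by sub-option~\eqref{gbbg_spoiler_renew_challenge}; \duplicator wins this, and by Proposition~\ref{prop:invariant_r} she also wins $\dconf{(s,t),(a,s'),(t,\frownie),*}$.

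The remaining sub-case $c=(a,s')$ is the crux, and I expect it to be the main obstacle. Here \spoiler can only pass, via rule~\eqref{gbbg_move_unchanged}, from $\sconf{(s,t),(a,s'),(\bar{t},f),r}$ to $\dconf{(s,t),(a,s'),(\bar{t},f),*}$, so from the hypothesis we obtain that \duplicator wins $\dconf{(s,t),(a,s'),(\bar{t},f),*}$; the task is to promote this into a win from $\dconf{(s,t),(a,s'),(t,\frownie),*}$, i.e.\ with the \emph{freshest} possible partial match. The intended argument is a monotonicity statement: the match $(t,\frownie)$ gives \duplicator at least the options she had at $(\bar{t},f)$, because the latter match can only have been produced out of $(t,\frownie)$ by \duplicator pushing the pebble along $\tau$-steps (rules~\eqref{gbbg_push_pebble_and_move}/\eqref{gbbg_push_pebble_and_wait}) and possibly taking one matching $a$-step (rules~\eqref{gbbg_match_and_move}/\eqref{gbbg_match_and_postpone}); \duplicator may therefore re-enact that sequence from $\dconf{(s,t),(a,s'),(t,\frownie),*}$ and then follow her winning strategy from $\dconf{(s,t),(a,s'),(\bar{t},f),*}$, while any non-trivial \spoiler interleaving during the re-enactment can only re-coin the challenge (rule~\eqref{gbbg_spoiler_renew_challenge}) or switch (rule~\eqref{gbbg_spoiler_switches}), landing in a configuration already shown above to be \duplicator-won. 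Turning this into a rigorous invariant — in particular justifying that the pebble position stored in $m$ is indeed reachable from $t$ along a path of the required shape, which rests on the form of the configurations that can actually arise in plays — is the delicate point; everything else in the proof is the routine case analysis above, together with Proposition~\ref{prop:invariant_r} to ignore the reward bit.
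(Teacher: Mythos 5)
Your overall plan --- compare the set of configurations \spoiler can reach from $\sconf{(s,t),\dagger,\dagger,\checkmark}$ with the set she can reach from $\sconf{(s,t),c,m,r}$, and discharge the difference in the reward bit via Proposition~\ref{prop:invariant_r} --- is exactly the paper's argument, and your handling of the switch move~\eqref{gbbg_spoiler_switches}, of $c=\dagger$, and of $c\neq(a,s')$ is correct. The paper's entire proof is the one-sentence claim that every \spoiler-successor of the clean configuration is also reachable from the given one ``up to possibly the value of $r$''; you have correctly observed that this claim breaks down in precisely the sub-case you isolate, $c=(a,s')$ with $m=(\bar{t},f)\neq(t,\frownie)$, where \spoiler can only pass via rule~\eqref{gbbg_move_unchanged} and hence reaches $\dconf{(s,t),(a,s'),(\bar{t},f),*}$ but never $\dconf{(s,t),(a,s'),(t,\frownie),*}$. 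So on the cases you complete you match the paper, and you are more honest than the paper about where the difficulty sits.

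The flagged gap is, however, genuine, and the re-enactment argument you sketch cannot be completed for the proposition as literally quantified, because the statement is false for arbitrary configurations. Take $a\neq\tau$ and an LTS whose only transitions are $s\pijl{a}s'$ and $\bar{t}\pijl{a}v'$, with $t$ deadlocked and $\bar{t}$ unreachable from $t$. \duplicator wins $\sconf{(s,t),(a,s'),(\bar{t},\frownie),*}$: \spoiler's only move is rule~\eqref{gbbg_move_unchanged}, \duplicator answers with rule~\eqref{gbbg_match_move_and_terminate} using $\bar{t}\pijl{a}v'$, and \spoiler is stuck in $\sconf{(s',v'),\dagger,\dagger,\checkmark}$. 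Yet \duplicator loses $\sconf{(s,t),\dagger,\dagger,\checkmark}$: \spoiler coins the challenge $s\pijl{a}s'$, the partial match is reset to $(t,\frownie)$, and \duplicator has no move from $\dconf{(s,t),(a,s'),(t,\frownie),*}$. Thus some restriction on $m$ is indispensable; this is the (undefined) word ``consistent'' in the paper's proof doing silent work. Note that everywhere the paper actually invokes the proposition, the configuration has $m=(t,\frownie)$ --- the pebble coincides with the second component of the position, as it does whenever a challenge has just been coined --- and for such $m$ the successor-set comparison you already carried out closes the proof with no re-enactment at all. So either add the hypothesis $m=(t,\frownie)$ (or $c=\dagger$), in which case your first two paragraphs already constitute a complete proof, or restrict to configurations reachable in a full play and prove the reachability invariant you allude to; as stated, the ``crux'' cannot be filled in, and your instinct that it ``rests on the form of the configurations that can actually arise in plays'' is exactly right.
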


\begin{proof} 
Let  $\sconf{(s,t),c,m,r}$ be a \spoiler-owned consistent configuration
that is won by \duplicator. We simply observe that any configuration
reached by a move by \spoiler from the configuration
$\sconf{(s,t),\dagger,\dagger,\checkmark}$, can also be reached
from the given $\sconf{(s,t),c,m,r}$ up to possibly the value of
$r$. Next we simply apply Proposition~\ref{prop:invariant_r} to
conclude the result.
\end{proof}

\begin{rem}
The converse of the result above does not hold in general. To
illustrate this, consider a transition system with states $s,s'$
and one transition $s \pijl{a} s'$, where $a \neq \tau$.  Clearly,
\duplicator wins $\sconf{(s,s),\dagger,\dagger,\checkmark}$ but she
does not, for instance, win $\sconf{(s,s),(a,s'),(s',\frownie),*}$.

\end{rem}
Note that as a consequence of Proposition~\ref{prop:eqconfigurations},
\duplicator's rule~\eqref{gbbg_match_and_move} is redundant.  That is,
a game that lacks rule~\eqref{gbbg_match_and_move} is
equivalent (from the point of view of $\gbbg_E$) to the game we
define in Definition~\ref{def:generalised game}; we briefly return to this
matter in Section~\ref{sec:generic_specific}. Although our game contains
some redundancy, we believe that our presentation of the game better
reflects the diagram and intuition of Figure~\ref{fig:generic_picture}.

\begin{lem}[Soundness]\label{lem:game_is_generic_branching_bisim}
For all $x,y \in \{o,b\}$, the relation $\gxybbg$ is an $(x,y)$-generic bisimulation.
\end{lem}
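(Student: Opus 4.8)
The plan is to set $R := \gxybbg$ and verify Definition~\ref{def:generic_bisimulation} for $R$ directly, extracting the witnessing weak transition from \duplicator's winning strategy. Symmetry of $R$ is the easy half: at a full-play configuration $\sconf{(s,t),\dagger,\dagger,*}$ the pass rule~\eqref{gbbg_move_unchanged} is disabled, so \spoiler's only moves are a challenge on $s$ (rules~\eqref{gbbg_spoiler_fresh_challenge}/\eqref{gbbg_spoiler_renew_challenge}, reaching $\dconf{(s,t),(a,s'),(t,\frownie),r}$ for some $s\pijl{a}s'$) or a challenge on $t$ (rule~\eqref{gbbg_spoiler_switches}, reaching $\dconf{(t,s),(a,t'),(s,\frownie),r}$ for some $t\pijl{a}t'$). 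By Proposition~\ref{prop:invariant_r} the reward component is immaterial, so the set of \duplicator-owned configurations \duplicator must win in order to win $\sconf{(s,t),\dagger,\dagger,*}$ coincides with the one for $\sconf{(t,s),\dagger,\dagger,*}$; hence $s\gxybbg t$ iff $t\gxybbg s$.

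For the transfer condition, assume $s\gxybbg t$ and $s\pijl{a}s'$, and fix a memoryless winning strategy $\sigma$ for \duplicator from $\sconf{(s,t),\dagger,\dagger,*}$. Let \spoiler open by playing $s\pijl{a}s'$ via rule~\eqref{gbbg_spoiler_fresh_challenge}, so \duplicator wins $\dconf{(s,t),(a,s'),(t,\frownie),*}$ under $\sigma$. I would then follow the play in which \duplicator uses $\sigma$ while \spoiler keeps handing back the unchanged challenge with rule~\eqref{gbbg_move_unchanged}. Such a passive \spoiler never gets stuck while the challenge is non-$\dagger$ and never awards a $\checkmark$; since $\sigma$ is winning, \duplicator must therefore eventually award a $\checkmark$, which can only happen through one of rules~\eqref{gbbg_terminate}, \eqref{gbbg_match_move_and_terminate} or~\eqref{gbbg_push_pebble_and_terminate}. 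Truncating the play at this first $\checkmark$, each earlier \duplicator-move is one of rules~\eqref{gbbg_match_and_move}, \eqref{gbbg_match_and_postpone}, \eqref{gbbg_push_pebble_and_move}, and~\eqref{gbbg_push_pebble_and_wait}, all of which keep the challenge $(a,s')$ in place while advancing the partial-match pebble along $\tau$-edges and, at most once---exactly when the face flips from $\frownie$ to $\smiley$---along one $a$-edge. Reading off the successive pebble positions $t=p_0\pijl{\tau}\cdots\pijl{\tau}p_m\pijl{a}q_0\pijl{\tau}\cdots\pijl{\tau}q_n$ together with the terminal configuration $\sconf{(s',q_n),\dagger,\dagger,\checkmark}$ gives $t\taustar p_m\pijl{a}q_0\taustar q_n$ with $s'\rel{R}q_n$, the latter because that configuration is won by \duplicator. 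The degenerate case is $a=\tau$ with \duplicator terminating via rule~\eqref{gbbg_terminate} before the face flips: then either $q_n=t$, giving $s'\rel{R}t$ (the first clause of Definition~\ref{def:generic_bisimulation}), or one splits off a trailing $\tau$-step to serve as the $a$-step.

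It remains to obtain the `$b$'-side obligations $s\rel{R}p_m$ (needed when $x=b$) and $s'\rel{R}q_0$ (needed when $y=b$), and this is precisely where the parameterisation by $E(x,y)$ is exploited. When $x=b$ we have $\frownie\notin E(x,y)$, so on the $\frownie$-prefix the wait rule~\eqref{gbbg_push_pebble_and_wait} is unavailable and rule~\eqref{gbbg_push_pebble_and_move} is \duplicator's only $\tau$-move there; that rule keeps the first position component equal to $s$ and synchronises the second with the pebble, so every \spoiler-owned configuration on this prefix has the form $\sconf{(s,p_i),(a,s'),(p_i,\frownie),\cdot}$, is won by \duplicator, and hence satisfies $s\gxybbg p_i$ by Proposition~\ref{prop:eqconfigurations}---in particular $s\rel{R}p_m$. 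Symmetrically, when $y=b$ we have $\smiley\notin E(x,y)$, so the postponing moves---rules~\eqref{gbbg_match_and_postpone} and~\eqref{gbbg_push_pebble_and_wait} on a $\smiley$-face---are disabled, which forces the $a$-step to be rule~\eqref{gbbg_match_and_move} and every later $\tau$-move to be rule~\eqref{gbbg_push_pebble_and_move}, all of them keeping the first position component at $s'$; the same reasoning then yields $s'\gxybbg q_j$ along the $\smiley$-segment, in particular $s'\rel{R}q_0$. Putting the pieces together, $t\gentaustar{x,R,s}p_m\pijl{a}q_0\gentaustar{y,R,s'}q_n$ with $s'\rel{R}q_n$, which is the second clause of Definition~\ref{def:generic_bisimulation}.

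The step I expect to be the main obstacle is making rigorous the claim that $x=b$ (resp.\ $y=b$) leaves \duplicator no choice but to keep the position synchronised with the pebble: this demands a careful, rule-by-rule check of how the applicable rule, the current face $\frownie/\smiley$, and the position--pebble relationship interact, and it is entangled with the somewhat fiddly $a=\tau$ corner cases in which the witnessing $a$-transition either coincides with one of the $\tau$-steps or is absent altogether. The remainder is routine unfolding of Definition~\ref{def:generalised game} combined with Propositions~\ref{prop:invariant_r} and~\ref{prop:eqconfigurations} and the memoryless determinacy of the game.
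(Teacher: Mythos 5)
Your proposal is correct and follows essentially the same route as the paper's proof: the same relation $R$, the same symmetry argument via Proposition~\ref{prop:invariant_r}, the same play in which \spoiler poses the challenge once and then passes it back unchanged so that the first $\checkmark$ must come from one of \duplicator's terminating rules, and the same use of Proposition~\ref{prop:eqconfigurations} to extract the $b$-side relatedness of the intermediate pebble positions when $\frownie\notin E(x,y)$ (resp.\ $\smiley\notin E(x,y)$). The $a=\tau$ corner case you flag is handled in the paper by the same device of designating one of the $\tau$-steps as the matching transition, so no genuine obstacle remains.
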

\begin{proof}
We proceed as follows: we construct a relation $R \subseteq S \times
S$ such that all pairs of states related by $R$ represent some
\spoiler-owned configuration that is won by \duplicator. Next, we
show that this relation is an $(x,y)$-generic bisimulation.

Fix some $x,y \in \{o,b\}$ and define $R \subseteq S \times S$ as
follows:
\[
R = \{ (s,t) ~|~ \text{\duplicator wins $\sconf{(s,t),\dagger,\dagger, *}$} \}
\]
We show that $R$ meets the conditions of
Definition~\ref{def:generic_bisimulation}.  First observe that symmetry of $R$
follows from Proposition~\ref{prop:invariant_r} and the fact that starting from
a configuration $\sconf{(s,t),\dagger,\dagger,*}$, \spoiler has the
same options as when starting from configuration
$\sconf{(t,s),\dagger,\dagger,*}$.

Suppose $s \rel{R} t$, and assume that $s \pijl{a} s'$ holds. It suffices
to prove that $R$ meets the transfer condition, \ie either:
\begin{enumerate}
\item[\mylabel{T1}{T1}] $a = \tau$ and $s' \rel{R} t$, or
\item[\mylabel{T2}{T2}] $t \gentaustar{x,R,s} t_1 \pijl{a} t_2 \gentaustar{y,R,s'} t'$ for some
$t_1,t_2,t'$ such that $s' \rel{R} t'$.
\end{enumerate}
Since $s \rel{R} t$, we
know that \duplicator wins $\sconf{(s,t),\dagger,\dagger,*}$. Fix
a strategy $\strategy$ for \duplicator that is winning from this
configuration. Now, consider a play $\pi$ that emerges by
\duplicator playing according to $\strategy$ and \spoiler using her
rule~\eqref{gbbg_spoiler_fresh_challenge}
to move from $\sconf{(s,t),\dagger,\dagger,*}$ to
$\dconf{(s,t),(a,s'),(t,\frownie),*}$ and her rule~\eqref{gbbg_move_unchanged} on all subsequent
non-$\dagger$ configurations. Note that, by assumption,
$s \pijl{a} s'$ holds, so \spoiler's rule~\eqref{gbbg_spoiler_fresh_challenge} is applicable, and a play $\pi$ of the required form therefore is guaranteed to
exist.

We will show that $\pi$ contains all information necessary for concluding that either~\eqref{T1} or~\eqref{T2} holds.  Notice that since
\duplicator wins $\pi$, the play $\pi$ must have a prefix $\pi'$ of the
following form (for some $n \ge 0$ and appropriately chosen $s_i,t_i,\bar{t}_i,f_i$ and
$t'$):
\[
\begin{array}{ll}
\pi' = & \sconf{(s,t),\dagger,\dagger,*}\\
& \dconf{(s_0,t_0),(a,s'),(\bar{t}_0,f_0),*}\\
& \sconf{(s_1,t_1),(a,s'),(\bar{t}_1,f_1),*}\\
& \cdots\\
& \dconf{(s_n,t_n),(a,s'),(\bar{t}_n,f_n),*}\\
& \sconf{(s',t'),\dagger,\dagger,\checkmark}\\
\end{array}
\]
where  either $t' = \bar{t}_n$, or $t'$ is an immediate $\tau$-successor or $a$-successor of
$\bar{t}_n$. Moreover, for all odd $i$, we have $\bar{t}_i = \bar{t}_{i+1}$ and, for
even $i < n-1$, either
$\bar{t}_i \pijl{\tau} \bar{t}_{i+2}$ or $\bar{t}_i \pijl{a} \bar{t}_{i+2}$.
Fix this prefix $\pi'$.  Note that
$\sconf{(s',t'),\dagger,\dagger,\checkmark}$ is again won
by \duplicator, and by Proposition~\ref{prop:invariant_r} and
by definition of $R$ we find $s' \rel{R} t'$. We next distinguish
cases, essentially based on the length of $\pi'$.

In case $n = 0$, \duplicator must have either used rule~\eqref{gbbg_terminate}
or rule~\eqref{gbbg_match_move_and_terminate}.  If \duplicator used
rule~\eqref{gbbg_terminate}, then $a = \tau$ and $t' = \bar{t}_0 =  t$,
and thus $R$ meets transfer condition~\eqref{T1}.  When \duplicator used
rule~\eqref{gbbg_match_move_and_terminate}, then we must have had
$t = \bar{t}_0 \pijl{a} t'$, in which case $R$ meets transfer
condition~\eqref{T2}.


In case $n > 0$, we find $\bar{t}_0 \taustar \bar{t}_{k-1} \pijl{a}
\bar{t}_k \taustar t'$ for some $k$ so it suffices to prove that also $\bar{t}_0
\gentaustar{x,R,s} \bar{t}_{k-1} \pijl{a} \bar{t}_k \gentaustar{y,R,s'}
t'$.  We distinguish two cases:

\begin{enumerate}
\item For all $i \le n$, we have $f_i = \frownie$ in $\pi'$. In
that case, we find that \duplicator only played according to
rules~\eqref{gbbg_push_pebble_and_move} or~\eqref{gbbg_push_pebble_and_wait}
for all $i$-th \duplicator owned configurations ($i < n$) and 
rule~\eqref{gbbg_match_move_and_terminate} 
for its $n$-th configuration. Consider the  sequence
of transitions $t = \bar{t}_0 \pijl{\tau} \bar{t}_2 \pijl{\tau} \cdots
\pijl{\tau} \bar{t}_n \pijl{a} t'$. 
\begin{itemize}

\item Case $x = o$. Then we can immediately conclude
$t \gentaustar{x,R,s} \bar{t}_{n} \pijl{a} t' \gentaustar{y,R,s'} t'$
and since $s' \rel{R} t'$,  $R$ meets transfer condition~\eqref{T2}.

\item Case $x = b$. Then also $\frownie \notin
E(x,y)$ and therefore \duplicator can only have applied
rule~\eqref{gbbg_push_pebble_and_move}. In that case we have $s_i = s$
and $t_i = \bar{t}_i$ for all $i \le n$.  Furthermore,  since \duplicator
wins all configurations on $\pi'$, she also wins all configurations
$\sconf{(s,t_i), (a,s'), (\bar{t}_i,\frownie),*}$. By
Proposition~\ref{prop:eqconfigurations}, she also wins
$\sconf{(s,t_i),\dagger,\dagger,*}$ so we find $s \rel{R} t_i$ (and
therefore $s \rel{R} \bar{t}_i$) for all $i \le n$. As a result, we
find that $t \sgentaustar{x,R,s} \bar{t}_n$
holds and therefore $t \gentaustar{x,R,s} \bar{t}_{n} \pijl{a}
t' \gentaustar{y,R,s'} t'$, which, together with $s' \rel{R} t'$,
is all we needed for
concluding that $R$ meets transfer condition~\eqref{T2}.

\end{itemize}

\item There is some $k$ ($1 \le k \le n$) such that $f_k = \smiley$ in $\pi'$;
fix the smallest $k$ with this property. We note that $k$ must be odd since
the first configuration with a $\smiley$ is owned by \spoiler.
Observe that this means that $\bar{t}_{k-1} \pijl{a} \bar{t}_{k+1}$
must hold since \duplicator used either
rule~\eqref{gbbg_match_and_move} or rule~\eqref{gbbg_match_and_postpone} at
configuration $\dconf{(s_{k-1},t_{k-1}),(a,s'),
(\bar{t}_{k-1},f_{k-1}),*}$. At all other configurations (except
for the last), \duplicator can only have played
rule~\eqref{gbbg_push_pebble_and_move} or~\eqref{gbbg_push_pebble_and_wait},
meaning we must have had $\bar{t}_{i} \pijl{\tau} \bar{t}_{i+2}$ for
all $i \neq k-1$. For the last configuration, \duplicator could only
have used rule~\eqref{gbbg_push_pebble_and_terminate}.
First, consider the sequence of transitions
$t = \bar{t}_0 \pijl{\tau} \bar{t}_2 \pijl{\tau} \cdots
\pijl{\tau} \bar{t}_{k-1}$. 
\begin{itemize}
\item Case $k = 1$. Then $t = \bar{t}_{k-1}$ and since $s \rel{R} t$ we can 
then immediately conclude that $t \gentaustar{x,R,s} \bar{t}_{k-1}$ holds.

\item Case $k > 1$. Following the argument we used in case~1 (reading $k-1$ 
for $n$), we conclude that $t \gentaustar{x,R,s} \bar{t}_{k-1}$.
 
\end{itemize}
Second, consider the sequence of transitions $\bar{t}_{k+1} \pijl{\tau} \bar{t}_{k+3}
\pijl{\tau} \cdots \pijl{\tau} \bar{t}_{n} \pijl{\tau} t'$.
%
Then we can use similar arguments as in case~1
(observing that $y = b$ implies that \duplicator could only have used 
rule~\eqref{gbbg_push_pebble_and_move}), to conclude that
$\bar{t}_{k+1} \gentaustar{y,R,s'} \bar{t}_{n}$. Moreover, from 
$\bar{t}_{k+1} \gentaustar{y,R,s'} \bar{t}_n$, 
$\bar{t}_n \pijl{\tau} t'$ and $s' \rel{R} t'$, we  also obtain
$\bar{t}_{k+1} \gentaustar{y,R,s'} t'$.


Concluding, we find that we have $t \gentaustar{x,R,s} \bar{t}_{k-1}
\pijl{a} \bar{t}_{k+1} \gentaustar{y,R,s'} t'$. Since we already had established
that we have $s' \rel{R} t'$, we conclude that $R$ meets transfer
condition~\eqref{T2}.

\end{enumerate}
In both cases $R$ meets the transfer condition. Thus, $R$
is an $(x,y)$-generic bisimulation relation, and therefore, since
for any $s,t \in S$ such that $s \gxybbg t$ we have $s \rel{R} t$, 
we also have $s \gbbbisim t$.  
\end{proof}
We are now in a position to prove our claimed correspondence.
\correspondencetheorem*
\begin{proof}
The implication from left to right follows from Lemma~\ref{lem:gbbbisim_is_gbbg}.
The implication from right to left follows from Lemma~\ref{lem:game_is_generic_branching_bisim}.
\end{proof}

\subsection{Relating Branching and Generic Bisimulation Games}
\label{sec:generic_specific}

As we have shown in the previous section, our generic bisimulation game 
exactly characterises $(x,y)$-generic bisimilarity. More specifically,
this implies that $\gbbg_{\emptyset}$ coincides with branching bisimilarity.
Since we already have a game that captures branching bisimilarity, one may
wonder what the relation between $\gbbg_{\emptyset}$ and that game is. 
In this section, we formally relate the game play in the generic
bisimulation game to the game play in the branching bisimulation
game.

First, observe that by definition, the game for $\gbbg_{\emptyset}$
consists of all rules by \spoiler, but only
rules~\eqref{gbbg_terminate},~\eqref{gbbg_match_and_move},~\eqref{gbbg_match_move_and_terminate},~\eqref{gbbg_push_pebble_and_move} and~\eqref{gbbg_push_pebble_and_terminate} of \duplicator. Intuitively, this
one-but-last move coincides with \duplicator's third rule in the branching
bisimulation game. Rule~\eqref{gbbg_match_move_and_terminate} can be seen
to match with \duplicator's second rule in the \bb-game. However, rule~\eqref{gbbg_match_and_move} and~\eqref{gbbg_push_pebble_and_terminate} have no counterpart in the
\bb-game. From these observations, we can expect
that every strategy in a \bb-game has a matching strategy
in our \gb-game, but not \emph{vice versa}. \medskip

We next formalise these arguments. Consider the abstraction function
$f$ which maps configurations in the \gb-game to
configurations in the \bb-game, where $f$ is
defined by $f(\conf{(s,t),c,m,r}) = \conf{(s,t),c,r}$.  The function
$f$ can be lifted from configurations to plays in the natural manner.
Using $f$, we can claim that the generic game simulates the branching
bisimulation game.  More formally, we establish that any strategy
of \duplicator in the \bb-game induces a `matching'
strategy in the $\emptyset$-\gb-game that is such that, after
abstraction using $f$, all plays resulting from this matching
strategy are plays in the \bb-game. This would
allow us to conclude that a winning strategy for \duplicator in the
\bb-game induces a winning strategy for her in
the generic game.

\begin{prop}\label{prop:branching_induced_strategy}
For every strategy $\strategy_{bb}$ of \duplicator in the \bb-game,
there is a strategy $\strategy_{gb}$ of \duplicator in the $\emptyset$-\gb-game
such that for every $\strategy_{gb}$-consistent play $\pi$, $f(\pi)$ is a
$\strategy_{bb}$-consistent play in the \bb-game.
\end{prop}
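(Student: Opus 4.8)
The plan is to construct $\strategy_{gb}$ from $\strategy_{bb}$ by translating \duplicator's three \bb-game rules into corresponding \gb-game rules, and then to verify by induction on the length of a play that the correspondence is maintained. The key observation is that the \gb-game configurations carry an additional component (the partial match $m$), but once we restrict to $E = \emptyset$, this component is essentially redundant: whenever \duplicator is to move from a \good-shaped configuration $\dconf{(u,v),(a,u'),(\bar v,f),r}$, the partial match $\bar v$ will coincide with $v$ (because rules~\eqref{gbbg_match_and_postpone} and~\eqref{gbbg_push_pebble_and_wait} are disabled when $E = \emptyset$, so the pebble and the second state of the position move in lockstep). This is the invariant I would isolate first: along any $\strategy_{gb}$-consistent play, every configuration $\conf{(s,t),c,(\bar t,f),r}$ satisfies $\bar t = t$ and $f = \frownie$.

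Concretely, I would define $\strategy_{gb}$ on configurations $\dconf{(u,v),(a,u'),(v,\frownie),r}$ (those are the only ones reachable, by the invariant) by case analysis on what $\strategy_{bb}$ prescribes at $f(\dconf{(u,v),(a,u'),(v,\frownie),r}) = \dconf{(u,v),(a,u'),r}$. If $\strategy_{bb}$ plays \duplicator's first rule (not moving, $a = \tau$, to $\sconf{(u',v),\dagger,\checkmark}$), let $\strategy_{gb}$ use rule~\eqref{gbbg_terminate}, reaching $\sconf{(u',v),\dagger,\dagger,\checkmark}$; applying $f$ gives exactly $\sconf{(u',v),\dagger,\checkmark}$. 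If $\strategy_{bb}$ plays the second rule ($v \pijl{a} v'$, to $\sconf{(u',v'),\dagger,\checkmark}$), let $\strategy_{gb}$ use rule~\eqref{gbbg_match_move_and_terminate}, reaching $\sconf{(u',v'),\dagger,\dagger,\checkmark}$, which $f$ maps to $\sconf{(u',v'),\dagger,\checkmark}$. If $\strategy_{bb}$ plays the third rule ($v \pijl{\tau} v'$, to $\sconf{(u,v'),(a,u'),*}$), let $\strategy_{gb}$ use rule~\eqref{gbbg_push_pebble_and_move}, reaching $\sconf{(u,v'),(a,u'),(v',\frownie),*}$, whose image under $f$ is $\sconf{(u,v'),(a,u'),*}$. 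In each case the invariant ($\bar t = t$, $f = \frownie$) is preserved, the \gb-rule used is legal (it requires only $f = \frownie$ and the relevant transition, both available), and the image under $f$ is precisely the configuration $\strategy_{bb}$ moves to. For \spoiler-owned configurations one checks that \spoiler's \gb-game moves project under $f$ onto legal \bb-game moves: rule~\eqref{gbbg_spoiler_fresh_challenge} and~\eqref{gbbg_spoiler_renew_challenge} both project to \spoiler's first rule in the \bb-game (choosing $s \pijl{a} s'$), rule~\eqref{gbbg_spoiler_switches} projects to \spoiler's second rule, and rule~\eqref{gbbg_move_unchanged}, which has no direct \bb-analogue, projects to a repeated \spoiler move — here I would note that since $c \neq \dagger$ forces $c = (a,s')$ for some earlier $s \pijl{a} s'$, \spoiler re-issuing this challenge in the \bb-game is legal, so $f(\pi)$ remains \bb-consistent (it may stutter on a \spoiler move, but staying $\strategy_{bb}$-consistent only constrains \duplicator's moves).

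The induction itself is then routine: assuming $f(\pi')$ is $\strategy_{bb}$-consistent for a play prefix $\pi'$, extending $\pi'$ by one move — whether \spoiler's or \duplicator's — yields $\pi''$ with $f(\pi'')$ again $\strategy_{bb}$-consistent, by the case analysis above, and the invariant persists. The main obstacle I anticipate is the bookkeeping around rule~\eqref{gbbg_move_unchanged}: this move collapses under $f$ to a \spoiler configuration being visited with the same $(s,t),c$ twice in a row, so $f(\pi)$ is not literally a play of the \bb-game with strictly alternating nontrivial moves unless one is careful about how the \bb-game treats a \spoiler who re-poses the same challenge. I would handle this by observing that $\strategy_{bb}$-consistency is a condition purely on \duplicator's responses, so idempotent \spoiler detours are harmless; alternatively, one notes that in the \bb-game \spoiler re-issuing $s \pijl{a} s'$ when the pending challenge is already $(a,s')$ simply moves to $\dconf{(s,t),(a,s'),*}$, matching the image of the \gb-configuration up to the reward, and Proposition~\ref{prop:invariant_r} (invariance under the reward) takes care of the discrepancy. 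With that subtlety dispatched, the proposition follows.
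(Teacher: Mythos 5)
Your proposal is correct and follows essentially the same route as the paper: the same abstraction $f$, the same case-wise definition of $\strategy_{gb}$ from $\strategy_{bb}$, the same invariant that the partial match is $(t,\frownie)$ with $t$ the second component of the position (and the challenge a genuine transition of the first component), and the same induction on play prefixes. Your worry about rule~\eqref{gbbg_move_unchanged} is resolved exactly as in the paper — it projects onto \spoiler's first \bb-rule re-posing the pending challenge $(a,s')$, legal by the invariant, with matching reward $*$ — so there is in fact no stutter to dispatch.
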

\begin{proof} Let $\strategy_{bb}$ be an arbitrary strategy of
\duplicator in the \bb-game. Consider the partial strategy
$\strategy_{gb}$, defined as follows:
\[
\begin{array}{ll}
& \strategy_{gb}(\dconf{(s,t),(a,s'),(t,\frownie),r})\\
 = & \\
& \begin{cases}
\sconf{(s,t'), (a,s'), (t',\frownie), r'} & \text{ iff
$\strategy_{bb}(\dconf{(s,t),(a,s'),r}) =
\sconf{(s,t'),(a,s'),r'}$} \\
\sconf{(s',t'), \dagger, \dagger, \checkmark} & \text{ iff
$\strategy_{bb}(\dconf{(s,t),(a,s'),r}) =
\sconf{(s',t'),\dagger,\checkmark}$}
\end{cases}
\end{array}
\]
One can check that $\strategy_{gb}$ is well-defined.
Now, consider an arbitrary play $\pi$ starting in some configuration
$\sconf{(s,t),\dagger,\dagger,*}$ that is consistent with
$\strategy_{gb}$. 
We next use induction on the length of the prefix of
$\pi$ to show that $f(\pi)$ is a play in the \bb-game that is consistent with $\strategy_{bb}$ and $\pi$ satisfies the following
invariant for all configurations $\conf{(s_0,t_0),c_0,m_0,r_0}$ on $\pi$:
\[
c_0 = m_0 = \dagger\text{, or } m_0 = (t_0,\frownie)\text{ and } c_0 = (a,s_0') 
\text{ for some $a,s_0'$ such that $s_0 \pijl{a} s_0'$}
\]

Clearly, the prefix
of length 1 of $\pi$, \viz\ $\sconf{(s,t),\dagger,\dagger,*}$,
induces a prefix of length 1, \viz $f(\sconf{(s,t),\dagger,\dagger,*})$
of some play in the \bb-game that is consistent with $\strategy_{bb}$.
Moreover, the single configuration on this prefix satisfies the invariant.

Next, assume that a given prefix $\pi'$ of length $n$ of $\pi$ induces
a prefix $f(\pi')$ of a play, consistent with $\strategy_{bb}$, 
in the \bb-game, and we assume that on this prefix $\pi'$,
the above invariant holds. Consider the prefix
$\pi'\ \conf{(s',t'),c',m',r'}$ of $\pi$. We show that all configurations on
this prefix meet the invariant and, second, we show that also
$f(\pi'\ \conf{(s',t'),c',m',r'})$ is a prefix of a play,
consistent with $\strategy_{bb}$, in the \bb-game.

Suppose
$\sconf{(s_0,t_0),c_0,m_0,r_0}$ is the last configuration on $\pi'$. Since
\duplicator and 
\spoiler  alternate their moves, we find that
$\conf{(s',t'),c',m',r'}$ is a configuration owned by \duplicator. We 
analyse the rule used by \spoiler and show that she could have 
played to a similar configuration in the \bb-game:
\begin{itemize}
\item  If \spoiler used her first rule, then $c_0 \neq \dagger$ and
by our invariant, $c_0 = (a,s_0')$ for some $s_0 \pijl{a} s_0'$, and
$m_0 = (t_0,\frownie)$. Thus
\spoiler played to configuration $\dconf{(s_0,t_0),(a,s_0'),(t_0,\frownie),*}$.
Note that this configuration again meets the invariant.
Moreover, observe that configuration
$f(\dconf{(s_0,t_0),(a,s_0'),(t_0,\frownie),*}) = \dconf{(s_0,t_0),(a,s_0'),*}$ is a
configuration \spoiler could have reached from
$\sconf{(s_0,t_0),(a,s_0'),r_0}$ in the \bb-game
using her first rule, since $s_0 \pijl{a} s_0'$ follows from our invariant.

\item If \spoiler used her second rule, then $c_0 = \dagger$ or $c_0 \neq c'$.
We first analyse the case when $c_0 = \dagger$.  Then for some $s_0'$ such
that $s_0 \pijl{a} s_0'$,
\spoiler played to 
$\dconf{(s_0,t_0),(a,s_0'),(t_0,\frownie),*}$. Note that the resulting
configuration again adheres to the invariant. Moreover, in the \bb-game, \spoiler could have played to
$\dconf{(s_0,t_0),(a,s_0'),*}$.

If $c_0 \neq c'$  then she played to
configuration $\dconf{(s_0,t_0),(a,s_0'),(t_0,\frownie),\checkmark}$ for
some $a,s_0'$ such that $s_0 \pijl{a} s_0'$ and $c_0 \neq (a,s_0')$. Again,
the resulting configuration satisfies our invariant.
It immediately follows that \spoiler
could have played to $\dconf{(s_0,t_0),(a,s_0),\checkmark}$ in the
\bb-game.

\item If \spoiler used her third rule, then she must have played to 
$\dconf{(t_0,s_0),(a,t_0'),(s_0,\frownie),\checkmark}$ for
some $a,t_0'$ such that $t_0 \pijl{a} t_0'$. This configuration satisfies
the invariant. Moreover, we find that \spoiler could have
played to $\dconf{(t_0,s_0),(a,t_0'),\checkmark}$ in the branching
bisimulation game.

\end{itemize}
Next, suppose that $\dconf{(s_0,t_0),c_0,m_0,r_0}$ is the last
configuration on $\pi'$. First note that $\strategy_{gb}$ is defined
for $(\dconf{(s_0,t_0),c_0,m_0,r_0})$ since, by our invariant,
we find that $m_0 = c_0 = \dagger$, or $m_0 = (t_0,\frownie)$ and
$c_0 = (a,s_0')$ for some $a,s_0'$ satisfying $s_0 \pijl{a} s_0'$. 
Since configuration
$f(\dconf{(s_0,t_0),c_0,m_0,r_0})$ is the last configuration on
$f(\pi')$ and \duplicator's strategy from $\dconf{(s_0,t_0),c_0,m_0,r_0}$
matches $f(\dconf{(s_0,t_0),c_0,m_0,r_0})$, the result follows
immediately.

Consequently, in both cases we find that
$f(\pi'\ \conf{(s',t'),c',m',r'})$ is a prefix of length $n+1$
of a play consistent with $\strategy_{bb}$. We thus find that
all prefixes of $\pi$ induce prefixes of $f(\pi)$ that are 
$\strategy_{bb}$-consistent and we thus conclude that $f(\pi)$ is a play 
consistent with $\strategy_{bb}$ in the \bb-game.
\end{proof}
From the above proposition, we find that whenever \duplicator wins the
\bb-game, she also wins the generic game. This follows
from the fact that any winning
strategy by \duplicator in the \bb-game induces a
winning strategy for \duplicator in the \gb-game.\medskip

For the reverse, we first observe that \duplicator can follow an
\emph{eager} strategy in the $\gbbg_{\emptyset}$ game: whenever she
uses rule~\eqref{gbbg_match_and_move} to
play to a configuration $\sconf{(s_0,t_0),(a,s_0'),(t_0,\smiley),*}$,
she could also have used rule~\eqref{gbbg_match_move_and_terminate} to play 
to configuration
$\sconf{(s_0,t_0),\dagger,\dagger,\checkmark}$ instead without
changing the outcome of the play. This follows immediately from
Proposition~\ref{prop:eqconfigurations}. We say that \duplicator
follows an \emph{eager} strategy if she never uses rule~\eqref{gbbg_match_and_move}.
The following result essentially follows immediately from the above.
\begin{lem}
If \duplicator wins a configuration $\sconf{(s,t),\dagger,\dagger,*}$,
she has an eager winning strategy to win this configuration. \qedhere
\end{lem}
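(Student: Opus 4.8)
The plan is to take an arbitrary winning strategy for \duplicator from $\sconf{(s,t),\dagger,\dagger,*}$ and to surgically replace every move it prescribes via rule~\eqref{gbbg_match_and_move} by the corresponding move via rule~\eqref{gbbg_match_move_and_terminate}, and then to argue that the resulting strategy is both eager and still winning. Two facts make this work. First, these two rules are available at exactly the same moments: at a \duplicator-owned configuration $\dconf{(u,v),(a,u'),(\bar{v},\frownie),r}$ with $\bar{v} \pijl{a} v'$, rule~\eqref{gbbg_match_and_move} leads to $\sconf{(u',v'),(a,u'),(v',\smiley),*}$ and rule~\eqref{gbbg_match_move_and_terminate} to $\sconf{(u',v'),\dagger,\dagger,\checkmark}$, and both are permitted without any side condition. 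Second, by Proposition~\ref{prop:eqconfigurations}, whenever $\sconf{(u',v'),(a,u'),(v',\smiley),*}$ is won by \duplicator, so is $\sconf{(u',v'),\dagger,\dagger,\checkmark}$.

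Concretely, I would first invoke the positional determinacy of \gb-games recalled in Section~\ref{sec:games} (see also~\cite{gradel_automata_2002}) to fix a \emph{single} positional winning strategy $\sigma$ for \duplicator on the set $W$ of all configurations won by her; by hypothesis $\sconf{(s,t),\dagger,\dagger,*} \in W$. I then define a positional strategy $\sigma'$ that, at each \duplicator-owned configuration $d$ where $\sigma(d)$ is a rule~\eqref{gbbg_match_and_move} move to some $\sconf{(u',v'),(a,u'),(v',\smiley),*}$, instead plays the rule~\eqref{gbbg_match_move_and_terminate} move to $\sconf{(u',v'),\dagger,\dagger,\checkmark}$, and that agrees with $\sigma$ at every other \duplicator-owned configuration. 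By construction $\sigma'$ never uses rule~\eqref{gbbg_match_and_move}, so it is eager. Since such a $d$ lies in $W$ and $\sigma$ is winning from it, the configuration $\sconf{(u',v'),(a,u'),(v',\smiley),*}$ lies in $W$, and then Proposition~\ref{prop:eqconfigurations} puts $\sconf{(u',v'),\dagger,\dagger,\checkmark}$ in $W$ as well. Combining this with the standard fact that from a \duplicator-won \spoiler-owned configuration every \spoiler move stays in $W$, an easy induction shows that every $\sigma'$-consistent play that starts in $W$ remains inside $W$; hence \duplicator never gets stuck under $\sigma'$, and she wins every finite such play, as it can only end with \spoiler stuck.

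The remaining --- and only genuinely delicate --- point is to show that \duplicator also wins every \emph{infinite} $\sigma'$-consistent play $\pi$ starting in $W$, that is, that every such $\pi$ carries infinitely many $\checkmark$ rewards. I would split on how often $\sigma'$ \emph{deviates} from $\sigma$ along $\pi$, namely how often the rule~\eqref{gbbg_match_move_and_terminate} substitution is actually taken. If it is taken infinitely often, then $\pi$ carries infinitely many $\checkmark$ rewards --- rule~\eqref{gbbg_match_move_and_terminate} always awards a $\checkmark$ --- so \duplicator wins. If it is taken only finitely often, then $\pi$ has a suffix along which, at every \duplicator-owned configuration, $\sigma'$ agrees with $\sigma$; that suffix is therefore $\sigma$-consistent and starts in a configuration of $W$, so it is won by $\sigma$ and hence carries infinitely many $\checkmark$s. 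Since the winning condition is suffix-closed --- as already exploited in the proof of Proposition~\ref{prop:invariant_r} --- \duplicator wins $\pi$ in this case too. In all cases $\sigma'$ is an eager winning strategy from every configuration in $W$, in particular from $\sconf{(s,t),\dagger,\dagger,*}$, which is exactly the claim. This argument is essentially the one sketched in the paragraph preceding the lemma; the single step that requires care is checking that the strategy surgery preserves the B\"uchi condition on infinite plays, which the finite/infinite-deviation dichotomy settles.
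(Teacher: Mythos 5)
Your proof is correct and follows the same idea the paper relies on --- replacing every rule~\eqref{gbbg_match_and_move} move by the corresponding rule~\eqref{gbbg_match_move_and_terminate} move and invoking Proposition~\ref{prop:eqconfigurations} to see that the replacement stays inside \duplicator's winning region; the paper in fact states the lemma without proof, asserting it ``essentially follows immediately'' from that observation. Your finite/infinite-deviation case split is exactly the right way to discharge the B\"uchi condition on infinite plays (staying in the winning region alone would not suffice for a B\"uchi objective), a genuine subtlety that the paper's one-line justification glosses over.
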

Observe that \duplicator's rule~\eqref{gbbg_push_pebble_and_terminate} is
only `enabled' once she plays rule~\eqref{gbbg_match_and_move}. As a consequence,
this rule is effectively disabled on all plays in which \duplicator follows
an eager strategy. Thus, if \duplicator wins a configuration
$\sconf{(s,t),\dagger,\dagger,*}$ she can do so without using
rules~\eqref{gbbg_match_and_move} and~\eqref{gbbg_push_pebble_and_terminate}.

Next, consider an augmentation function $g$ which takes configurations from
the \bb-game and yields configurations in the \gb-game,
where $g(\conf{(s,t),c,r}) = \conf{(s,t),c,(t,\frownie),r}$
if $c \neq \dagger$ and $g(\conf{(s,t),\dagger,r}) = \conf{(s,t),\dagger,\dagger,r}$.
We again lift $g$ from configurations to plays in the natural manner.
\begin{prop}\label{prop:eager_induced_strategy}
For every eager strategy $\strategy_{gb}$ of \duplicator in the \gb-game,
there is a strategy $\strategy_{bb}$ of \duplicator in the \bb-game
such that for every $\strategy_{bb}$-consistent play $\pi$, $g(\pi)$ is a
$\strategy_{gb}$-consistent play in the \gb-game.
\end{prop}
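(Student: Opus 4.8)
The plan is to mirror the proof of Proposition~\ref{prop:branching_induced_strategy}, but now lifting \bb-plays to \gb-plays via the augmentation function $g$. The crucial point is what an \emph{eager} strategy of \duplicator does in the $\emptyset$-\gb-game: by definition it never uses rule~\eqref{gbbg_match_and_move}, and in the $\emptyset$-game rules~\eqref{gbbg_match_and_postpone} and~\eqref{gbbg_push_pebble_and_wait} are unavailable (they require $\smiley \in E$, respectively $f \in E$, while $E = \emptyset$), so~\eqref{gbbg_match_and_move} is in fact the only move that could ever produce a $\smiley$-face. Hence, along any play consistent with an eager $\strategy_{gb}$, the face stays $\frownie$ whenever a challenge is present, rule~\eqref{gbbg_push_pebble_and_terminate} is never enabled, and \duplicator's moves are confined to rules~\eqref{gbbg_terminate}, \eqref{gbbg_match_move_and_terminate} and~\eqref{gbbg_push_pebble_and_move}. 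These three correspond, respectively, to \duplicator's first, second and third rule in the \bb-game; moreover, since a $g$-image always carries a partial match $(t,\frownie)$ whose state equals the second component $t$ of the position, the `$\bar v$' occurring in each of those three \gb-rules is exactly that second component.

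First I would define $\strategy_{bb}$ from $\strategy_{gb}$. At a \bb-configuration $\dconf{(s,t),(a,s'),r}$ (so $s \pijl{a} s'$), evaluate $\strategy_{gb}$ at $g(\dconf{(s,t),(a,s'),r}) = \dconf{(s,t),(a,s'),(t,\frownie),r}$. By the observation above this yields one of $\sconf{(s',t),\dagger,\dagger,\checkmark}$ (rule~\eqref{gbbg_terminate}, so then $a = \tau$), $\sconf{(s',v'),\dagger,\dagger,\checkmark}$ with $t \pijl{a} v'$ (rule~\eqref{gbbg_match_move_and_terminate}), or $\sconf{(s,v'),(a,s'),(v',\frownie),*}$ with $t \pijl{\tau} v'$ (rule~\eqref{gbbg_push_pebble_and_move}); each of these is $g$ of a \bb-configuration, namely $\sconf{(s',t),\dagger,\checkmark}$, $\sconf{(s',v'),\dagger,\checkmark}$, and $\sconf{(s,v'),(a,s'),*}$. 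Let $\strategy_{bb}(\dconf{(s,t),(a,s'),r})$ be that \bb-configuration; it is reached by \duplicator's first, second and third \bb-rule respectively, so $\strategy_{bb}$ is a legal (partial) strategy, and by construction $g(\strategy_{bb}(\dconf{(s,t),(a,s'),r})) = \strategy_{gb}(g(\dconf{(s,t),(a,s'),r}))$.

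The heart of the argument is an induction on prefix length: for an arbitrary $\strategy_{bb}$-consistent play $\pi$ in the \bb-game, I would prove simultaneously that $g(\pi)$ is a play of the \gb-game consistent with $\strategy_{gb}$, and that every \spoiler-owned configuration $\conf{(s,t),c,r}$ on $\pi$ satisfies the invariant that $c = \dagger$, or $c = (a,s')$ with $s \pijl{a} s'$. The base case is trivial. In the step, when it is \spoiler's move I check that each move of \spoiler in the \bb-game has a move of \spoiler in the \gb-game realising its $g$-image: selecting $s \pijl{a} s'$ corresponds to rule~\eqref{gbbg_spoiler_fresh_challenge} when $c = \dagger$, to rule~\eqref{gbbg_move_unchanged} when $c = (a,s')$ (here the invariant guarantees that $g$ supplies the partial match $(t,\frownie)$), and to rule~\eqref{gbbg_spoiler_renew_challenge} when $c \notin \{\dagger,(a,s')\}$; picking $t \pijl{a} t'$ corresponds to rule~\eqref{gbbg_spoiler_switches}. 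In all cases the rewards agree and the resulting configuration again satisfies the invariant. When it is \duplicator's move, consistency of $g(\pi)$ is immediate from the defining equation of $\strategy_{bb}$, and the invariant is preserved since rules~\eqref{gbbg_terminate} and~\eqref{gbbg_match_move_and_terminate} clear the challenge, while rule~\eqref{gbbg_push_pebble_and_move} leaves the first component of the position and the challenge untouched. Since every prefix of $\pi$ lifts, $g(\pi)$ is $\strategy_{gb}$-consistent, and (using the invariant) \spoiler is stuck at a configuration of $\pi$ exactly when \spoiler is stuck at its $g$-image, so $g(\pi)$ is indeed a (maximal) play.

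I expect the bulk of the work to be this move-by-move dictionary, and the one genuinely load-bearing idea to be that eagerness is precisely what keeps the face at $\frownie$ and the partial-match state synchronised with the second component of the position; without it, $\strategy_{gb}$ could play rule~\eqref{gbbg_match_and_move} (producing a $\smiley$-face and retaining the challenge) or rule~\eqref{gbbg_push_pebble_and_terminate}, for which the \bb-game has no counterpart, so that $g^{-1}$ would no longer land in the \bb-game. A minor additional remark worth recording is that $g$ preserves rewards along the play, so together with the `stuck iff stuck' observation one obtains that \duplicator wins $\pi$ iff she wins $g(\pi)$; combined with the lemma above that a winning strategy may be taken eager, this is what will later let one conclude that $s \gbbg_{\emptyset} t$ implies $s \bbg t$.
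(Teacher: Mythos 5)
Your proof is correct and takes essentially the same approach as the paper: the paper's own proof of Proposition~\ref{prop:eager_induced_strategy} consists of the single line ``analogous to the proof of Proposition~\ref{prop:branching_induced_strategy}'', and your argument is precisely that analogy carried out in detail. The key observations you make---that eagerness together with $E=\emptyset$ confines \duplicator to rules~\eqref{gbbg_terminate}, \eqref{gbbg_match_move_and_terminate} and~\eqref{gbbg_push_pebble_and_move}, that these pull back through $g$ to \duplicator's three \bb-rules, and that the prefix induction with the invariant keeps the partial match synchronised with the position---are exactly what the intended analogous argument requires.
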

\begin{proof} Analogous to the proof of Proposition~\ref{prop:branching_induced_strategy}.
\end{proof}
As a result, every eager strategy in the \gb-game induces a strategy
in the \bb-game. We then have the following result.
\begin{thm}\label{thm:branching_special_case}
We have $s \bbg t$ iff $s \gbbg_{\emptyset} t$.
\end{thm}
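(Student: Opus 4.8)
The plan is to derive Theorem~\ref{thm:branching_special_case} from the two simulation results between the \bb-game and the $\emptyset$-\gb-game that have just been established, together with the observation that eager strategies suffice for \duplicator in the $\emptyset$-\gb-game. There is essentially nothing left to compute: the content of the theorem has been packaged into Propositions~\ref{prop:branching_induced_strategy} and~\ref{prop:eager_induced_strategy} and the intervening Lemma on eager strategies, so the proof is a short bookkeeping argument that connects `\duplicator wins' in one game with `\duplicator wins' in the other.

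First I would prove the forward implication, $s \bbg t \implies s \gbbg_{\emptyset} t$. Assume \duplicator wins the \bb-game for the position $(s,t)$, i.e.\ she has a winning strategy $\strategy_{bb}$ from $\sconf{(s,t),\dagger,\checkmark}$ (equivalently from $\sconf{(s,t),\dagger,*}$, since the reward component does not affect who wins a configuration, cf.\ the reasoning behind Proposition~\ref{prop:invariant_r}). Apply Proposition~\ref{prop:branching_induced_strategy} to obtain the induced strategy $\strategy_{gb}$ in the $\emptyset$-\gb-game, whose defining property is that $f(\pi)$ is a $\strategy_{bb}$-consistent play whenever $\pi$ is $\strategy_{gb}$-consistent. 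It then suffices to check that $\strategy_{gb}$ is winning from $\sconf{(s,t),\dagger,\dagger,*}$. For finite plays this is clear: \spoiler's moves in the two games are in bijective correspondence along these plays (this is exactly what the invariant in the proof of Proposition~\ref{prop:branching_induced_strategy} records), so \spoiler gets stuck in $\pi$ precisely when she gets stuck in $f(\pi)$, and the latter is won by \duplicator. For infinite plays, the winning condition in both games is that infinitely many $\checkmark$ rewards occur; since $f$ preserves the reward component of every configuration, $\pi$ yields infinitely many $\checkmark$s iff $f(\pi)$ does, and $f(\pi)$ is won by \duplicator by assumption. Hence $\strategy_{gb}$ is winning and $s \gbbg_{\emptyset} t$.

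For the reverse implication, $s \gbbg_{\emptyset} t \implies s \bbg t$, assume \duplicator wins the configuration $\sconf{(s,t),\dagger,\dagger,*}$ in the $\emptyset$-\gb-game. By the Lemma on eager strategies, she has an \emph{eager} winning strategy $\strategy_{gb}$ from this configuration, so we may assume $\strategy_{gb}$ never uses rule~\eqref{gbbg_match_and_move} (and consequently never uses rule~\eqref{gbbg_push_pebble_and_terminate} either, as noted just before the statement). Apply Proposition~\ref{prop:eager_induced_strategy} to obtain a strategy $\strategy_{bb}$ of \duplicator in the \bb-game whose defining property is that $g(\pi)$ is a $\strategy_{gb}$-consistent play in the $\emptyset$-\gb-game for every $\strategy_{bb}$-consistent play $\pi$. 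Again check that $\strategy_{bb}$ is winning from $\sconf{(s,t),\dagger,*}$: \spoiler gets stuck in $\pi$ iff she gets stuck in $g(\pi)$ (the augmentation $g$ only adds the $\frownie$-decorated partial match mirroring the position, which does not restrict \spoiler's options), so finite $\strategy_{bb}$-consistent plays are won by \duplicator; and $g$ leaves the reward component untouched, so an infinite $\strategy_{bb}$-consistent play $\pi$ has infinitely many $\checkmark$s iff $g(\pi)$ does, the latter being won by \duplicator since $\strategy_{gb}$ is winning. Therefore $s \bbg t$, and combining the two implications gives $s \bbg t \iff s \gbbg_{\emptyset} t$.

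The main obstacle, such as it is, lies not in this final proof but in the two preceding propositions, and in particular in the \emph{eagerness} step: the $\emptyset$-\gb-game contains rules~\eqref{gbbg_match_and_move} and~\eqref{gbbg_push_pebble_and_terminate} that have no analogue in the \bb-game, so Proposition~\ref{prop:eager_induced_strategy} only produces a \bb-strategy from a \gb-strategy that avoids these rules, and one must first invoke Proposition~\ref{prop:eqconfigurations} to argue that restricting to eager strategies loses no power for \duplicator. In the write-up of the final theorem itself I would be careful about the reward component (writing $*$ versus $\checkmark$ in the starting configuration), since the official `full play' starting configurations differ slightly ($\sconf{(s,t),\dagger,*}$ for the \bb-game via Definition~\ref{def:branching_bisimulation_game} and $\sconf{(s,t),\dagger,\dagger,*}$ for the \gb-game), but this is harmless by Proposition~\ref{prop:invariant_r}. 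Everything else is routine transfer of the winning condition across the maps $f$ and $g$.
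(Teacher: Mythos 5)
Your proposal is correct and follows essentially the same route as the paper: the left-to-right direction is obtained by transferring a winning \bb-strategy through Proposition~\ref{prop:branching_induced_strategy} and the map $f$, and the right-to-left direction by first passing to an eager winning strategy (justified via Proposition~\ref{prop:eqconfigurations}) and then applying Proposition~\ref{prop:eager_induced_strategy} and the map $g$. The extra detail you supply on why the winning conditions transfer (preservation of rewards and of \spoiler's stuck positions under $f$ and $g$) is a welcome elaboration of what the paper leaves implicit, but it is not a different argument.
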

\begin{proof}
The implication from left to right follows from Proposition~\ref{prop:branching_induced_strategy}.
This can be seen as follows:
assuming that $s \bbg t$ holds, then \duplicator has a strategy to win
configuration $\sconf{(s,t),\dagger,*}$. Consequently, \duplicator has a
strategy in the \gb-game such that every play $\pi$ that
is consistent with this play induces a play $f(\pi)$ in the \bb-game that is consistent with \duplicator's winning strategy. Since $f(\pi)$ is
won by \duplicator in the \bb-game, $\pi$ is won by
her in the \gb-game. Therefore, \duplicator wins
$\sconf{(s,t),\dagger,\dagger,*}$ in the \gb-game and
thus $s \gbbg_{\emptyset} t$.

Using identical arguments, the implication from right to left can be shown
to follow from Proposition~\ref{prop:eager_induced_strategy}.
\end{proof}

\subsection{Variations on the Generic Bisimulation Games}
\label{sec:variations}

We finish this section with a brief discussion on two possible
alternative definitions for our \gb-game. The
definition we presented in Section~\ref{sec:generic_game_def} is,
as we illustrated in the previous subsection, closely related to
the \bb-game we presented in Section~\ref{sec:games}.
\medskip

A first alternative definition that is similar in spirit to the
current definition is obtained by uncoupling `termination' from
the walk along the edges in the transition system.  More specifically,
we can drop \duplicator's rules~\eqref{gbbg_match_move_and_terminate}
and~\eqref{gbbg_push_pebble_and_terminate} and rephrase rule~\eqref{gbbg_terminate}
as follows:
\begin{itemize}
\item move to $\sconf{(s_0',s),\dagger,\dagger,\checkmark}$ if
$a = \tau$ or $f = \smiley$.
\end{itemize}
The resulting game definition is more concise than the original
one. While it is not too hard to see that the resulting game is
still the same as the original game play, the additional round that
\duplicator needs to conclude matching \spoiler's challenge makes
that it is a bit more involved to show that it generalises the
\bb-game.\medskip

The second alternative we present is further removed from our current
definition. Rather than parameterising \duplicator's rules, one can
also parameterise the rules of \spoiler. This means that \duplicator's
role is reduced to updating the partial match, moving along the
transition system taking a number of $\tau$ actions and the action
occurring in \spoiler's challenge. Depending on the relation that
is being characterised, \spoiler can `prematurely' decide to pose a new
challenge to \duplicator and continue the game play
from a fresh position.  This way, \eg\ branching bisimilarity
is captured by allowing \spoiler to pose new challenges at any point
in the game play, whereas, \eg\ $\eta$-bisimilarity is captured by
allowing \spoiler to pose new challenges so long as the configurations
contain partial matches of the form $(s,\frownie)$. Formally, the
game is as follows.

\begin{defi}
\label{def:alternative_generic_bisimulation_game}
For each $E \subseteq \{\frownie,\smiley\}$, a 
\emph{dual generic bisimulation (dgb) game} on an LTS $L$
(or $E$-dgb-game) is played by
\spoiler and \duplicator on an arena  of \spoiler-owned
configurations $\sconf{(s,t),c,m,r}$ and \duplicator-owned configurations
$\dconf{(s,t),c,m,r}$, where $(s,t) \in \pos$, $c \in \chaldagger$, $m \in \mov$
and $r \in \rew$.  We write $\conf{(s,t),c,m,r}$
if we do not care about the owner of the configuration. \spoiler's
and \duplicator's moves are given by the following rules:

\begin{itemize}
\item \spoiler moves from a configuration $\sconf{(s,t),c,m,r}$ by:
  \begin{enumerate}
    \item if $c \neq \dagger$, moving to $\dconf{(s,t),c,m,r}$, or
    \item if $c = \dagger$, either:
          \begin{itemize}
            \item selecting $s \pijl{a} s'$ and moving to
          $\dconf{(s,t), (a,s'), (t,\frownie), *}$, or
            \item selecting $t \pijl{a} t'$ and moving to
          $\dconf{(t,s), (a,t'), (s,\frownie), \checkmark}$.
          \end{itemize}
    \item if $c \neq \dagger$, $m = (\bar{t},\frownie)$ and $\frownie \notin E$, either:
         \begin{itemize} 
           \item selecting $s \pijl{a} s'$ and moving to
              $\dconf{(s,\bar{t}),(a,s'), (\bar{t},\frownie), \checkmark}$, or
           \item selecting $\bar{t} \pijl{a} t'$ and moving to
              $\dconf{(\bar{t},s),(a,t'), (s,\frownie), \checkmark}$.
         \end{itemize}
    \item if $c = (a,s')$, $m = (\bar{t},\smiley)$  and $\smiley \notin E$, either:
         \begin{itemize}
           \item selecting $s' \pijl{b} s''$ and moving to
              $\dconf{(s',\bar{t}),(b,s''),(\bar{t},\frownie),\checkmark}$
           \item selecting $\bar{t} \pijl{b} t'$ and moving to
              $\dconf{(\bar{t},s'),(b,t'),(s',\frownie),\checkmark}$
         \end{itemize}
  \end{enumerate}

\item[]

\item \duplicator responds from a configuration $\dconf{(u,v),(a,u'),(\bar{v},f),r}$ by:
\begin{enumerate}

 \item not moving if $a = \tau$ or $f = \smiley$ and continuing in 
       $\sconf{(u',\bar{v}),\dagger,\dagger,\checkmark}$, or

 \item moving $\bar{v} \pijl{a} v'$ if available and
       continuing in $\sconf{(u,v),(a,u'),(v',\smiley),*}$
       if $f = \frownie$, or

 \item moving $\bar{v} \pijl{\tau} v'$ if available and continuing in 
       configuration $\sconf{(u,v),(a,u'),(v',f),*}$.

\end{enumerate}
\end{itemize}
\duplicator wins a finite play starting in a configuration
$\conf{(s,t),c,m,r}$ if \spoiler gets stuck, and she wins an infinite
play if the play yields infinitely many $\checkmark$ rewards.  All
other plays are won by \spoiler.  We say that a configuration is
won by a player when she has a strategy that wins all plays starting
in it.  \emph{Full} plays of the game start in a configuration
$\sconf{(s,t),\dagger,\dagger,*}$; we say that \duplicator wins the
$E$-dgb-game for a position $(s,t)$, if the configuration
$\sconf{(s,t),\dagger,\dagger,*}$ is won by her; in this case, we write
$s \gbbg_{dE} t$. Otherwise, we say that \spoiler wins that game.

\end{defi}
We claim, without further proof, that the dual generic game characterises
(again) branching bisimilarity, when $E = \emptyset$; weak
bisimilarity, when $E = \{\frownie,\smiley\}$; $\eta$-bisimilarity, when  $E =
\{\smiley\}$; and delay bisimilarity, whenever $E = \{\frownie\}$.


\section{Extensions}\label{sec:extensions} In this section we
investigate how our generic bisimulation games can be modified
to obtain other relations. 

We observe that
all relations discussed so far are not sensitive to divergences,
in the sense that a state in which a divergence (an infinite
$\tau$-path) is possible, cannot be distinguished from a state in
which no divergence is possible, but which otherwise exhibits the
exact same behaviour. Our first modification of our game is thus
to make it such that it can distinguish divergent states from
non-divergent states. For the branching bisimulation case our
modified game characterises \emph{branching bisimilarity with explicit
divergence} (also sometimes called divergence preserving branching
bisimilarity).  Second, we show that, with minimal changes, our
game can be modified to obtain the \emph{simulation} counterparts
of the bisimulations we discussed so far.

\subsection{Explicit Divergences}
\label{sec:explicit_divergences}
%

Intuitively, branching bisimulation with explicit divergence imposes the following additional constraint on top of branching bisimulation.
Whenever there is a divergent path from one of the states in a related pair 
for which all states on that path are related, the related state is also part of such a divergent path.
In other words, these branching bisimulations will preserve the existence of infinite executions
of internal actions through states with the same (behavioural) potentials.

Note that this is not the simplest possible way of defining a refinement of branching bisimulation that captures divergences. In fact, there are several competing proposals in the literature. For instance, the one studied by Bergstra \etal~\cite{bergstra_failures_1987} simply imposes that in order to be related by such a bisimulation, whenever one of the states is divergent, \ie, admits an infinite sequence of $\tau$-steps, then the other is too. This, however, produces a coarser divergence sensitive branching bisimulation than the one we study in this section, which is essentially based on~\cite{vanglabbeek_branching_2009, 
vanglabbeek_branching_1996, vanglabbeek_linear_1993, yin_branching_2014}. In addition to the (technical) argument that the resulting relation is a congruence for parallel composition and distinguishes livelocked states from deadlocked states~\cite{vanglabbeek_computation_2009}, it is argued that branching bisimulations are able 
to see (up to branching bisimulation itself) 
all the intermediate states along a computation, and therefore this idea should be preserved when considering divergent computations. Other authors preferred the opposite approach, where an even
coarser treatment of divergence to that in~\cite{bergstra_failures_1987} is proposed. For instance, 
 Walker made a quite detailed study of bisimulation and divergence in~\cite{walker90}, where a
 notion of {\em local divergence} that takes into account the possibility of executing new observable
 actions in the future, is considered. This notion was already proposed by Hennessy and Plotkin 
 as early as in 1980 ~\cite{hennessy80}, although its development had to wait ten more years.

There are also various formalisations of branching bisimilarity with explicit divergence.
Van Glabbeek \etal investigated these variations
in~\cite{vanglabbeek_branching_2009}, proving that all such variations were in essence equivalent. We here use one of their (many) equivalent characterisations:
%
%
%
%

\begin{defi}[{\cite[Condition D$_4$]{vanglabbeek_branching_2009}}]
\label{def:branching_bisimulation_with_div}
A symmetric relation $R \subseteq S \times S$ is 
a \emph{branching bisimulation with explicit divergence} if and only if
$R$ is a branching bisimulation and for all $s \rel{R} t$, if
there is an infinite sequence $s = s_0 \pijl{\tau} s_1 \pijl{\tau} s_2 \cdots$,
then there is a state $t'$ such that $t \tauplus t'$ and
for some $k$, $s_k \rel{R} t'$.
We write $s \bbedbisim t$ iff there is a branching bisimulation with explicit
divergence $R$ such that $s \rel{R} t$.
\end{defi}
We here opt to use condition D$_4$ instead of, \eg\ their condition D, in
which all states on the divergent paths are related. Partly, this
is to allow for local arguments in the game based definition and
the corresponding proof, and partly because not all their conditions
turn out to generalise straightforwardly to our $(x,y)$-generic
bisimulation. For the moment we defer a discussion on this topic
to Section~\ref{sec:diverging_discussion}.\medskip

In the previous section we have presented our $(x,y)$-generic bisimulation as a natural way of capturing, in a generic way, four kinds of bisimulation capturing abstraction. We continue this generalising approach in this section, defining $(x,y)$-\emph{generic bisimulation with explicit divergences} by adding the constraint on divergent computations from Definition~\ref{def:branching_bisimulation_with_div} to our definition of $(x,y)$-generic bisimulation from Definition~\ref{def:generic_bisimulation}.

\begin{defi}

\label{def:generic_bisimulation_with_div}
A symmetric relation $R \subseteq S \times S$ is an $(x,y)$-\emph{generic 
bisimulation with explicit divergence} if and only if
$R$ is an $(x,y)$-generic bisimulation and for all $s \rel{R} t$, if
there is an infinite sequence $s = s_0 \pijl{\tau} s_1 \pijl{\tau} s_2 \cdots$,
then there is a state $t'$ such that $t \tauplus t'$ and
for some $k$, $s_k \rel{R} t'$.
We write $s \gbbedbisim t$ iff there is an $(x,y)$-generic bisimulation with explicit
divergence $R$ such that $s \rel{R} t$.
\end{defi}

Let us next argue why this is an adequate way of defining a divergence sensitive 
extension of all the instances of our $(x,y)$-generic bisimulation. One could argue that we are adding a condition which is too strong for some of the relations since it originates from the branching bisimulation case. However, the added condition uses the transitive $\tau$-closure and refers to the particular relation we are defining in each case, and therefore it allows to consider the intermediate states along a divergent computation with respect to the corresponding equivalence. 
Essentially, we are just applying a `categorical' approach, by asking for the preservation of divergent
computations including the `observation' of the semantical information that we are capturing in 
each case. Moreover, in this way we are obtaining a uniformly defined extension of all the instances of our $(x,y)$-generic bisimulation, something that cannot be claimed if for any 
reason we would prefer the alternative finer (or coarser) extension for some (but not all) of the instances of our $(x,y)$-generic bisimulation.

Note that with the above definition at hand, it is not too hard to
prove that $\gbbedbisim$ is an equivalence relation. Moreover, it
follows immediately that any relation $R$ that is a $(b,b)$-generic
bisimulation with explicit divergence is also a branching bisimulation
with explicit divergence.
Also, $\gbbedbisim$ has the stuttering property.
\begin{lem}
  \label{lem:gbbedbisim_stuttering_property}
Let $x,y \in \{o,b\}$. Then $\gbbedbisim$ satisfies the stuttering property (Definition~\ref{def:stuttering property}).
\end{lem}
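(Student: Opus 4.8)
The plan is to mimic the proof of Lemma~\ref{lem:gen_bisim_stuttering_property} essentially verbatim, adding only the verification that the divergence condition of Definition~\ref{def:generic_bisimulation_with_div} is preserved by the constructed relation. Concretely, suppose $t_0 \pijl{\tau} t_1 \cdots \pijl{\tau} t_k$ with $t_0 \gbbedbisim t_k$, and set
\[
R = \{ (t_0, t_i), (t_i, t_0) \mid 0 \leq i < k \} \cup \gbbedbisim.
\]
First I would observe that $R$ is an $(x,y)$-generic bisimulation: this is exactly the argument already carried out in the proof of Lemma~\ref{lem:gen_bisim_stuttering_property}, since $\gbbedbisim \subseteq \gbbbisim$ and the transfer-condition reasoning there only used that $\gbbbisim$ is an $(x,y)$-generic bisimulation together with the transitions $t_i \taustar t_k$; none of it relied on the absence of divergence. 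So that part can be quoted rather than re-proved.

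The substantive new obligation is the divergence clause: for every pair $p \rel{R} q$, if there is an infinite $\tau$-sequence from $p$, then there is $q'$ with $q \tauplus q'$ and $p_\ell \rel{R} q'$ for some $\ell$. I would split into the cases appearing in the definition of $R$. For pairs already in $\gbbedbisim$ there is nothing to do. For a pair $(t_i, t_0)$ with $0 \le i < k$: an infinite $\tau$-sequence from $t_i$ can be prefixed by $t_0 \pijl{\tau} \cdots \pijl{\tau} t_i$, giving an infinite $\tau$-sequence from $t_0$; applying the divergence condition to $t_0 \gbbedbisim t_k$ yields $t_k \tauplus t_k'$ with some state on the $t_0$-sequence related by $\gbbedbisim$ to $t_k'$ — and since $t_0 \taustar t_i \tauplus t_k'$ is actually $t_0 \tauplus t_k'$ (because $t_0 \ne t_k$ forces $k\ge 1$, hence the $t_i$-part is nonempty in the combined sequence, or more directly we just take $q' = t_k'$ and note $t_0 \tauplus t_k'$), we are done, possibly using $\gbbedbisim \subseteq R$ to place the witnessing pair in $R$. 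For a pair $(t_0, t_i)$ with $0 \le i < k$: an infinite $\tau$-sequence from $t_0$ must, by the divergence condition for $t_0 \gbbedbisim t_k$, be matched from $t_k$; but we need it matched from $t_i$, which we get via $t_i \taustar t_k \tauplus t_k'$, i.e.\ $t_i \tauplus t_k'$, with the same witnessing state related to $t_k'$ — and again $\gbbedbisim \subseteq R$ handles the membership. The one point requiring a little care is ensuring the matching path is \emph{strictly} positive length; since $i < k$ we have $t_i \taustar t_k$ and then at least one further $\tau$-step into $t_k'$, so $t_i \tauplus t_k'$ holds.

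The main obstacle, such as it is, is bookkeeping rather than mathematics: one must be careful that the state $s_\ell$ on the divergent sequence that the divergence condition hands back lies on the \emph{same} divergent sequence we started from (which it does, by definition of the condition), and that the witnessing relation instance $(s_\ell, t_k')$ — which lives in $\gbbedbisim$ — is placed in $R$ using $\gbbedbisim \subseteq R$. No genuinely new idea is needed beyond what Lemma~\ref{lem:gen_bisim_stuttering_property} already supplies, so I would phrase the write-up as: ``$R$ is an $(x,y)$-generic bisimulation by the proof of Lemma~\ref{lem:gen_bisim_stuttering_property}; it remains to check the divergence clause,'' followed by the short case analysis above, and conclude that $R$ witnesses $t_i \gbbedbisim t_j$ for all $0 \le i,j \le k$.
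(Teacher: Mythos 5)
Your setup (the relation $R$, quoting Lemma~\ref{lem:gen_bisim_stuttering_property} for the transfer condition, and then checking D$_4$ by cases) is exactly the paper's, and your treatment of the pairs $(t_0,t_i)$ coincides with the paper's: apply D$_4$ to $t_0 \gbbedbisim t_k$ to get $t_k \tauplus t'$ with $\bar t_\ell \gbbedbisim t'$, and then $t_i \taustar t_k$ gives $t_i \tauplus t'$. That part is fine.

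The gap is in the $(t_i,t_0)$ direction. You take the given divergence $t_i = \bar t_0 \pijl{\tau} \bar t_1 \pijl{\tau} \cdots$, prefix it with $t_0 \pijl{\tau} \cdots \pijl{\tau} t_i$, and apply D$_4$ to $t_0 \gbbedbisim t_k$ for the \emph{prefixed} sequence. The state that D$_4$ hands back is then only guaranteed to lie on the prefixed sequence; it may well be one of $t_0,\dots,t_{i-1}$, which is \emph{not} a state of the divergence from $t_i$ that the condition for the pair $(t_i,t_0)$ is about. Your parenthetical reassurance that the witnessing state ``lies on the same divergent sequence we started from \ldots by definition of the condition'' is exactly where this breaks: the condition was invoked on a different sequence from the one you must witness. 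So as written the obligation for $(t_i,t_0)$ with $i\geq 1$ is not discharged. The repair is immediate and is what the paper does: for $i>0$ we have $t_0 \tauplus t_i$ outright, and $\bar t_0 = t_i \gbbedbisim t_i$ by reflexivity, so $t'=t_i$ and $\ell=0$ already witness D$_4$ --- no appeal to the divergence condition of $t_0 \gbbedbisim t_k$ is needed in this direction at all. (For $i=0$ the pair is $(t_0,t_0)\in\gbbedbisim$ and there is nothing to check.)
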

\begin{proof}
Let $t_0 \pijl{\tau} t_1 \cdots \pijl{\tau} t_k$ with $t_0 \gbbedbisim t_k$. We define relation $R$ as follows:
\[
R = \{ (t_0, t_i), (t_i, t_0) \mid 0 \leq i < k \} \cup \gbbedbisim
\]
We can prove that $R$ is a $(x,y)$-generic bisimulation relation with explicit divergence. Proving the transfer condition for all pairs $t_0 \mathrel{R} t_i$ is again analogous to the proof of \cite[Lemma 4.9.2]{sangiorgi_introduction_2012}.
For condition D$_4$ we sketch the proof.
We distinguish two cases. Suppose $i > 0$ (for $i = 0$ the result follows immediately from reflexivity).
\begin{itemize}
  \item $t_0 \mathrel{R} t_i$. Suppose there exists an infinite sequence $t_0 = \bar{t}_0 \pijl{\tau} \bar{t}_1 \pijl{\tau} \cdots$. First, observe that $t_0 \gbbedbisim t_k$, so there exists $t_k \tauplus t'$ and $\ell$ such that $t' \gbbedbisim \bar{t}_\ell$. Since $t_i \taustar t_k$, also $t_i \tauplus t'$.
  \item $t_i \mathrel{R} t_0$. Suppose there exists an infinite sequence $t_i = \bar{t}_0 \pijl{\tau} \bar{t}_1 \pijl{\tau} \cdots$. Since $i > 0$ and $t_0 \taustar t_i$, also $t_0 \tauplus t_i$, and the result follows from $t_i \gbbedbisim t_i$ due to reflexivity.
    \qedhere
\end{itemize}
\end{proof}

In the remainder of this section we present a generic game characterisation of
$(x,y)$-generic bisimilarity with explicit divergence and prove its correctness,
seeing that the proofs presented in \cite{branching_games_2016}, for the particular case of branching bisimilarity can be easily transferred to the general case.
We first recall the branching bisimulation with explicit divergence game introduced in \cite{branching_games_2016}.

\begin{defi}
\label{def:branching_bisimulation_with_divergence_game}
A \emph{branching bisimulation with explicit divergence (\bbed) game} on an LTS  $L$ is played
by players \spoiler and \duplicator on an arena  of \spoiler-owned
configurations $\sconf{(s,t),c,r}$ and \duplicator-owned configurations
$\dconf{(s,t),c,r}$, where $((s,t),c,r) \in \pos \times \chaldagger \times
\rew$, and $\pos$, $\chaldagger$ and $\rew$ are as before.
By convention, we write $\conf{(s,t),c,r}$ if we do not care about the
owner of the configuration. 
\begin{itemize}
\item \spoiler moves from a configuration $\sconf{(s,t),c,r}$ by:
  \begin{enumerate}
    \item selecting $s \pijl{a} s'$ and moving to
$\dconf{(s,t),(a,s'),*}$ if
$c = (a,s')$ or $c = \dagger$, and to $\dconf{(s,t),(a,s'),\checkmark}$,
otherwise; or

    \item picking some $t \pijl{a} t'$ and moving to
$\dconf{(t,s),(a,t'),\checkmark}$.
  \end{enumerate}

\vspace{0.3cm}
\item \duplicator responds from a configuration $\dconf{(u,v),(a,u'),r}$ by:
\begin{enumerate}

 \item \label{bbged_terminate} not moving if $a = \tau$ and continuing in
    configuration $\sconf{(u',v),\dagger,*}$, or, 

  \item moving $v \pijl{a} v'$ if available
    and continuing in configuration $\sconf{(u',v'), \dagger, \checkmark}$, or

 \item moving $v \pijl{\tau} v'$ if available
    and continuing in configuration $\sconf{(u,v'), (a,u'), *}$.
\end{enumerate}
\end{itemize}
\duplicator wins a finite play starting in a configuration
$\conf{(s,t),c,r}$ if \spoiler gets stuck, and she wins an infinite
play if the play yields infinitely many $\checkmark$ rewards. All
other plays are won by \spoiler. 
We say that a configuration is won by a player when 
she has a strategy that wins all plays starting in it.
{\em Full} plays of the game start in a configuration  $\sconf{(s,t),\dagger,*}$;
we say that 
 \duplicator wins the \bbed-game for a position $(s,t)$, if the configuration $\sconf{(s,t),\dagger,*}$
is won by it; in this case, we write $s \bbedg t$. Otherwise, we say that \spoiler wins that game.
\end{defi}

Comparing this definition with Definition~\ref{def:branching_bisimulation_game}, it is hard to pinpoint the difference due to the similarities between the games. In fact, the only difference is in \duplicator's first rule (rule ~\eqref{bbg_terminate} in Definition~\ref{def:branching_bisimulation_game}), in which idling in response to an internal move is no longer rewarded, whereas it was in the \bb-game.
As we proved in~\cite{branching_games_2016}, this single change is sufficient to 
capture branching bisimulation with explicit divergence. We here only repeat this result.
\begin{thm}
We have $\bbedbisim$ = $\bbedg$.
\end{thm}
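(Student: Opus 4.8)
The plan is to prove the two inclusions $\bbedbisim \subseteq \bbedg$ and $\bbedg \subseteq \bbedbisim$ separately, in each case reusing the corresponding argument for the \bb-game (Theorem~\ref{th:bbbisim_is_bbg}) and adding a treatment of the divergence condition D$_4$ of Definition~\ref{def:branching_bisimulation_with_div}. The single structural fact I would exploit is that the \bbed-game arises from the \bb-game by dropping the $\checkmark$ from \duplicator's idling rule~\eqref{bbged_terminate}: a \duplicator who only ever idles now loses, and the surplus progress guaranteed by D$_4$ is exactly what lets her keep collecting $\checkmark$'s along divergent computations.

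For $\bbedg \subseteq \bbedbisim$ I would set $R = \{(s,t) \mid \text{\duplicator wins } \sconf{(s,t),\dagger,*}\}$. Symmetry of $R$ is immediate, and $R$ is a branching bisimulation by the same argument that shows the \bb-game captures $\bbbisim$: from a winning reply of \duplicator to a challenge $s \pijl{a} s'$ one reads off a finite run of $\tau$-pushes on the right ending in a match, which witnesses the branching transfer condition, and the position then reached is again \duplicator-won, hence in $R$. For D$_4$, assume $s \rel R t$ and an infinite $\tau$-chain $s = s_0 \pijl{\tau} s_1 \pijl{\tau} \cdots$, and let \spoiler play the ``chase-the-divergence'' strategy: she always challenges with the next step $s_i \pijl{\tau} s_{i+1}$, re-issues the \emph{same} challenge whenever \duplicator merely pushes a $\tau$ on the right, advances $i$ whenever \duplicator idles, and never switches challenge or position. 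Against this strategy \spoiler hands out no $\checkmark$, and idling is unrewarded, so since \duplicator wins she must eventually answer with a genuine match $w \pijl{\tau} w'$; the first such move lands in $\sconf{(s_{i+1},w'),\dagger,\checkmark}$ where $t \taustar w \pijl{\tau} w'$, hence $t \tauplus w'$, and where $s_{i+1} \rel R w'$ because that configuration is won by \duplicator. This is precisely what D$_4$ requires, so $R$ is a branching bisimulation with explicit divergence and $s \bbedbisim t$.

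For $\bbedbisim \subseteq \bbedg$ I would fix a branching bisimulation with explicit divergence and, without loss of generality, take $R = \bbedbisim$ itself, which has the stuttering property (a special case of Lemma~\ref{lem:gbbedbisim_stuttering_property}). Assuming $s \bbedbisim t$, I would build a winning strategy for \duplicator from $\sconf{(s,t),\dagger,*}$. On challenges she does not answer by idling she plays as in the \bb-game completeness proof: the transfer condition together with stuttering keeps every \spoiler-owned configuration she visits related, and she answers each challenge by finitely many $\tau$-pushes on the right followed by a rewarded match, or by a single rewarded move when a target is already related. The only way for \spoiler to deny her infinitely many $\checkmark$'s is to force her to idle forever with an unchanging right-hand state $v$; but such an infinite run of idles traces a divergence $s_0 \pijl{\tau} s_1 \pijl{\tau} \cdots$ on the left whose states are all related to $v$, and D$_4$ applied to $s_0 \rel R v$ provides $v \tauplus v'$ with $s_k \rel R v'$ for some $k$. \duplicator uses this witness to break out: after a suitable finite delay she pushes the right-hand pebble along the chain $v \tauplus v'$ and closes it with a genuine match, reaching a configuration whose position is again related and so earning a $\checkmark$. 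Interleaving this divergence-handling with the bisimulation strategy yields a strategy under which every infinite play carries infinitely many $\checkmark$'s, so \duplicator wins and $s \bbedg t$.

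The step I expect to be the main obstacle is making this divergence-handling precise. \duplicator cannot recognise in finite time that \spoiler has committed to a divergent computation, nor extract the index $k$ from D$_4$ ahead of time, so the informal ``delay, then break out along the D$_4$ witness'' recipe has to be turned into a genuine strategy --- finite-memory, or memoryless after appealing to the determinacy of these B\"uchi games --- and one must verify that the bisimulation bookkeeping and the D$_4$ bookkeeping never interfere. This is carried out in~\cite{branching_games_2016}; alternatively, the statement is the instance $E = \emptyset$ of the generic result of the next section (Theorem~\ref{th:bbbedbisim_is_bbedg}).
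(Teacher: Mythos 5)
Your proposal is correct and takes essentially the same route as the paper, which for this theorem only cites \cite{branching_games_2016} and sketches exactly the intuition you elaborate; your two directions mirror the paper's generic Lemmata~\ref{lem:bbedbisim_is_a_bbedg}--\ref{lem:bbedg_is_a_bbbedbisim} (soundness via a \spoiler strategy that feeds the divergence one $\tau$-step at a time and never concedes a $\checkmark$, completeness by patching the \bb-game strategy on perpetually-idling suffixes using condition D$_4$ and the stuttering property). One small caveat on your fallback remark: instantiating Theorem~\ref{th:bbbedbisim_is_bbedg} with $E=\emptyset$ yields $\bbedbisim\,=\,\gbbg_{\emptyset}^{ed}$, and identifying $\gbbg_{\emptyset}^{ed}$ with $\bbedg$ still requires a divergence-aware analogue of Theorem~\ref{thm:branching_special_case}, which the paper does not spell out.
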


While the formal proof of correctness of this game in~\cite{branching_games_2016} is fairly involved, the intuition is quite simple. First, observe that in order to check a divergence, \spoiler can challenge \duplicator by presenting the internal steps on the divergence one by one. \duplicator is forced to reply to infinitely many of these internal steps by an internal step of her own, otherwise she loses the game since she does not collect infinitely many $\checkmark$ rewards.
Instead, in the original \bb-game \duplicator could win by replying to a divergence by just remaining idle, because her first rule also rewarded her with a $\checkmark$. We also need to check that when not rewarding `isolated' $\tau$ moves that \duplicator replies to by remaining idle, \duplicator cannot (incorrectly) be declared loser of a game. This cannot happen, simply because an infinite play that is not checking a divergence will include infinitely many challenges that correspond to visible actions, and \duplicator should match all of them, so that she will collect infinitely many $\checkmark$ rewards that she needs to win the play, even if she does not get the $\checkmark$ rewards corresponding to the idling responses to $\tau$ moves.\medskip

We next use the idea from Definition~\ref{def:branching_bisimulation_with_divergence_game} to introduce the \emph{generic bisimulation with explicit divergence (\gbed) game}.

\begin{defi}
\label{def:generalised game_with_divergences}
Let 
$E \subseteq \{ \smiley, \frownie \} $.
A \emph{generic bisimulation with explicit divergence (\gbed) game} (or $E$-bisimulation game with explicit divergences) on an LTS $L$ is played by players \spoiler and \duplicator
on an arena  of \spoiler-owned configurations $\sconf{(s,t),c,m,r}$
and \duplicator-owned configurations $\dconf{(s,t),c,m,r}$, where
$(s,t) \in \pos$, $c \in \chaldagger$, $m \in \mov$, $r \in \rew$, and
$\pos, \chaldagger, \rew$ and $\mov$ are as before.
We write $\conf{(s,t),c,m,r}$ if we do not care about
the owner of the configuration. 
\begin{itemize}
\item From a configuration $\sconf{(s,t),c,m,r}$, \spoiler can:

  \begin{enumerate}
    \item \label{gbbged_move_unchanged} move to $\dconf{(s,t),c,m,*}$ if $c \neq \dagger$, or

    \item for some $s \pijl{a} s'$, move to either:
       \begin{enumerate}
   \item\label{gbbged_spoiler_fresh_challenge} $\dconf{(s,t),(a,s'),(t,\frownie),*}$, if $c = \dagger$, or
\item\label{gbbged_spoiler_renew_challenge}
       $\dconf{(s,t),(a,s'),(t,\frownie),\checkmark}$, if $c \neq (a,s')$
      \end{enumerate}
    \item\label{gbbged_spoiler_switches} for some $t \pijl{a} t'$, move to:
$\dconf{(t,s),(a,t'),(s,\frownie),\checkmark}$.

  \end{enumerate} 
\item[]

\item From a configuration $\dconf{(u,v),(a,u'),(\bar{v},f),r}$, \duplicator can:
\begin{enumerate}

 \item \label{gbbged_terminate} 
move to $\sconf{(u',\bar{v}),\dagger,\dagger,*}$ when $a = \tau$, or

 \item if $f = \frownie$ and $\bar{v} \pijl{a} v'$, move to one of the following:
\begin{enumerate}
   \item \label{gbbged_match_and_move}  $\sconf{(u',v'),(a,u'),(v',\smiley),*}$, in any case, or
   \item \label{gbbged_match_move_and_terminate} $\sconf{(u',v'),\dagger,\dagger,\checkmark}$, in any case, or
   \item \label{gbbged_match_and_postpone} $\sconf{(u,v),(a,u'),(v',\smiley),*}$, only if  $\smiley \in E$
 \end{enumerate} 
     
 \item for some $\bar{v} \pijl{\tau} v'$, move to one of the following:
 \begin{enumerate}
   \item \label{gbbged_push_pebble_and_move} $\sconf{(u,v'),(a,u'),(v',f),*}$,  in any case, or
   \item \label{gbbged_push_pebble_and_terminate} $\sconf{(u',v'),\dagger,\dagger,\checkmark}$, only if $f = \smiley$
   \item \label{gbbged_push_pebble_and_wait} $\sconf{(u,v),(a,u'),(v',f),*}$, only if  $f \in E$
 \end{enumerate} 

\end{enumerate}

\end{itemize}

\duplicator wins a finite play starting in a configuration
$\conf{(s,t),c,m,r}$ if \spoiler gets stuck, and she wins an infinite
play if the play yields infinitely many $\checkmark$ rewards.  All
other plays are won by \spoiler.  We say that a configuration is
won by a player when she has a strategy that wins all plays starting
in it.  {\em Full} plays of the game start in a configuration
$\sconf{(s,t),\dagger,\dagger,*}$; we say that \duplicator wins the
game for a position $(s,t)$, if the configuration
$\sconf{(s,t),\dagger,\dagger,*}$ is won by her.
In this case, we write $s \gbbg_E^{ed} t$. Otherwise, we say that \spoiler wins that game.
 \end{defi}
As before for the \bb-game, the modification compared to Definition~\ref{def:generalised game} is marginal: only the $\checkmark$ reward in \duplicator's rule ~\eqref{gbbg_terminate} has been turned into $*$. The idea here is the same as for the \bbed-game.\medskip

We next prove that for all $x,y \in \{o,b\}$ the relation induced by the $E(x,y)$-bisimulation game exactly captures
$(x,y)$-generic bisimilarity with explicit divergence. 
We split the proof obligations into three separate lemmata, first addressing
completeness (Lemma~\ref{lem:bbedbisim_is_a_bbedg}) and next addressing
soundness (Lemmata~\ref{lem:game_with_divergence_is_generic_branching_bisim} and~\ref{lem:bbedg_is_a_bbbedbisim}).

\begin{lem}[Completeness]\label{lem:bbedbisim_is_a_bbedg}
For all $x,y \in \{ o, b \}$, whenever we have $s \gbbedbisim t$, we also have $s \gdxybbg t$.
\end{lem}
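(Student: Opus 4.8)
The plan is to adapt the completeness proof of the generic bisimulation game (Lemma~\ref{lem:gbbbisim_is_gbbg}) to the divergence-aware setting. Fix $x,y \in \{o,b\}$ and write $R$ for the relation $\gbbedbisim$. This relation is an equivalence, it is an $(x,y)$-generic bisimulation (so its transfer condition holds, in the strengthened form of Theorem~\ref{th:alternative_gen_bisim}), it has the stuttering property by Lemma~\ref{lem:gbbedbisim_stuttering_property}, and by Definition~\ref{def:generic_bisimulation_with_div} it satisfies condition D$_4$. I reuse verbatim the notion of an $(x,y)$-\good configuration with respect to $R$. As in Lemma~\ref{lem:gbbbisim_is_gbbg}, it suffices to give a strategy for \duplicator that wins from every \good configuration: since $\sconf{(s,t),\dagger,\dagger,*}$ is \good whenever $s \rel{R} t$, this yields $s \gdxybbg t$.

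\duplicator plays the strategy of Lemma~\ref{lem:gbbbisim_is_gbbg} with one local change. The rules of the \gbed-game are literally those of the \gb-game except that rule~\eqref{gbbged_terminate} no longer yields a $\checkmark$; consequently, the case analysis of Lemma~\ref{lem:gbbbisim_is_gbbg} showing that from a \good configuration \spoiler moves only to \good configurations, and that \duplicator can always reply by a move to a \good configuration and never gets stuck, carries over unchanged. The change concerns the sub-case ``$a = \tau$ and $s' \rel{R} \bar t$'' in \duplicator's analysis of a \good configuration $\dconf{(s,t),(a,s'),(\bar t,f),r}$, where in Lemma~\ref{lem:gbbbisim_is_gbbg} she idled via rule~\eqref{gbbged_terminate}; a short check (every $\smiley$-decorated configuration produced by the strategy has $s' \not\rel{R} \bar t$) shows such a configuration necessarily carries $f = \frownie$. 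Instead of idling, \duplicator now proceeds as follows: if $\bar t$ has a $\tau$-successor $v'$ with $s' \rel{R} v'$, she plays rule~\eqref{gbbged_match_move_and_terminate} to $\sconf{(s',v'),\dagger,\dagger,\checkmark}$, which is again \good; otherwise she idles via rule~\eqref{gbbged_terminate} to $\sconf{(s',\bar t),\dagger,\dagger,*}$, which is again \good. So \duplicator's modified strategy still preserves goodness and never gets stuck, and every finite play consistent with it from a \good configuration is won by \duplicator.

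It remains to show that \duplicator wins every infinite play $\pi$ consistent with this strategy that starts in a \good configuration, \ie that $\pi$ carries infinitely many $\checkmark$ rewards. Suppose not, and consider a $\checkmark$-free suffix of $\pi$. As in Lemma~\ref{lem:gbbbisim_is_gbbg}, on this suffix \spoiler can only pass a non-$\dagger$ configuration on to \duplicator unchanged, and at a $\dagger$ configuration she must issue a fresh challenge, the $\checkmark$-producing moves being excluded. If on this suffix \spoiler issued a visible challenge, or a $\tau$-challenge $s \pijl{\tau} s'$ from some $\sconf{(s,t),\dagger,\dagger,*}$ with $s' \not\rel{R} t$, then \duplicator---following her goodness-preserving replies exactly as in cases~2 and~3 of Lemma~\ref{lem:gbbbisim_is_gbbg}---would walk along a finite matching path and, on reaching its end, earn a $\checkmark$ via rule~\eqref{gbbged_match_move_and_terminate} or~\eqref{gbbged_push_pebble_and_terminate}; and if $s' \rel{R} t$ while $t$ has a $\tau$-successor $R$-related to $s'$, her modified move earns a $\checkmark$ directly. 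Hence the $\checkmark$-free suffix must consist of \spoiler repeatedly playing single steps $s_i \pijl{\tau} s_{i+1}$ from $\sconf{(s_i,t),\dagger,\dagger,*}$ and \duplicator idling via rule~\eqref{gbbged_terminate} to $\sconf{(s_{i+1},t),\dagger,\dagger,*}$; note that the second component remains $t$. Since all these configurations are \good, $s_0 \rel{R} t$ and $s_i \rel{R} t$ for every $i$, and since \duplicator idles rather than applying her modified rule, $t$ has no $\tau$-successor that is $R$-related to $t$.

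But $s_0 \pijl{\tau} s_1 \pijl{\tau} s_2 \cdots$ is a divergence from $s_0$ and $s_0 \rel{R} t$, so condition D$_4$ yields a state $t'$ with $t \tauplus t'$ and $s_k \rel{R} t'$ for some $k$; as $s_k \rel{R} t$ and $R$ is an equivalence, $t \rel{R} t'$. Writing the non-empty path as $t \pijl{\tau} \hat t_1 \pijl{\tau} \cdots \pijl{\tau} t'$, the stuttering property (Lemma~\ref{lem:gbbedbisim_stuttering_property}) gives $\hat t_1 \rel{R} t$, so $t$ \emph{does} have a $\tau$-successor $R$-related to $t$, a contradiction. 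Hence $\pi$ is won by \duplicator, so \duplicator wins from every \good configuration and the claim follows. The heart of the argument---and the only genuine departure from Lemma~\ref{lem:gbbbisim_is_gbbg}---is this last part: because idling on $\tau$ is no longer rewarded, a $\checkmark$-free play may now loop through $\dagger$ configurations, and one has to turn \spoiler's divergence-probing into infinitely many $\checkmark$s, which is precisely where condition D$_4$ and the stuttering property come in. The remaining bookkeeping---that the modified move is always available, preserves goodness, and leaves the treatment of cases~2 and~3 intact---is routine.
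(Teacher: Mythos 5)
Your proof is correct, and its overall architecture is the one the paper uses: reuse the strategy of Lemma~\ref{lem:gbbbisim_is_gbbg} with $R = \gbbedbisim$, observe that the only newly problematic plays are $\checkmark$-free suffixes in which \spoiler feeds a divergence $s_0 \pijl{\tau} s_1 \pijl{\tau} \cdots$ and \duplicator idles via rule~\eqref{gbbged_terminate} with the right-hand state $t$ frozen, and then invoke condition D$_4$ together with the stuttering property (Lemma~\ref{lem:gbbedbisim_stuttering_property}) to repair the strategy. Where you genuinely diverge is in \emph{how} the strategy is repaired. The paper takes the full witnessing path $t \pijl{\tau} t_1 \pijl{\tau} \cdots \pijl{\tau} t_\ell = t'$ supplied by D$_4$, uses stuttering to show every $t_i$ is related to $s_0$, and has \duplicator walk the entire path via rule~\eqref{gbbged_push_pebble_and_move} before cashing in a $\checkmark$ at $t_\ell$; this is a post-hoc patch applied along the offending suffix. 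You instead build the repair into the strategy locally and greedily: on a $\tau$-challenge with $s' \rel{R} \bar t$, take a single related $\tau$-successor and terminate immediately via rule~\eqref{gbbged_match_move_and_terminate} whenever one exists, and idle otherwise; stuttering is then used only to extract the \emph{first} step $\hat t_1$ of the D$_4$ path and derive a contradiction with perpetual idling. Your version buys a cleanly stated memoryless strategy up front and a shorter contradiction argument; the paper's version makes the correspondence with the witnessing $\tau$-computation more explicit. Your side-check that the case ``$a=\tau$ and $s' \rel{R} \bar t$'' only occurs with $f = \frownie$ (so that rule~\eqref{gbbged_match_move_and_terminate} is actually available) is indeed needed and holds, since the shortest-path choices in the Lemma~\ref{lem:gbbbisim_is_gbbg} strategy guarantee that $\smiley$-decorated configurations never satisfy $s' \rel{R} \bar t$.
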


\begin{proof}
We again design a partial strategy for \duplicator 
for the \gbed-game that starts in $\sconf{(s,t), \dagger,\dagger,*}$. 
We first construct a \duplicator strategy using the same construction as used for defining the strategy in the proof of Lemma~\ref{lem:gbbbisim_is_gbbg} to win the corresponding \gb-game, but using $\gbbedbisim$ instead of $\gbbbisim$ as relation $R$ in the construction.
%
However, if we do not change anything in this 
strategy, it could be the case that
\spoiler now wins the \gbed-game, since the strategy does not take divergences into account. Let us see which changes are needed to 
guarantee that \duplicator will also win the \gbed-game. 

First, note that all the {\em positions} along any play consistent with
that winning strategy for \duplicator contain two $\gbbbisim$ equivalent states, reusing the proof in Lemma~\ref{lem:game_is_generic_branching_bisim}. 
Second, observe that we start from a
configuration $\sconf{(s,t), \dagger,\dagger,*}$ containing two $\gbbedbisim$ equivalent states, and 
in order to be able to repeat our arguments after any move of \duplicator, we must
preserve the $\gbbedbisim$ relation, and not just $\gbbbisim$, as in the proof of Lemma~\ref{lem:gbbbisim_is_gbbg}.


If we apply the strategy that we have defined for the \gb-game, the only case in which 
\duplicator loses the game 
is that in which she is generating infinitely many $\dagger$ challenges, but only finitely many $\checkmark$ rewards.
In particular, there would be some suffix of a play in which \duplicator generates infinitely many $\dagger$ challenges, but no $\checkmark$ reward.
We consider that suffix as a full play, and denote by $(s_0, t_0)$ the position at the configuration that suffix starts.

Let us first make a few observations about the moves played by both players along this suffix:
\begin{itemize}
  \item \spoiler only plays moves \eqref{gbbged_move_unchanged} or  \eqref{gbbged_spoiler_fresh_challenge};
  \item \duplicator only plays move \eqref{gbbged_terminate}
\end{itemize}
We analyse why this must be the case. First, observe that in \spoiler's moves \eqref{gbbged_spoiler_renew_challenge} and \eqref{gbbged_spoiler_switches} and \duplicator's moves \eqref{gbbged_match_move_and_terminate} and \eqref{gbbged_push_pebble_and_terminate} immediately a $\checkmark$ is rewarded, which is a contradiction to the assumption that there are no $\checkmark$ rewards on the suffix. Now, suppose \duplicator plays any of the moves \eqref{gbbged_match_and_move}, \eqref{gbbged_match_and_postpone}, \eqref{gbbged_push_pebble_and_move}, or \eqref{gbbged_push_pebble_and_wait}. In each of these cases, when playing according to the strategy defined in the proof of Lemma~\ref{lem:gbbbisim_is_gbbg}, when this move is played, eventually the $\checkmark$ is obtained through either \duplicator's move \eqref{gbbged_match_move_and_terminate} or \eqref{gbbged_push_pebble_and_terminate}.\footnote{This can be checked using a tedious but elementary analysis on the cases in the defined strategy.}

Now, since \duplicator is always playing using rule~\eqref{gbbged_terminate}, all challenges involved in the considered infinite suffix concern $\tau$ actions (by definition of rule~\eqref{gbbged_terminate}), and generate a divergent sequence
$ s_0 \pijl{\tau} s_1 \pijl{\tau} s_2 \pijl{\tau} \cdots$, and \duplicator always responds by leaving $t_0$ the
same, so that the invariance of $\gbbedbisim$ implies that $ s_i \bbedbisim t_0$, for all $i$.
But then, by definition of $\gbbedbisim$,
there must be some sequence of transitions $t_0 \taustar^+ t'$ such that for some $k$, $s_k \gbbedbisim t'$. Write this sequence as $t_0 \pijl{\tau} t_1 \pijl{\tau} \cdots \pijl{\tau} t_{\ell} = t'$ (for $\ell > 0$).
Since $s_i \gbbedbisim t_0$ for all $i$, and $\gbbedbisim$ is an equivalence relation, we have $s_i \gbbedbisim s_j$ for all $i, j$. In a similar vein, we can also conclude $t_0 \gbbedbisim t_{\ell}$.
Since $\gbbedbisim$ has the stuttering property according to Lemma~\ref{lem:gbbedbisim_stuttering_property}, we now find that $t_i \gbbedbisim t_j$ for all $0 \leq i,j \leq \ell$, and hence $s_0 \gbbedbisim t_i$ for $0 \leq i \leq \ell$.

This allows us to modify the strategy \duplicator plays from the configurations $\dconf{(s_0, t_i), (\tau, s_1), (t_i, \frownie), *}$ for $i < \ell - 1$ in such a way that she plays the $\tau$-step from $t_i$ using rule \eqref{gbbged_push_pebble_and_move} to $\sconf{(s_0, t_{i+1}), (\tau, s_1), (t_i, \frownie), *}$. From $\dconf{(s_0, t_{\ell-1}), (\tau, s_1), (t_{\ell-1}, \frownie), *}$, she now plays to $\sconf{(s_1, t_{\ell}), \dagger, \dagger, \checkmark}$.

Note that all configurations that we play to in this changed strategy are \good.
From $\sconf{(s_1, t_{\ell}), \dagger, \dagger, *)}$, \duplicator can proceed as she would from $\sconf{(s_1, t_{\ell}), \dagger, \dagger, *)}$, and apply the same modifications as described above.

In this way we get a revised strategy for \duplicator
that will allow her to win the \gbed-game that starts 
in $\sconf{(s,t), \dagger,\dagger,*}$, thus proving $s \gdxybbg t$.
\end{proof}

We continue by showing that the relation induced by the $\gbed$-game is, in fact, an $(x,y)$-generic bisimulation (without the divergence requirement). Note that this result follows more or less by design since the \gbed-game is stricter than our original \gb-game since it rejects plays that were once winning for \duplicator.
\begin{lem}\label{lem:game_with_divergence_is_generic_branching_bisim}
For all $x,y \in \{o,b\}$, the relation $\gdxybbg$ is an  $(x,y)$-generic bisimulation.
\end{lem}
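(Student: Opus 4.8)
My plan is to reuse the proof of Lemma~\ref{lem:game_is_generic_branching_bisim} almost verbatim and merely point out where the difference between the \gb-game and the \gbed-game is felt. Recall that the \gbed-game is obtained from the \gb-game by a single modification: \duplicator's rule~\eqref{gbbg_terminate} now yields the reward $*$ instead of $\checkmark$, while the B\"uchi winning condition and the shape of all moves are untouched. Hence Proposition~\ref{prop:invariant_r} and Proposition~\ref{prop:eqconfigurations} (whose proofs only use the winning condition and the structure of \spoiler's moves), as well as the observation that \duplicator's rule~\eqref{gbbged_match_and_move} is redundant, hold for the \gbed-game by the same arguments; I shall use their \gbed-counterparts without further mention.

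First I would take $R = \{(s,t)\mid\text{\duplicator wins }\sconf{(s,t),\dagger,\dagger,*}\text{ in the \gbed-game}\} = \gdxybbg$ and show that $R$ satisfies the conditions of Definition~\ref{def:generic_bisimulation}. Symmetry of $R$ is obtained exactly as in Lemma~\ref{lem:game_is_generic_branching_bisim}, since \spoiler has the same options from $\sconf{(s,t),\dagger,\dagger,*}$ as from $\sconf{(t,s),\dagger,\dagger,*}$. For the transfer condition, suppose $s\rel{R}t$ and $s\pijl{a}s'$. Fix a \duplicator strategy $\varrho$ winning from $\sconf{(s,t),\dagger,\dagger,*}$ and consider the play $\pi$ in which \spoiler plays rule~\eqref{gbbged_spoiler_fresh_challenge} at the start and rule~\eqref{gbbged_move_unchanged} on every subsequent non-$\dagger$ configuration. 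Then $\pi$ is won by \duplicator, and since \spoiler earns no $\checkmark$ along $\pi$ while in the \gbed-game \duplicator's only $\checkmark$-earning moves are rules~\eqref{gbbged_match_move_and_terminate} and~\eqref{gbbged_push_pebble_and_terminate} (both leading to a $\dagger$-configuration), \duplicator must reach a $\dagger$-configuration along $\pi$. Let $\pi'$ be the prefix of $\pi$ ending at the first such configuration.

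The proof then splits according to how this $\dagger$-configuration is reached. If it is reached by rule~\eqref{gbbged_match_move_and_terminate} or~\eqref{gbbged_push_pebble_and_terminate}, then \duplicator used rule~\eqref{gbbged_terminate} nowhere on $\pi'$ and the terminal configuration carries a $\checkmark$, so $\pi'$ has exactly the form analysed in the proof of Lemma~\ref{lem:game_is_generic_branching_bisim} and the very same case analysis (the case $n=0$, and the case $n>0$ with its two subcases according to whether a partial match on $\pi'$ turns into $\smiley$) establishes~\eqref{T1} or~\eqref{T2}; note that whenever that analysis concludes $s\rel{R}t_i$ for some position $(s,t_i)$ on $\pi'$ via Proposition~\ref{prop:eqconfigurations}, the relevant configuration is consistent with $\varrho$ and hence won by \duplicator \emph{in the \gbed-game}, which is precisely what $s\rel{R}t_i$ means for our $R$. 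If instead the $\dagger$-configuration is reached by rule~\eqref{gbbged_terminate}, then $a=\tau$ and the configuration is $\sconf{(s',\bar{t}),\dagger,\dagger,*}$ where $\bar{t}$ is reached from $t$ by the $\tau$-steps of the preceding pebble-push phase, so $t\taustar\bar{t}$; being consistent with $\varrho$, this configuration is won by \duplicator, hence $s'\rel{R}\bar{t}$. If $\bar{t}=t$ this is transfer condition~\eqref{T1}. Otherwise, writing the path as $t=u_0\pijl{\tau}u_1\pijl{\tau}\cdots\pijl{\tau}u_m=\bar{t}$ with $m\ge 1$, the choice $t_1=u_{m-1}$, $t_2=t'=\bar{t}$ gives $t\gentaustar{x,R,s}t_1\pijl{\tau}t_2\gentaustar{y,R,s'}t'$ with $s'\rel{R}t'$: when $x=b$ we have $\frownie\notin E(x,y)$, so the pebble was pushed only by rule~\eqref{gbbged_push_pebble_and_move}, the positions on $\pi'$ are $(s,u_i)$, and Proposition~\ref{prop:eqconfigurations} yields $s\rel{R}u_i$ for all $i$, exactly as in case~1 of the proof of Lemma~\ref{lem:game_is_generic_branching_bisim}; when $x=o$ there is nothing to check for the intermediate states. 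This gives transfer condition~\eqref{T2}.

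The step I expect to be the main obstacle is this last case. In the original \gb-game rule~\eqref{gbbg_terminate} carried a $\checkmark$, so along $\pi$ \duplicator never had a reason to reach a $\dagger$-configuration without collecting a $\checkmark$ and the shape of $\pi'$ was rigid; in the \gbed-game she may terminate a $\tau$-challenge early for only a $*$, and one has to verify that the $\tau$-path $t\taustar\bar{t}$ together with $s'\rel{R}\bar{t}$ still certifies the transfer condition (in particular handling the $x=b$ subcase through Proposition~\ref{prop:eqconfigurations}). Everything else is a transcription of the proof of Lemma~\ref{lem:game_is_generic_branching_bisim}, so I would keep the exposition short and merely flag the differences.
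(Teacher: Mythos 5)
Your proposal is correct and follows essentially the same route as the paper, which likewise proves this lemma by replaying the soundness argument of Lemma~\ref{lem:game_is_generic_branching_bisim} with the $\checkmark$ of \duplicator's first rule replaced by $*$. You are in fact more explicit than the paper about the one genuinely new case (termination of a $\tau$-challenge via rule~\eqref{gbbged_terminate} after some pebble pushes, yielding a $\dagger$-configuration with only a $*$), and your treatment of that case, including the appeal to Proposition~\ref{prop:eqconfigurations} when $x=b$, is sound.
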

\begin{proof}
Since the $E(x,y)$-bisimulation with explicit divergence games are obtained from the corresponding  $E(x,y)$-bisimulation games simply
by turning some $\checkmark$ rewards into $*$, and in this way any configuration that is won by 
\duplicator at the former is also a winning configuration for her at the latter, we can repeat 
the reasoning in the proof of Lemma~\ref{lem:game_is_generic_branching_bisim}
substituting the  $\checkmark$ reward by a $*$ reward whenever \duplicator resorts to choosing her first 
option, to obtain the proof that $ \gdxybbg$ is an  $(x,y)$-generic bisimulation.
\end{proof}

The following lemma confirms that our new game is indeed capable of discerning
states that are divergent and states that are non-divergent. More specifically,
it states that the relation induced by $\gdxybbg$ meets divergence condition D$_4$.
\begin{lem}\label{lem:bbedg_is_a_bbbedbisim}
Let  $s \gdxybbg t$, and assume that we have a divergent sequence
$s = s_0 \pijl{\tau} s_1 \pijl{\tau} s_2 \pijl{\tau} \cdots$. Then there
is some $t'$ and some $k$ such that $t \tauplus t'$ 
and $s_k \gdxybbg t'$.

\end{lem}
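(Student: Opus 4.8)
The plan is to let \spoiler replay the divergence $s = s_0 \pijl{\tau} s_1 \pijl{\tau} \cdots$ as a stream of $\tau$-challenges and to read off the required witness from the first configuration in which \duplicator collects a $\checkmark$. This is exactly where the difference between the \gb- and the \gbed-game is used: \duplicator's "idle on a $\tau$-challenge" move, rule~\eqref{gbbged_terminate}, no longer hands out a $\checkmark$, so she cannot survive an infinite stream of $\tau$-challenges by idling forever.

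Concretely, fix a positional winning strategy $\varrho$ for \duplicator from $\sconf{(s,t),\dagger,\dagger,*}$; such a $\varrho$ can never get stuck, since otherwise the resulting finite play would be won by \spoiler. Against $\varrho$, let \spoiler play as follows: from a configuration with a pending challenge $c \neq \dagger$, pass it on unchanged by rule~\eqref{gbbged_move_unchanged}; from a configuration of the form $\sconf{(s_i,w),\dagger,\dagger,r}$ pose the fresh challenge $s_i \pijl{\tau} s_{i+1}$ by rule~\eqref{gbbged_spoiler_fresh_challenge}. Neither player can get stuck, so the induced play $\pi$ is infinite and, being $\varrho$-consistent, won by \duplicator; hence $\pi$ contains at least one $\checkmark$, and none of these is produced by \spoiler, since both of her moves yield a $*$ reward.

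Next I would prove, by a straightforward induction along $\pi$ up to the first $\checkmark$, the invariant that: (i) the first component of every configuration is some $s_j$, advanced precisely to the target of the current challenge whenever \duplicator applies one of rules~\eqref{gbbged_terminate},~\eqref{gbbged_match_and_move},~\eqref{gbbged_match_move_and_terminate},~\eqref{gbbged_push_pebble_and_terminate}, and left in place by all other moves; in particular a $\dagger$-configuration can only have the shape $\sconf{(s_j,w),\dagger,\dagger,r}$, so \spoiler's strategy is well-defined; and (ii) every non-$\dagger$ partial match $(\bar v, f)$ occurring on $\pi$ before the first $\checkmark$ satisfies $t \taustar \bar v$. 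Point~(ii) holds because the only moves affecting the pebble push it along a single $\pijl{\tau}$ edge (rule~\eqref{gbbged_match_and_move} and the pebble-pushing rules), rule~\eqref{gbbged_terminate} merely copies the pebble into the second component, and \spoiler's fresh-challenge move re-initialises the pebble to that second component. The first $\checkmark$ is therefore produced by \duplicator applying rule~\eqref{gbbged_match_move_and_terminate} or~\eqref{gbbged_push_pebble_and_terminate} from some configuration $\dconf{(s_j,v),(\tau,s_m),(\bar v,f),*}$ with $m \geq 1$, $t \taustar \bar v$ and $\bar v \pijl{\tau} v'$; it lands in $\sconf{(s_m,v'),\dagger,\dagger,\checkmark}$ with $t \taustar \bar v \pijl{\tau} v'$, i.e.\ $t \tauplus v'$.

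Finally, since $\sconf{(s_m,v'),\dagger,\dagger,\checkmark}$ lies on the $\varrho$-consistent play $\pi$ it is won by \duplicator, and by the evident analogue of Proposition~\ref{prop:invariant_r} for \gbed-games (the reward component does not affect the B\"uchi winner, nor can it cause a stuck position) so is $\sconf{(s_m,v'),\dagger,\dagger,*}$; hence $s_m \gdxybbg v'$. Taking $k = m$ and $t' = v'$ proves the claim. The main obstacle is the bookkeeping of the third paragraph: pinning down which rules can occur before the first $\checkmark$ and checking that the first component never leaves the divergent sequence while the pebble stays $\tau$-reachable from $t$ and makes at least one genuine $\tau$-step before the reward is earned. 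Once the admissible rules are identified, this is a routine invariant argument.
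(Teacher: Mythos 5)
Your proof is correct and rests on the same core mechanism as the paper's: \spoiler streams the divergence as a sequence of fresh $\tau$-challenges, and because rule~\eqref{gbbged_terminate} no longer yields a $\checkmark$ in the \gbed-game, \duplicator's winning strategy must eventually earn one via rule~\eqref{gbbged_match_move_and_terminate} or~\eqref{gbbged_push_pebble_and_terminate}, which forces a genuine $\tau$-step from a pebble position $\tau$-reachable from $t$ and lands in a configuration with position $(s_k,t')$ and $t \tauplus t'$. The only presentational difference is that the paper argues by contradiction (having \spoiler switch to a winning strategy once the position advances), whereas you argue directly by observing that the configuration reached at the first $\checkmark$ lies on a play consistent with \duplicator's winning strategy and is therefore itself won---a legitimate appeal to prefix-independence of the B\"uchi condition that the paper uses elsewhere as well.
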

\begin{proof}
Towards a contradiction, suppose that for all $t'$ for which $t
\tauplus t'$, and for all $k$, we have $s_k \not\gdxybbg t'$.
Consider \spoiler's partial strategy to use
rule~\eqref{gbbged_spoiler_fresh_challenge} with challenge $s_i
\pijl{\tau} s_{i+1}$ for all configurations of the form
$\sconf{(s_i,t'),\dagger,\dagger,r}$, and for
configurations of the form
$\sconf{(s_i,t'),(\tau,s_{i+1}),(\bar{t},f),r}$, with $t
\tauplus \bar{t}$, use rule~\eqref{gbbged_move_unchanged} when
$t' = t$, and when $t' \neq t$, the strategy that mimics \spoiler's winning strategy
for $\sconf{(s_i,t'),\dagger,\dagger,*}$.
Note that since \duplicator wins
$\sconf{(s_0,t),\dagger,\dagger,*}$, all plays consistent with
\duplicator's winning strategy visit \spoiler-owned configurations
for which the above strategy is defined.

Consider the play that emerges from $\sconf{(s_0,t),\dagger,\dagger,*}$
by following \spoiler's strategy and \duplicator's winning strategy.
Since \spoiler would only make moves generated by $\tau$ transitions
when following her fixed strategy, \duplicator can only reply using $\tau$ transitions as well. Moreover, to win the play she needs to collect infinitely many 
$\checkmark$ rewards. In order to collect the first of them, she needs to sometime apply either 
rule~\eqref{gbbged_match_move_and_terminate}
or~\eqref{gbbged_push_pebble_and_terminate}. When doing this she
is advancing her state by executing a $\tau$ transition, thus moving to a
configuration with position $(s_k,t')$, for some $k \ge 0$ and $t'$ such 
that $t \tauplus t'$. But by definition, \spoiler's strategy is winning for 
configurations with such positions, and therefore the play is won by
\spoiler. 
As a result we find
$s_0 \not\gdxybbg t$, which contradicts our assumptions. So there must be some $k$ and
some $t'$ such that $t \tauplus t'$ and $s_k \gdxybbg t'$.
\end{proof}

By combining the three preceding lemmata, we can next conclude that 
the \gbed-game characterises
our relational definition of $(x,y)$-generic bisimilarity with explicit divergence.
\begin{thm}\label{th:bbbedbisim_is_bbedg} For all $x,y \in \{o,b\}$, we have $\gbbedbisim\, =\; \gdxybbg $.
\end{thm}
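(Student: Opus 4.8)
The plan is to obtain the theorem as an immediate corollary of the three lemmata already established in this section. The claimed equality $\gbbedbisim\, =\; \gdxybbg$ is a conjunction of two inclusions, and I would argue each direction separately, mirroring the structure used for Theorem~\ref{thm:correspondence}.

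First I would establish $\gbbedbisim\, \subseteq\, \gdxybbg$. This is exactly the completeness statement: whenever $s \gbbedbisim t$, Lemma~\ref{lem:bbedbisim_is_a_bbedg} already yields $s \gdxybbg t$, so there is nothing further to do in this direction.

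For the converse $\gdxybbg\, \subseteq\, \gbbedbisim$, the plan is to verify that the relation $\gdxybbg$ itself is an $(x,y)$-generic bisimulation with explicit divergence in the sense of Definition~\ref{def:generic_bisimulation_with_div}; the inclusion then follows because $\gbbedbisim$ is by definition the union of all such relations, so $s \gdxybbg t$ implies $s \gbbedbisim t$. Three things have to be checked. Symmetry of $\gdxybbg$ follows exactly as in the soundness proof for the divergence-free game (Lemma~\ref{lem:game_is_generic_branching_bisim}), using Proposition~\ref{prop:invariant_r} together with the observation that in the \gbed-game \spoiler has precisely the same options from $\sconf{(s,t),\dagger,\dagger,*}$ as from $\sconf{(t,s),\dagger,\dagger,*}$. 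That $\gdxybbg$ satisfies the transfer condition of an $(x,y)$-generic bisimulation is the content of Lemma~\ref{lem:game_with_divergence_is_generic_branching_bisim}. Finally, that $\gdxybbg$ meets the divergence requirement D$_4$ is exactly Lemma~\ref{lem:bbedg_is_a_bbbedbisim}: given $s \gdxybbg t$ and a divergent sequence $s = s_0 \pijl{\tau} s_1 \pijl{\tau} \cdots$, that lemma supplies some $t'$ and $k$ with $t \tauplus t'$ and $s_k \gdxybbg t'$. Combining these three facts shows $\gdxybbg$ is an $(x,y)$-generic bisimulation with explicit divergence, which completes the inclusion.

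Since the substantive work has all been discharged in the three lemmata, there is no genuine obstacle left; the only point where I would take care is the bookkeeping, namely confirming that ``symmetric $(x,y)$-generic bisimulation satisfying D$_4$'' is literally the definition in Definition~\ref{def:generic_bisimulation_with_div}, and that the witness $t'$ produced by Lemma~\ref{lem:bbedg_is_a_bbbedbisim} is related by the \emph{same} relation $\gdxybbg$ whose closure properties are being verified, so that no circularity or mismatch of relations slips in. Once those alignments are noted, the theorem is the direct combination of Lemmata~\ref{lem:bbedbisim_is_a_bbedg}, \ref{lem:game_with_divergence_is_generic_branching_bisim} and~\ref{lem:bbedg_is_a_bbbedbisim}.
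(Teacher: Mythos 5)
Your proposal is correct and follows exactly the paper's own argument: the inclusion $\gbbedbisim \subseteq \gdxybbg$ is Lemma~\ref{lem:bbedbisim_is_a_bbedg}, and the converse follows by combining Lemma~\ref{lem:game_with_divergence_is_generic_branching_bisim} (transfer condition) with Lemma~\ref{lem:bbedg_is_a_bbbedbisim} (divergence condition D$_4$) to conclude that $\gdxybbg$ is itself an $(x,y)$-generic bisimulation with explicit divergence. Your extra care about symmetry and about the witness $t'$ being related by the same relation is sound bookkeeping that the paper leaves implicit.
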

\begin{proof}
The implication from left to right follows from Lemma~\ref{lem:bbedbisim_is_a_bbedg}, the implication from right to left follows from Lemmata~\ref{lem:game_with_divergence_is_generic_branching_bisim}~and~\ref{lem:bbedg_is_a_bbbedbisim}.
\end{proof}

\begin{exa}
Recall the labelled transition system from Figure~\ref{fig:need_for_challenges}
(depicted left below), with the corresponding \gbed-game with $E = \emptyset$ 
(shown right below).
\begin{center}
\begin{minipage}{.35\textwidth}
  \begin{tikzpicture}
    [node distance=30pt,inner sep = 1pt,minimum size=10pt]
        \scriptsize
    \node (u_)  {$s_0$};
    \node[right of=u_,xshift=20pt] (c1_) {$c_1$};
    \node[left of=u_,xshift=-20pt] (c2_) { $c_2$};
    \node[below of=c1_] (v_) {$t_0$};
    \node[below of=c2_] (w_) {$t_1$};

    \path[->]
      (u_) edge node[above] {$a$} (c1_) edge node[above] {$b$} (c2_)
      (v_) edge[bend left] node[above] {$\tau$} (w_) edge node [right] {$a$}  (c1_)
      (w_) edge[bend left] node[below] {$\tau$} (v_) edge node [left] {$b$} (c2_)
    ;
    \end{tikzpicture}
\end{minipage}
\begin{minipage}{.35\textwidth}
\centering
  \begin{tikzpicture}
    [node distance=25,inner sep = 1pt,minimum size=5pt,initial text={}]

     \tikzstyle{state}=[rectangle, draw=none]
     \tikzstyle{transition}=[->,>=stealth']

      \node[initial, state, fill=gray!20] (sACdds) {\tiny $\sconf{(t_1,s_0),\dagger,\dagger,*}$};
      \node[right of=sACdds, state,xshift=75pt] (ACaBCfs) {\tiny $\dconf{(t_1,s_0),(\tau,t_0),(s_0,\frownie),*}$};
      \node[below of=ACaBCfs, state,yshift=-10pt,fill=gray!20] (sADaBDfs) {\tiny $\sconf{(t_0,s_0),\dagger,\dagger,*}$};
      \node[below of=sACdds, state,yshift=-10pt] (ADbBDfc) {\tiny $\dconf{(t_0,s_0),(\tau,t_1),(s_0,\frownie),*}$};

      \draw [->] (sACdds) edge node[shape=circle,above,inner sep=2pt] {\tiny\ref{gbbged_spoiler_fresh_challenge}} (ACaBCfs);
      \draw [->] (ACaBCfs) edge node[shape=circle,right,inner sep=2pt] {\tiny\ref{gbbged_terminate}} (sADaBDfs);
      \draw [->] (sADaBDfs) edge node[shape=circle,above,inner sep=2pt] {\tiny\ref{gbbged_spoiler_fresh_challenge}} (ADbBDfc);
      \draw [->] (ADbBDfc) edge node[shape=circle,left,inner sep=2pt] {\tiny\ref{gbbged_terminate}} (sACdds);

  \end{tikzpicture}
\vspace{5pt}
\end{minipage}
\end{center}
 Observe that $s_0$ and $t_1$ are branching bisimilar, but not branching bisimilar with explicit divergence. \spoiler tries to show they are not equivalent by challenging \duplicator with $\tau$ transitions from $t_1$, and subsequently $t_0$ using her rule \eqref{gbbged_spoiler_fresh_challenge}. Since in $s_0$ \duplicator cannot play any $\tau$ transitions, she can only respond with her move \eqref{gbbged_terminate}, and she loses since she never gains any $\checkmark$ reward.
\end{exa}

In Section~\ref{sec:smallapplication}, we present a larger application of the \gbed-game.

\subsection{A Note on Divergence Conditions}
\label{sec:diverging_discussion}

Branching bisimilarity with explicit divergence can be defined in
terms of a relational characterisation as we did in the previous
section, but also in terms of a modal characterisation or in terms
of coloured traces. The relational characterisation of branching
bisimilarity with explicit divergence essentially builds on the
relational definition of branching bisimulation, adding an
orthogonal divergence condition to it. The literature, however,
defines several incomparable divergence conditions which are all
used to this effect. It turns out that all of these alternatives
(referred to as condition~D, D$_0$, D$_1$, D$_2$, D$_3$ and D$_4$)
give rise to the same behavioural equivalence relation,
see~\cite{vanglabbeek_branching_2009}.

Our definition of $(x,y)$-generic bisimilarity with explicit
divergence is based on divergence condition D$_4$, see also
Definition~\ref{def:branching_bisimulation_with_div}. Yet, a natural
question is whether the alternative divergence conditions also give
rise to equivalent characterisations for our $(x,y)$-generic
bisimilarity with explicit divergence. The answer to this question is
negative: using condition D$_2$ gives rise to a different behavioural
relation than condition D$_4$. 

Formally condition D$_2$ requires of a relation $R$ 
that for every $s \rel{R} t$ for which we have $s = s_0 \pijl{\tau} s_1
\pijl{\tau} \dots$, there is some immediate $\tau$-successor $t'$ of $t$ 
such that $s_k \rel{R} t'$ for some $k$. Note that this strengthens
condition D$_4$, which requires that $t'$ is some state that can be
reached via one or more $\tau$-steps. 
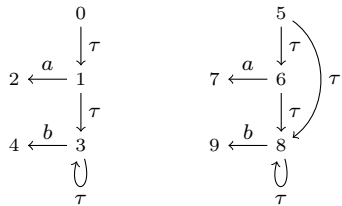
\begin{figure}[hbpt]
  \begin{center}
  \begin{tikzpicture}
    [node distance=25,inner sep = 1pt,minimum size=10pt]

     \tikzstyle{state}=[circle, draw=none]
     \tikzstyle{transition}=[->,>=stealth']

      \node[state] (t0) {\tiny 0};
      \node[state] [below of=t0] (t1) {\tiny 1};
      \node[state] [left of=t1] (t2) {\tiny 2};
      \node[state] [below of=t1] (t3) {\tiny 3};
      \node[state] [left of=t3] (t4) {\tiny 4};
      \draw [->] (t0) edge node[right] {\scriptsize $\tau$} (t1);
      \draw [->] (t1) edge node[above] {\scriptsize $a$} (t2);
      \draw [->] (t1) edge node[right] {\scriptsize $\tau$} (t3);
      \draw [->] (t3) edge[loop below] node[below] {\scriptsize $\tau$} (t3);
      \draw [->] (t3) edge node[above] {\scriptsize $b$} (t4);

      \node[state] [right of=t0, xshift=50pt] (s0) {\tiny 5};
      \node[state] [below of=s0] (s1) {\tiny 6};
      \node[state] [left of=s1] (s2) {\tiny 7};
      \node[state] [below of=s1] (s3) {\tiny 8};
      \node[state] [left of=s3] (s4) {\tiny 9};
      \draw [->] (s0) edge node[right] {\scriptsize $\tau$} (s1);
      \draw [->] (s1) edge node[above] {\scriptsize $a$} (s2);
      \draw [->] (s1) edge node[right] {\scriptsize $\tau$} (s3);
      \draw [->] (s3) edge[loop below] node[below] {\scriptsize $\tau$} (s3);
      \draw [->] (s3) edge node[above] {\scriptsize $b$} (s4);
      \draw [->] (s0) edge[bend left=55] node[right] {\scriptsize $\tau$} (s3);

  \end{tikzpicture}
  \end{center}
\caption{An illustration of the difference between condition D$_2$ and
D$_4$.}
\label{fig:counterexample}
\end{figure}

The difference between these two conditions is illustrated through
the transition system depicted in Figure~\ref{fig:counterexample}.
While we cannot expect the divergence condition D$_2$ and D$_4$ for
$(b,b)$-generic
bisimilarity with explicit divergence to give different results, they do
give different results for
$(o,o)$-generic bisimilarity with explicit divergence. First, observe
that we have $0 \gbbbisimargs{o,o} 5$. States $0$ and $5$ can,
however, not be related through a weak bisimulation relation
that meets condition D$_2$. The offending behaviour here is exactly
the divergent computation $5 \pijl{\tau} 8 \pijl{\tau} 8 \pijl{\tau}
\dots$ which cannot be mimicked from state $0$ because state $1$
cannot be related to any state on the divergent computation. On the
other hand, we do have $0 \gbbedbisimargs{o,o} 5$. This can be seen
using our game characterisation: if \spoiler challenges \duplicator
to mimic the divergent computation from state $5$, then \duplicator
can respond by moving from state $0$ to state $3$ using
rule~\eqref{gbbged_push_pebble_and_wait} and subsequently using
rule~\eqref{gbbged_match_move_and_terminate}.

The above observation raises several questions. For instance, it
is not clear whether there is a game characterisation of $(x,y)$-generic
bisimilarity with explicit divergence based on divergence condition
D$_2$. The same can be asked for some of the other divergence
conditions.  Moreover, one can wonder for which of the conditions
the various instances of $(x,y)$-generic bisimilarity with explicit
divergence would coincide; do they all give rise to equivalence
relations, \emph{etcetera}. We defer answering these questions to
future research.

\subsection{Simulation Games}
So far we have considered four bisimulation relations and showed that explicit divergences can be added orthogonally. All of these relations are equivalence relations. When checking an implementation relation, sometimes it is desirable to drop the symmetry requirement, and use simulation relations instead. In this section we sketch how our generic bisimulation game can be adapted to reflect the corresponding simulation relations. We here forego a formal treatment, since most of the required modifications are straightforward, and the corresponding proofs only require minor changes compared to what we have presented so far.


The similarity versions of the four instances of our generic bisimulation already 
appeared in the preliminary version of~\cite{vanglabbeek_linear_1993}, and some of them we also
mentioned later, see, \eg~\cite{gerth_partial_1999}, where the induced bisimilarity notions
were studied in depth. Of course, these similarity notions generate {\em preorders},
although the corresponding {\em kernels} would again provide equivalence relations. Van Glabbeek and Weijland \cite{vanglabbeek_branching_1996} also briefly mention the possibility of defining $\eta$, delay- and branching bisimulation preorders with explicit divergences. However, in this case the starting point
is not a simulation relation, but the corresponding bisimulation equivalence, that does not (still) take into account divergences. Then, the treatment of these divergences induces a finer preorder, where behaviours
without divergences are `better' than the equivalent (up to divergences) ones that contain some divergent computations. Also, the notion of divergence preserving branching simulation defined in~\cite{reniers_results_2014} comes quite close to a branching bisimulation preorder with explicit divergences.

We next define \emph{generic simulation}, by dropping the symmetry requirement from Definition~\ref{def:generic_bisimulation}.
\begin{defi}
\label{def:generic_simulation}
 A relation $R \subseteq S \times
 S$ is an $(x,y)$-\emph{generic simulation}, whenever $s
 \rel{R} t$ and $s \pijl{a} s'$ imply either:
\begin{itemize}
 \item $a = \tau$ and $s' \rel{R} t$, or 
 \item there exist states $t', t_1,t_2$ such that $t
 \gentaustar{x,R,s} t_1 \pijl{a} t_2 \gentaustar{y,R,s'} t'$ and $s' \rel{R} t'$.  
\end{itemize}
 Now, for the corresponding values of $x$ and $y$, we write $s \gbbsim
 t$ and say that $s$ is $(x,y)$-simulated by  $t$  iff there is an
 $(x,y)$-generic simulation $R$ such that $s \!\rel{R} t$.
 Typically, we simply write $\gbbsim$ to denote  $(x,y)$-\emph{generic
 similarity}, and we write $\gbbsimeq$ to denote the $(x,y)$-\emph{generic
 simulation equivalence} $\gbbsim \cap \gbbsim^{-1}$.

\end{defi}
As before, we have the following correspondences for $R \subseteq S \times S$:
\begin{itemize}
  \item $R$ is a weak simulation iff it is an $(o,o)$-generic simulation;
  \item $R$ is a delay simulation iff it is an $(o,b)$-generic simulation;
  \item $R$ is an $\eta$-simulation iff it is a $(b,o)$-generic simulation;
  \item $R$ is a branching simulation iff it is a $(b,b)$-generic simulation.
\end{itemize}

Van Glabbeek already proved in~\cite{vanglabbeek_linear_1993}, using his testing characterisations,
that the preorders induced by the first two classes of simulations above are the same, and that this is also the case for those preorders induced by the latter two.
Next we provide a direct proof based on the definition of our (coinductive) generic simulations.

\begin{prop} \label{reduced_set_generic_simulations}
For all $x,y,z \in \{o,b \}$ we have  $ s \leq_{(z,x)} t   \Leftrightarrow  s \leq_{(z,y)} t$.
\end{prop}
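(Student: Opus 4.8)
The plan is to reduce the statement to the single identity $\leq_{(z,o)} = \leq_{(z,b)}$ for each $z \in \{o,b\}$; since $\{o,b\}$ is the full range of $x$ and $y$, this identity is all that is needed (concretely, it says weak similarity coincides with delay similarity, and $\eta$-similarity with branching similarity). One inclusion is immediate: every $(z,b)$-generic simulation is already a $(z,o)$-generic simulation, because $t_2 \gentaustar{b,R,s'} t'$ entails $t_2 \gentaustar{o,R,s'} t'$ (the latter only demands $t_2 \taustar t'$), so the same witnessing relation proves $s \leq_{(z,o)} t$ whenever $s \leq_{(z,b)} t$.

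For the converse, $\leq_{(z,o)} \subseteq \leq_{(z,b)}$, I would start from a $(z,o)$-generic simulation $R$ with $s \rel{R} t$ and pass to the relation
\[
R' \;=\; \{\, (p,q) \mid \exists q'.\ q \taustar q' \text{ and } p \rel{R} q' \,\},
\]
which contains $(s,t)$ since $t \taustar t$, and then show that $R'$ is a $(z,b)$-generic simulation. Given $p \rel{R'} q$, pick a witness $q'$ with $q \taustar q'$ and $p \rel{R} q'$, and suppose $p \pijl{a} p'$. Applying the transfer condition of $R$ to the pair $p \rel{R} q'$ yields either $a = \tau$ and $p' \rel{R} q'$, in which case $p' \rel{R'} q$ and the first clause of the $(z,b)$-transfer condition holds; or there are states $q_1, q_2, q''$ with $q' \gentaustar{z,R,p} q_1 \pijl{a} q_2 \gentaustar{o,R,p'} q''$ and $p' \rel{R} q''$. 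In the second case I would offer $q_1, q_2, q_2$ as the $(z,b)$-witnesses: $q \taustar q_1$ holds by concatenating $q \taustar q'$ with $q' \taustar q_1$; the $b$-decoration on $q \gentaustar{z,R',p} q_1$, when $z=b$, holds because $p \rel{R'} q$ is given and $p \rel{R'} q_1$ follows from $p \rel{R} q_1$ (which is part of $q' \gentaustar{b,R,p} q_1$) together with $q_1 \taustar q_1$; the step $q_1 \pijl{a} q_2$ is as is; and $q_2 \gentaustar{b,R',p'} q_2$ holds trivially because $p' \rel{R'} q_2$, which in turn holds since $q_2 \taustar q''$ and $p' \rel{R} q''$. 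This same fact $p' \rel{R'} q_2$ discharges the final conjunct $s' \rel{R'} t'$.

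The step I expect to need the most care is the treatment of the pre-match segment $t \gentaustar{z,R,s} t_1$ when $z = b$: one must check that the $b$-decoration survives the passage to $R'$, i.e.\ that $p \rel{R'} q_1$. This is exactly where the hypothesis $z = b$ is used---it forces $p \rel{R} q_1$ in the decomposition supplied by $R$, after which membership in $R'$ is free by reflexivity of $\taustar$. Everything else is routine; in particular no symmetry obligation arises, since Definition~\ref{def:generic_simulation} works with plain relations, and the reward / B\"uchi apparatus of the games plays no part in this purely relational argument.
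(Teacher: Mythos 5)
Your proposal is correct and follows essentially the same route as the paper: the paper's proof also reduces everything to $\leq_{(z,o)} = \leq_{(z,b)}$ and establishes the nontrivial inclusion by passing from a $(z,o)$-generic simulation $R$ to the derived relation $R^{\taustarm} = \{(p,q) \mid \exists q'.\ q \taustar q' \wedge p \rel{R} q'\}$, which is exactly your $R'$, and then verifying the transfer condition with the same choice of witnesses (taking the post-match segment to be empty so that the $b$-decoration holds vacuously at the endpoints). Your write-up is, if anything, slightly more explicit than the paper's about the trivial inclusion and the $a=\tau$ case.
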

\begin{proof}
For each relation $R$ we define the derived relation $R^{\taustarm}$ by  
\[
 R^{\taustarm} = \{ (s,t) \mid \exists t' \colon t \taustar t' \wedge s \rel{R} t' \}.
 \]
Let us see that whenever $R$ is a $(z,o)$-generic simulation, the derived relation 
$R^{\taustarm}$ is too. Indeed, when we have
 $ s \rel{R^{\taustarm}} t$, we will have some $t'$ with both
   $ t \taustar t' $ and $s \rel{R} t'$. Now, if 
   $s \pijl{a} s'$, then there exist states $t'', t_1',t_2'$ such that 
   $t' \gentaustar{z,R,s} t_1' \pijl{a} t_2' \gentaustar{o,R,s'} t''$ and $s' \rel{R} t''$.
   We also have $t' \gentaustar{z,R^{\taustarm},s} t_1' \pijl{a} t_2' $,
   since for each intermediate state $t'''$ in the computation  $ t \taustar t' $
   we always have   $ s \rel{R^{\taustarm}} t'''$. And from 
    $s' \rel{R} t''$  we can infer $ s' \rel{R^{\taustarm}} t_2'$. In this way we finally obtain
      $t' \gentaustar{z,R^{\taustarm},s} t_1' \pijl{a} t_2' \gentaustar{o,R^{\taustarm},s'} t_2'$ 
      with $s' \rel{R^{\taustarm}} t_2'$.
But since the last weak transition in the obtained weak computation is an empty transition, we
also have  $t' \gentaustar{z,R^{\taustarm},s} t_1' \pijl{a} t_2' \gentaustar{b,R^{\taustarm},s'} t_2'$ 
with $s' \rel{R^{\taustarm}} t_2'$,
thus proving that $R^{\taustarm}$ is also a $(z,b)$-generic simulation.
\end{proof}

A generic weak simulation game can be obtained from Definition~\ref{def:branching_bisimulation_game} by disallowing
\spoiler to choose her third option.

We remark that we could use this game characterisation to provide an alternative proof
for Proposition~\ref{reduced_set_generic_simulations} along the following line of
reasoning. Observe that \duplicator
can avoid the use of move~\eqref{gbbged_match_and_postpone} even when
playing an $E$-simulation game with $\smiley \in E$. The moves that she plays when
playing those games are always valid for the $E'$-game, where $E' = E \setminus \{ \smiley \} $, hence
proving the desired game equivalence. 

Note that when playing move~\eqref{gbbg_match_and_move} 
instead of move~\eqref{gbbg_match_and_postpone}, \duplicator allows the next challenge from
\spoiler to correspond to the reached pair of states $(u',v')$, instead of some pair $(u',v'')$,
with  $ v' \taustar v'' $, that she could propose after playing the sequence of 
\eqref{gbbg_push_pebble_and_wait} moves corresponding to that weak computation.
In the case of the simulation game \spoiler can, however, not take any advantage from a
new challenge starting from  $(u',v')$ instead of $(u',v'')$, since her move always uses a
transition $u' \pijl{a} u''$, and then \duplicator can start her reply playing the moves corresponding
to the weak computation  $ v' \taustar v'' $ again.\footnote{In the bisimulation game, of course,
\spoiler is much stronger, as she can pose a challenge using any transition  $ v' \pijl{a} v''' $, changing the side on which she plays.}

\begin{exa}
If we now reconsider the example we took from~\cite{korver_computing_1992} in Figure~\ref{fig:example_weak_vs_branching} on page~\pageref{fig:example_weak_vs_branching}, we note
that state $0$ is not branching simulated by state $5$, which can be proved
following the same arguments as used in that section. 
Instead, state $5$ is branching
simulated by state $0$, as the last can copy any move from the former, eventually arriving at states that are trivially equivalent.
\end{exa}

A game characterisation of the simulation equivalences can equally 
straightforwardly be obtained from our definitions, by only allowing \spoiler 
to choose her third option for her moves (in the bisimulation game) during the first round of the game, and disallowing this option in any subsequent rounds.
Of course, the corresponding simulation equivalence relation that one obtains 
in this way is coarser than the corresponding bisimulation: \spoiler has a
much bigger power if she can switch the board at any round.

Similarly, by combining the modification to obtain the \gbed-game (Definition~\ref{def:generalised game_with_divergences}) with the change needed to obtain the simulation game, we could obtain games for the corresponding simulation relations with explicit divergence and the corresponding simulation equivalence by restricting \spoiler's options.


\section {A Small Application}\label{sec:smallapplication}

The game characterisation we discussed in the preceding sections
first and foremost provide a useful operational understanding of
the four weak bisimulations.  In particular, our games allow for a nice
intuitive operational explanation of the \emph{inequivalence} of
states: \spoiler's winning strategy essentially guides one to
the offending states. 
Our operational explanation of the inequivalence of states can be
used complementary to a distinguishing formula, if available for the
equivalence used to compare the two states.\footnote{We
are not aware of distinguishing formulae for $\eta$-bisimulation
and delay-bisimulation, nor of any of their divergence-aware variants
or their simulation equivalence variants. Distinguishing formulae for
branching bisimulation can be found in~\cite{korver_computing_1992} and
distinguishing formulae for branching bisimulation with explicit divergence
are studied in~\cite{vanglabbeek_branching_2009}. For weak bisimulation,
distinguishing formulae are essentially obtained by a `weak' Hennessy-Milner
logic.} 

A distinguishing
formula is a modal formula that holds for exactly one of the two (inequivalent) transition
systems being compared and it provides a high-level explanation for their inequivalence.
By high-level, we mean that it will not
only hold for one of the two transition systems at hand, but by their nature, it
will also hold for every
transition system that is equivalent to the one for which it holds. Since
such transition systems are in general structurally quite different, the
formula can therefore only explain the difference between the two given transition
systems in terms of their capabilities and not in terms of their structure.

In contrast, \spoiler's winning strategy is specific for the two given
transition systems and therefore provides a much more fine-grained explanation as to the
inequivalence of the two transition systems than a distinguishing formula.
Such a structure-based operational explanation
of inequivalence can be more accessible than a
distinguishing formula, pointing at concrete states that cause the inequivalence. 
We will forego a precise comparison between the game-theoretical and 
the various logical approaches, leaving a comparison and deeper
connections between the two approaches to future work.
Instead, we will illustrate some of the differences between the two techniques 
through two examples.

As a first example, reconsider the LTSs of Figure~\ref{fig:example_weak_vs_branching}
on page~\pageref{fig:example_weak_vs_branching}, taken
from~\cite{korver_computing_1992}.  The
distinguishing formula given in~\cite{korver_computing_1992} for
illustrating that states $0$ and $5$ 
are not branching bisimilar is
$\neg ( (\textit{tt}\langle b \rangle \textit{tt}) \langle a \rangle
\textit{tt})$. 
This formula explains the inequivalence between states $0$ and $5$ by asserting that
state $0$ (for which the formula fails)
may `engage in an $a$-step, while in all intermediate states (state
$0$ in this case) a $b$-step is available'~\cite{korver_computing_1992},
whereas this is not true of state $5$ (for which the formula holds).\footnote{Informally, a formula
of the form $\phi \langle a \rangle \psi$ states that there is a
finite $\tau$-path on which $\phi$ holds until action $a$, after which
$\psi$ holds. By nesting these \emph{until}-operators one can state more complex 
properties: $\textit{tt} \langle b \rangle \textit{tt}$ means that a $b$-action
can be `weakly' executed, and using this formula as the first argument in
$( (\textit{tt}\langle b \rangle \textit{tt}) \langle a \rangle
\textit{tt})$, we state that $a$ can be fired as first visible
action, possibly after an internal computation that only
passes intermediate states where $b$ can be weakly executed.} 

Of course, conceptually, the
argument is the same as the strategy we gave on page~\pageref{discussion},
but it is given in a syntax and semantics that is quite remote from
the concept of a transition system. 
Whereas the formula explains the inequivalence in terms of high-level concepts such as 
`behavioural potentials', \ie the ability to execute certain behaviour,
the game-theoretical
approach works at the level of the excution of concrete states and transitions.
\medskip

Next we give another example that illustrates how the \bbed-game (and, of course,
the \gbed-game) can be put to use
to explain inequivalence of states in a more involved setting.
Consider the LTS below, which depicts a simple specification of a
one-place buffer for exchanging two types of message ($d_1$ and
$d_2$).

\begin{center}
\begin{tikzpicture}
  [node distance=50,inner sep = 1pt,minimum size=10pt,initial text={}]

   \tikzstyle{state}=[circle, draw=none]
   \tikzstyle{initstate}=[state,fill=black!20]
   \tikzstyle{transition}=[->,>=stealth']

    \node [initial above, initstate] (s0) {\tiny A};
    \node[state] [left of=s0] (s1) {\tiny B};
    \node[state] [right of=s0] (s2) {\tiny C};
    \draw [->] (s0) edge[bend left] node[below] {\scriptsize $r(d_1$)} (s1);
    \draw [->] (s1) edge[bend left] node[above] {\scriptsize $s(d_1$)} (s0);
    \draw [->] (s0) edge[bend right] node[below] {\scriptsize $r(d_2$)} (s2);
    \draw [->] (s2) edge[bend right] node[above] {\scriptsize $s(d_2$)} (s0);
\end{tikzpicture}
\end{center}
Suppose one tries to implement this one-place buffer using the
Alternating Bit Protocol (see Figure~\ref{fig:ABP}), only to find
out that states $A$ and $0$ are not branching bisimilar with explicit
divergence. Note that states $A$ and $0$ \emph{are} branching
bisimilar, so the difference must be in the lack of divergence in
the specification. Indeed, a distinguishing formula states just that:
formula $\textit{tt}\ \langle r(d_1) \rangle\ (\Delta \textit{tt})$,\footnote{The unary
divergence predicate $\Delta\ \phi$ of~\cite{vanglabbeek_branching_2009} holds in
a state iff it admits
an infinite $\tau$-path on which $\phi$ holds.} holds in state
$0$ but not in state $A$. The formula concisely explains  the
inequivalence, but it does not point at a concrete cause for the divergence
in the involved transition systems and which states are involved. 
In this case, \spoiler's winning strategy can
be used to `play' against the designer of the implementation in a
way similar to that of~\cite{stevens_practical_1998}, allowing the
designer to better understand the reason why this implementation
is not satisfactory. By solving the (automatically generated)
\bbed-game we obtain a winning strategy for player \spoiler, which can
be used in an interactive setting as follows:\medskip

\noindent
{\small
\begin{verbatim}
Spoiler moves A --r(d1)--> B
You respond with 0 --r(d1)--> 1
Spoiler switches positions and moves 1 --tau--> 3
You respond by not moving
...
Spoiler moves 19 --tau--> 1
You respond by not moving
You explored all options. You lose.
\end{verbatim}
}
In a similar vein, one can check also that states $B$ and $9$ are not
branching bisimilar with explicit divergence. 

\begin{figure}[hbtp]
\centering
\begin{tikzpicture}[node distance=40,inner sep = 1pt,minimum size=10pt,initial text={}]

   \tikzstyle{state}=[circle, draw=none]
   \tikzstyle{initstate}=[state,fill=black!20]
   \tikzstyle{transition}=[->,>=stealth']
\node at (-50pt, 5pt) [initstate,initial left] (state0) {\tiny 0};
\node[state] (state1) [above right of=state0] {\tiny 1}; 
\node[state] (state3) [right of=state1] {\tiny 3}; 
\node[state] (state5) [right of=state3] {\tiny 5}; 
\node[state] (state9) [right of=state5] {\tiny 9}; 
\node[state] (state13) [right of=state9] {\tiny 13}; 
\node[state] (state17) [right of=state13] {\tiny 17}; 

\node[state] (state24) [above right of=state17] {\tiny 24}; 
\node[state] (state28) [above left of=state24] {\tiny 28}; 
\node[state] (state32) [left of=state28] {\tiny 32}; 
\node[state] (state36) [left of=state32] {\tiny 36}; 
\node[state] (state37) [below right of=state32] {\tiny 37}; 
\node[state] (state44) [below right of=state36] {\tiny 44};

\node[state] (state23) [right of=state17] {\tiny 23}; 
\node[state] (state6) [above of=state3] {\tiny 6}; 
\node[state] (state10) [left of=state6,xshift=15pt] {\tiny 10}; 
\node[state] (state14) [left of=state10,xshift=15pt] {\tiny 14}; 
\node[state] (state18) [right of=state14,xshift=-15pt,yshift=-15pt] {\tiny 18}; 
\node[state] (state19) [below of=state14,yshift=15pt] {\tiny 19}; 

\node[state] (state2) [below right of=state0] {\tiny 2}; 
\node[state] (state4) [right of=state2] {\tiny 4}; 
\node[state] (state7) [right of=state4] {\tiny 7}; 
\node[state] (state11) [right of=state7] {\tiny 11}; 
\node[state] (state15) [right of=state11] {\tiny 15}; 
\node[state] (state20) [right of=state15] {\tiny 20}; 
\node[state] (state26) [below right of=state20] {\tiny 26}; 
\node[state] (state29) [below left of=state26] {\tiny 29}; 
\node[state] (state33) [left of=state29] {\tiny 33}; 
\node[state] (state38) [left of=state33] {\tiny 38}; 
\node[state] (state39) [above right of=state33] {\tiny 39}; 
\node[state] (state45) [above right of=state38] {\tiny 45};

\node[state] (state25) [right of=state20] {\tiny 25}; 
\node[state] (state8) [below of=state4] {\tiny 8}; 
\node[state] (state12) [left of=state8,xshift=15pt] {\tiny 12}; 
\node[state] (state16) [left of=state12,xshift=15pt] {\tiny 16}; 
\node[state] (state21) [right of=state16,xshift=-15pt,yshift=15pt] {\tiny 21}; 
\node[state] (state22) [above of=state16,yshift=-15pt] {\tiny 22}; 

\node[state] (state60) [above of=state0,yshift=25pt] {\tiny 60};
\node[state] (state54) [above right of=state60] {\tiny 54};
\node[state] (state50) [right of=state54] {\tiny 50}; 
\node[state] (state46) [right of=state50,xshift=60pt] {\tiny 46}; 
\node[state] (state40) [right of=state46] {\tiny 40}; 
\node[state] (state34) [right of=state40] {\tiny 34}; 

\node[state] (state41) [above left of=state34] {\tiny 41}; 
\node[state] (state47) [above right of=state41] {\tiny 47}; 
\node[state] (state51) [right of=state47] {\tiny 51}; 
\node[state] (state55) [below left of=state51] {\tiny 55}; 
\node[state] (state56) [below right of=state51] {\tiny 56};

\node[state] (state30) [right of=state34] {\tiny 30}; 
\node[state] (state27) at (state0-|state30) {\tiny 27}; 
\node[state] (state61) [above right of=state54] {\tiny 61}; 
\node[state] (state64) [above left of=state61] {\tiny 64}; 
\node[state] (state66) [left of=state64] {\tiny 66}; 
\node[state] (state68) [below right of=state66] {\tiny 68}; 
\node[state] (state69) [below left of=state66] {\tiny 69}; 
\node[state] (state72) [below right of=state69] {\tiny 72}; 

\node[state] (state62) [below of=state0,yshift=-25pt] {\tiny 62};
\node[state] (state57) [below right of=state62] {\tiny 57}; 
\node[state] (state52) [right of=state57] {\tiny 52}; 
\node[state] (state63) [below right of=state57] {\tiny 63}; 
\node[state] (state65) [below left of=state63] {\tiny 65}; 
\node[state] (state67) [left of=state65] {\tiny 67}; 
\node[state] (state70) [above left of=state67] {\tiny 70}; 
\node[state] (state71) [above right of=state67] {\tiny 71}; 
\node[state] (state73) [above right of=state70] {\tiny 73}; 

\node[state] (state48) [right of=state52,xshift=60pt] {\tiny 48}; 
\node[state] (state42) [right of=state48] {\tiny 42}; 
\node[state] (state35) [right of=state42] {\tiny 35}; 
\node[state] (state43) [below left of=state35] {\tiny 43}; 
\node[state] (state49) [below right of=state43] {\tiny 49}; 
\node[state] (state53) [right of=state49] {\tiny 53}; 
\node[state] (state58) [above left of=state53] {\tiny 58}; 
\node[state] (state59) [above right of=state53] {\tiny 59};

\node[state] (state31) [right of=state35] {\tiny 31}; 
\draw [->] (state0) edge node[right,xshift=1pt,yshift=-1pt] {\scriptsize $r(d_1$)}  (state1);
\draw [->] (state0) edge node[right,xshift=1pt,yshift=1pt] {\scriptsize $r(d_2$)}  (state2);
\draw [->] (state1) edge (state3);
\draw [->] (state2) edge (state4);
\draw [->] (state3) edge (state5);
\draw [->] (state3) edge (state6);
\draw [->] (state4) edge (state7);
\draw [->] (state4) edge (state8);
\draw [->] (state5) edge (state9);
\draw [->] (state6) edge (state10);
\draw [->] (state7) edge (state11);
\draw [->] (state8) edge (state12);
\draw [->] (state9) edge node[above] {\scriptsize $s(d_1$)} (state13);
\draw [->] (state10) edge (state14);
\draw [->] (state11) edge node[below] {\scriptsize $s(d_2$)}  (state15);
\draw [->] (state12) edge (state16);
\draw [->] (state13) edge (state17);
\draw [->] (state14) edge (state18);
\draw [->] (state14) edge (state19);

\draw [->] (state15) edge (state20);
\draw [->] (state16) edge (state21);
\draw [->] (state16) edge (state22);
\draw [->] (state17) edge (state23);
\draw [->] (state17) edge (state24);
\draw [->] (state18) edge (state1);
\draw [->] (state19) edge (state1);
\draw [->] (state20) edge (state25);
\draw [->] (state20) edge (state26);
\draw [->] (state21) edge (state2);
\draw [->] (state22) edge (state2);
\draw [->] (state23) edge (state27);
\draw [->] (state24) edge (state28);
\draw [->] (state25) edge (state27);
\draw [->] (state26) edge (state29);
\draw [->] (state27) edge node[right] {\scriptsize $r(d_1$)} (state30);
\draw [->] (state27) edge node[right] {\scriptsize $r(d_2$)} (state31);
\draw [->] (state28) edge (state32);
\draw [->] (state29) edge (state33);
\draw [->] (state30) edge (state34);
\draw [->] (state31) edge (state35);
\draw [->] (state32) edge (state36);
\draw [->] (state32) edge (state37);
\draw [->] (state33) edge (state38);
\draw [->] (state33) edge (state39);
\draw [->] (state34) edge (state40);
\draw [->] (state34) edge (state41);

\draw [->] (state35) edge (state42);
\draw [->] (state35) edge (state43);
\draw [->] (state36) edge (state44);
\draw [->] (state37) edge (state44);
\draw [->] (state38) edge (state45);
\draw [->] (state39) edge (state45);
\draw[->] (state40) edge (state46);
\draw [->] (state41) edge (state47);
\draw [->] (state42) edge (state48);
\draw [->] (state43) edge (state49);
\draw [->] (state44) edge (state17);
\draw [->] (state45) edge (state20);
\draw[->] (state46) edge node[above] {\scriptsize $s(d_1$)}  (state50);
\draw [->] (state47) edge (state51);
\draw [->] (state48) edge node[below] {\scriptsize $s(d_2$)} (state52);
\draw [->] (state49) edge (state53);
\draw [->] (state50) edge (state54);
\draw [->] (state51) edge (state55);
\draw [->] (state51) edge (state56);
\draw [->] (state52) edge (state57);
\draw [->] (state53) edge (state58);
\draw [->] (state53) edge (state59);
\draw [->] (state54) edge (state60);
\draw [->] (state54) edge (state61);
\draw [->] (state55) edge (state30);
\draw [->] (state56) edge (state30);
\draw [->] (state57) edge (state62);

\draw [->] (state57) edge (state63);
\draw [->] (state58) edge (state31);
\draw [->] (state59) edge (state31);
\draw [->] (state60) edge (state0);
\draw [->] (state61) edge (state64);
\draw [->] (state62) edge (state0);
\draw [->] (state63) edge (state65);
\draw [->] (state64) edge (state66);
\draw [->] (state65) edge (state67);
\draw [->] (state66) edge (state68);
\draw [->] (state66) edge (state69);
\draw [->] (state67) edge (state70);
\draw [->] (state67) edge (state71);
\draw [->] (state68) edge (state72);
\draw [->] (state69) edge (state72);
\draw [->] (state70) edge (state73);
\draw [->] (state71) edge (state73);
\draw [->] (state72) edge (state54);
\draw [->] (state73) edge (state57);

\end{tikzpicture}
\caption{The ABP with two messages; unlabelled transitions
are $\tau$ transitions.}
\label{fig:ABP}
\end{figure}
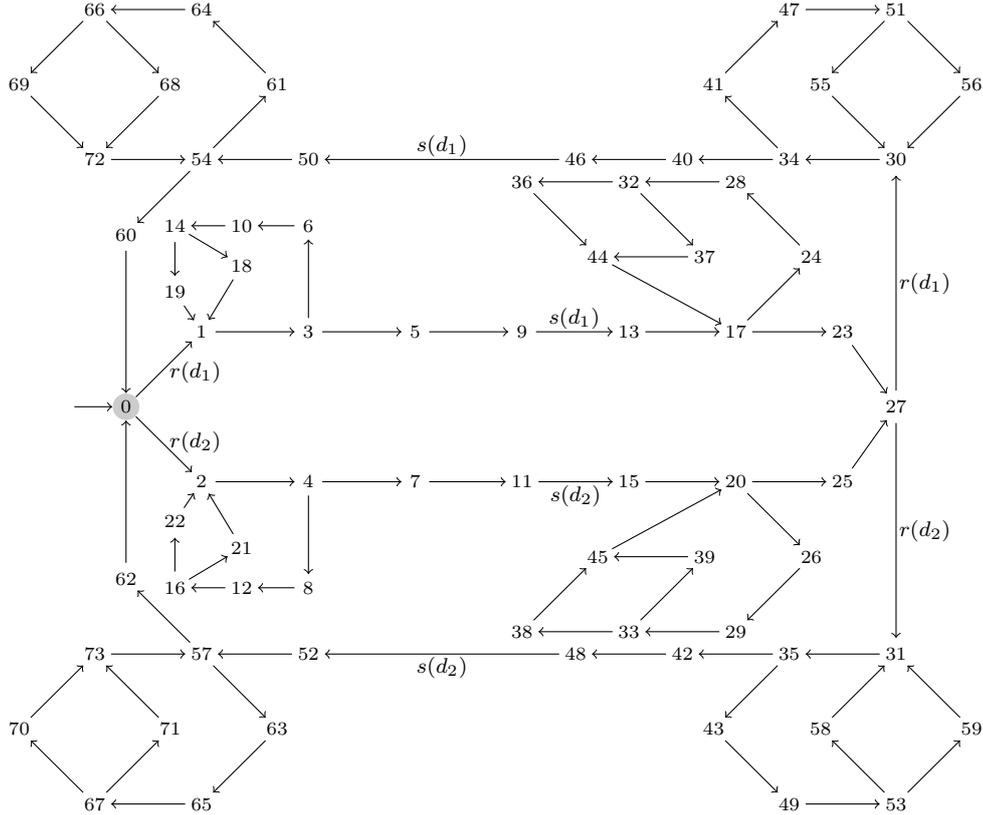


\section{Closing Remarks}\label{sec:conclusions}
In this paper we introduced a generic, game-theoretic definition of weak bisimulation relations. In particular, we showed how our games can be instantiated to obtain games that characterise branching-, $\eta$-, delay- and weak bisimilarity. We also illustrated that our generic bisimulation games generalise the branching bisimulation games we presented in \cite{branching_games_2016}.

The definition that we have presented does not require any transitive closure of $\tau$-transitions in the game definition, so that a `local' assessment is obtained when two states are found to be inequivalent. Any sequence of $\tau$-transitions is handled explicitly in the game, in a step-by-step fashion. This enables a more straightforward use of our generic bisimulation games for debugging whenever states are not equivalent.

We generalised our generic bisimulation games to deal with divergence as a first-class citizen: no precomputation of divergences, and subsequent modification of the game, is needed. 
In particular, we showed that our game, when suitably instantiated, characterises branching bisimulation with explicit divergences, just as it was the case for the game we previously presented in~\cite{branching_games_2016}. 

Additionally, we illustrated the (simple) generalisation of our games to cover simulation relations.

\subsection*{Future work}
We have experimented with a prototype of the game-theoretic definitions of branching bisimulation (also with explicit divergence), both using an interactive, command-line application, as well as with a graphical user interface. We intend to make a proper implementation available in the mCRL2 toolset \cite{cranen_overview_2013}.

In this paper, we have used condition D$_4$ from \cite{vanglabbeek_branching_2009} to obtain an extension sensitive to divergence. As we illustrated in Section~\ref{sec:diverging_discussion}, 
using any of their other conditions is non-trivial;
indeed, it is far from immediate that the use of any of their other conditions would produce equivalent relations. 
These conditions should be explored further to answer the questions raised in this paper.
Additionally, the notion of divergence considered in this paper is not the only notion of divergence considered in the literature. One could, for example, drop the requirement that vertices on divergent paths are related~\cite{bergstra_failures_1987}. 
Game-based characterisations for this and other notions such as 
\emph{divergence sensitive branching bisimulation}~\cite{denicola_three_1995}, but also notions such as
the \emph{next-preserving branching bisimulations} of~\cite{Yatapanage15} can be studied. Finally, it would be interesting to extend the approach of~\cite{CKW:17} and investigate whether the games we have studied in this paper can be applied to B\"uchi automata or parity games, \ie models other than labelled transition systems. 

\paragraph{\emph{Acknowledgements and Dedicatory.}} The authors would like to thank the reviewers for
their constructive feedback and their insights which enabled us to
improve the paper considerably. The first author wants to dedicate this paper to the memory of his mother Virginia, that passed away during its elaboration.


\bibliographystyle{plain}

\end{document}